\newcommand{\prob}[1]{{\sf Pr}\ensuremath{\left[ #1 \right]}}
\newcommand{\probs}[2]{{\sf Pr}\ensuremath{#1}\ensuremath{\left[ #2 \right]}}
\newtheorem{theorem}{Theorem}[section]
\newtheorem{lemma}[theorem]{Lemma}
\newtheorem{definition}[theorem]{Definition}
\def\qed{ \mbox{\ \vrule width1ex height1em depth0cm}
\global\advance\proofqeded by 1 }
\def\eps{\epsilon}
\def\poly{\mathrm{poly}}
\def\polylog{\mathrm{polylog}}
\def\vbl{\mathrm{vbl}}
\def\P{\mathcal{P}}
\def\A{\mathcal{A}}
\def\ds{\displaystyle}
\def\Root{\Root}
\newcommand{\floor}[1]{\lfloor #1 \rfloor}
\newcommand{\comment}[1]{\bigskip \textcolor{red}{Comment: #1} \bigskip}
\newtheorem*{theorem-nonumber}{Theorem}
\newtheorem*{lemma-nonumber}{Lemma}
\begin{document}

\setcounter{page}{0}

\title{New Constructive Aspects of the Lov\'{a}sz Local Lemma\footnote{A preliminary version of this paper appeared in 51st Annual IEEE Symposium on Foundations of Computer Science (FOCS), 2010, Las Vegas.
}}

\author{Bernhard Haeupler\thanks{haeupler@mit.edu; CSAIL, Dept.\ of Computer
Science, Massachusetts Institute of Technology, Cambridge, MA 02139.
Part of this work was done while visiting the University of Maryland.}
\and Barna Saha\thanks{barna@cs.umd.edu; Dept.\ of Computer Science, University of Maryland, College Park, MD 20742
Supported in part by NSF Award CCF-0728839
and NSF Award CCF-0937865 and a Google Research Award} \and
Aravind Srinivasan\thanks{srin@cs.umd.edu; Dept.\ of Computer Science and Institute for
Advanced Computer Studies,
University of Maryland, College Park, MD 20742.
Supported in part by NSF ITR Award CNS-0426683, NSF Award CNS-0626636,
and NSF Award CNS 1010789.}
}

\date{\today}

\maketitle
\thispagestyle{empty}

\begin{abstract}
The Lov\'{a}sz Local Lemma (LLL) is a powerful tool that gives sufficient conditions for avoiding all of a given set of ``bad'' events, with positive probability. A series of results have provided algorithms to efficiently construct structures whose existence is non-constructively guaranteed by the LLL, culminating in the recent breakthrough of Moser \& Tardos for the full asymmetric LLL. We show that the output distribution of the Moser-Tardos algorithm well-approximates the \emph{conditional LLL-distribution} -- the distribution obtained by conditioning on all bad events being avoided. We show how a known bound on the probabilities of events in this distribution can be used for further probabilistic analysis and give new constructive and non-constructive results.

We also show that when a LLL application provides a small amount of slack, the number of resamplings of the Moser-Tardos algorithm is nearly linear in the number of underlying independent variables (not events!), and can thus be used to give efficient constructions in cases where the underlying proof applies the LLL to super-polynomially many events. Even in cases where finding a bad event that holds is computationally hard, we show that applying the algorithm to avoid a polynomial-sized ``core'' subset of bad events leads to a desired outcome with high probability. This is shown via a simple union bound over the probabilities of non-core events in the conditional LLL-distribution, and automatically leads to simple and efficient Monte-Carlo (and in most cases $RNC$) algorithms.

We demonstrate this idea on several applications. We give the first constant-factor approximation algorithm for the Santa Claus problem by making a LLL-based proof of Feige constructive. We provide Monte Carlo algorithms for acyclic edge coloring, non-repetitive graph colorings, and Ramsey-type graphs. In all these applications, the algorithm falls directly out of the non-constructive LLL-based proof. Our algorithms are very simple, often provide better bounds than previous algorithms, and are in several cases the first efficient algorithms known.

As a second type of application we show that the properties of the conditional LLL-distribution can be used in cases beyond the critical dependency threshold of the LLL: avoiding all bad events is impossible in these cases. As the first (even non-constructive) result of this kind, we show that by sampling a selected smaller core from the LLL-distribution, we can avoid a fraction of bad events that is higher than the expectation. MAX $k$-SAT is an illustrative example of this.
\end{abstract}

\newpage

\section{Introduction}
\label{sec:intro}
The well-known Lov\'{a}sz Local Lemma (LLL) \cite{erdos-lovasz:lll} is a powerful probabilistic approach to prove the existence of certain combinatorial structures. Its diverse range of applications include breakthroughs in packet-routing \cite{lmr}, a variety of theorems in graph-coloring including list coloring, frugal coloring, total coloring, and coloring graphs with lower-bounded girth \cite{molloy-reed:book}, as well as a host of other applications where probability appears at first sight to have no role \cite{alon-spencer}. Furthermore, almost all known applications of the LLL have no alternative proofs known. While the original LLL was non-constructive -- it was unclear how the existence proofs could be turned into polynomial-time algorithms -- a series of works \cite{beck:lll, alon:lll-journal, czumaj:lll, molloy:lll, molloy-reed:book, srin:lll, moser:derand, moser:lll, moser-tardos:lll} beginning with Beck \cite{beck:lll} and culminating with the breakthrough of Moser \& Tardos (MT) \cite{moser-tardos:lll} have led to efficient algorithmic versions for most such proofs. However, there are several LLL applications to which these approaches inherently cannot apply; our work makes progress toward bridging this gap, by uncovering and exploiting new properties of \cite{moser-tardos:lll}. We also obtain what are, to our knowledge, the first algorithmic applications of the LLL where a few of the bad events have to happen, and
where we aim to keep the number of these small.

We will use standard notation: $e$ denotes the base of the natural logarithm,
and $\ln$ and $\log$ denote the logarithm to the base $e$ and $2$,
respectively.

Essentially all known applications of the LLL use the following framework. Let $\mathcal{P}$ be a collection of $n$ mutually independent random variables $\{P_1, P_2, \ldots,P_n\}$, and let $\A=\{A_1, A_2, \ldots, A_m\}$ be a collection of $m$ (``bad'') events, each determined by some subset of $\mathcal{P}$. The LLL (Theorem~\ref{theorem:lll}) shows sufficient conditions under which, with positive probability, none of the events $A_i$ holds: i.e., that there is a choice of values for the variables in $\mathcal{P}$ (corresponding to a discrete structure such a suitable coloring of a given graph) that avoids all the $A_i$. Under these same sufficient conditions, MT shows the following very simple algorithm to make such a choice: (i) initially choose the $P_i$ independently from their given distributions; (ii) \textit{while} the current assignment to $\mathcal{P}$ does not avoid all the $A_i$, \textit{repeat}: arbitrarily choose a currently-true $A_i$, and resample, from their product distribution, the variables in $\mathcal{P}$ on which $A_i$ depends. The amazing aspect of MT is that the expected number of resamplings is small \cite{moser-tardos:lll}: at most $\poly(n,m)$ in all known cases of interest. However, there are two problems with implementing MT, that come up in some applications of the LLL:
\begin{description}
\item[(a)] the number of events $m$ can be superpolynomial in the number of variables $n$; this can result in a superpolynomial running time in the ``natural'' parameter $n$ \footnote{$n$ is the parameter of interest since the output we seek is one value for each of $P_1, P_2, \ldots, P_n$.}; and, even more seriously,
\item[(b)] given an assignment to $\mathcal{P}$, it can be computationally hard (e.g., NP-hard or yet-unknown to be in polynomial time) to either certify that no $A_i$ holds, or to output an index $i$ such that $A_i$ holds.
\end{description}

Since detection and resampling of a currently-bad event is the seemingly unavoidable basic step in the MT algorithm, these applications seemed far out of reach.
We deal with a variety of applications wherein (a) and/or (b) hold, and develop Monte Carlo (and in many cases, $RNC$) algorithms whose running time is polynomial in $n$: some of these applications involve a small loss in the quality of the solution. (We loosely let ``$RNC$ algorithms'' denote randomized parallel algorithms that use $\poly(n)$ processors and run in $\polylog(n)$ time, to output a correct solution with high probability.) First we show that the MT algorithm needs only $O(n^2 \log n)$ many resampling steps in all applications that are known (and in most cases $O(n \cdot \polylog(n))$), even when $m$ is superpolynomial in $n$. This makes those applications constructive that allow an \emph{efficient} implicit representation of the bad events (in very rough analogy with the usage of the ellipsoid algorithm for convex programs with exponentially many constraints but with good separation oracles). Still, most of our applications have problem (b). For these cases, we introduce a new proof-concept based on the \emph{(conditional) LLL-distribution} -- the distribution $D$ on $\mathcal{P}$ that one obtains when conditioning on no $A_i$ happening. Some very useful properties are known for $D$ \cite{alon-spencer}: informally, if $B$ depends ``not too heavily'' on the events in $\A$, then the probability placed on $B$ by $D$ is ``not much more than'' the unconditional probability $\prob{B}$: at most $f_{\mathcal{A}}(B) \cdot \prob{B}$ (see (\ref{eqn:cond-lll})). Such bounds in combination with further probabilistic analysis can be used to give interesting (nonconstructive) results. Our next main contribution is that the MT algorithm has an output distribution (say $D'$) that ``approximates'' the LLL-distribution $D$: in that for every $B$, the \emph{same} upper bound $f_{\mathcal{A}}(B) \cdot \prob{B}$ as above, holds in $D'$ as well. This can be used to make probabilistic proofs that use the LLL-condition constructive.

Problem (b), in all cases known to us, comes from problem (a): it is easy
to test if any \emph{given} $A_i$ holds currently (e.g., if a given subset
of vertices in a graph is a clique), with the superpolynomiality of $m$
being the apparent bottleneck. To circumvent this, we develop our third
main contribution: the very general Theorem~\ref{thm:polycoremain}
that is simple and directly applicable in all LLL instances that allow a small slack
in the LLL's sufficient conditions. This theorem proves that a small
$\poly(n)$-sized core-subset of the events in $\A$ can be selected and
avoided efficiently using the MT algorithm. Using the LLL-distribution
and a simple union bound over the non-core events,
we get efficient (Monte Carlo and/or $RNC$) algorithms for these problems.

We develop two types of applications, as sketched next.

\subsection{Applications that avoid all bad events}
\label{sec:app-all-avoided}
A summary of four applications follows; all of these have problem (a), and all but the acyclic-coloring application have problem (b). Most such results have $RNC$ versions as well.

\smallskip \noindent
\emph{The Santa Claus Problem:}
The Santa Claus problem is the restricted assignment version of the max-min allocation problem of indivisible goods. The Santa Claus has $n$ items that need to be distributed among $m$ children. Each child has a utility for each item, which is either $0$ or some given $p_{j}$ for item $j$. The objective is to assign each item to some child, so that the minimum total utility received by any child is maximized. This problem has received much attention recently \cite{bansal:stoc06,asadpour:stoc07,feige:soda08,asadpour-feige-saberi,bateni:stoc09,julia:focs09}. The problem is NP-Hard and the best-known approximation algorithm due to Bansal and Sviridenko \cite{bansal:stoc06} achieves an approximation factor of $O(\frac{\log{\log{m}}}{\log{\log{\log{m}}}})$ by rounding a certain configuration LP. Later, Feige in \cite{feige:soda08} and subsequently Asadpour, Feige and Saberi in \cite{asadpour-feige-saberi} showed that the integrality gap of the configuration LP is a constant. Surprisingly, both results were obtained using two different non-constructive approaches and left the question for a constant-factor approximation algorithm open. This made the Santa Claus problem to one of the rare instances \cite{Feige08-estimationvsapproximation} in which the proof of an integrality gap did not result in an approximation algorithm with the same ratio. In this paper we resolve this by making the non-constructive LLL-based proof of Feige \cite{feige:soda08} constructive (Section \ref{sec:santa}) and giving the first constant-factor approximation algorithm for the Santa Claus problem.

\smallskip \noindent
\emph{Non-repetitive Coloring of Graphs:}
Given a  graph $H=(V,E)$, a $k$-coloring (not necessarily proper) of the edges of $H$ is called \emph{non-repetitive} if the sequence of colors along any simple path is not the same in the first and the second half. The smallest $k$ such that $H$ has a non-repetitive $k$-coloring is called the {\em Thue number} $\pi(H)$ of $H$ \cite{thue:1906}. Alon, Grytczuk, Hauszczak and Riordan showed via the LLL that $\pi(H)\leq O(\Delta(H)^{2})$ \cite{alon:random02}, where $\Delta$ is the maximum degree of any vertex in $H$. This was followed by much additional works \cite{james:thr05, marcus:sigact02, jaroslaw:discrete, kundgen:08, bresar:07, alonG:08}. However, no efficient construction is known till date, except for special classes of graphs such as complete graphs, cycles and trees. We present a randomized algorithm for non-repetitive coloring of $H$ using at most $O(\Delta(H)^{2+\epsilon})$ colors, for every constant $\epsilon > 0$ (Section \ref{sec:non-rep}).

\smallskip \noindent
\emph{General Ramsey-Type Graphs:}
The Ramsey number $R(U_s,V_t)$ refers to the
smallest $n$ such that any graph on $n$ vertices either contains a
 $U_{s}$ within any subgraph of $s$ vertices, or there exist $t$ vertices
 that do not contain $V_{t}$. Obtaining lower bounds for various special
 cases of $R(U_s,V_t)$ and constructing Ramsey type graphs have been studied in
 much detail \cite{alon94,alon99,kriv2,alonkriv}.
A predominant case for such problems is when $s$ is held fixed.
 We consider the general setting of $R(U_s,V_t)$ with fixed $s$, and provide
efficient randomized algorithms for constructing Ramsey-type graphs (Section \ref{sec:ramsey}).

\smallskip \noindent
\emph{Acyclic Edge-Coloring:}
A proper edge-coloring of a graph is \emph{acyclic} iff each cycle in it receives more than $2$ colors. The acyclic chromatic number $a(G)$ introduced in \cite{baum:acyclic} is the minimum number of colors in a proper acyclic edge coloring of $G$ \cite{alon:acyclic,molloy:lll,alon:acyclic1,baum:acyclic,muthua:acyclic}. Alon, McDiarmid and Reed \cite{alon:acyclic} showed that $a(G) < 64\Delta$, where $\Delta$ is the maximum degree. The constant was later improved to $16$ by Molloy and Reed \cite{molloy:lll}, who also mention an algorithmic version using $20\Delta$ colors.
However it was conjectured that $a(G)=\Delta +2$; Alon, Sudakov and Zaks showed indeed the conjecture is true for graphs having girth $\Omega(\Delta \log{\Delta})$ \cite{alon:acyclic1}. Their algorithm can be made constructive using Beck's technique \cite{beck:lll} to obtain an acyclic edge coloring using $\Delta +2$ colors, albeit for graphs with girth significantly larger than $\Theta(\Delta \log{\Delta})$ \cite{alon:acyclic1}. We bridge this gap by providing constructions to achieve the same girth bound as in \cite{alon:acyclic1}, yet obtaining an acyclic edge coloring with only $\Delta+2$ colors. For graphs with no girth bound, $16\Delta$ colors suffice to efficiently construct an acyclic edge coloring in contrast to the $20\Delta$ algorithmic bound of \cite{molloy:lll} (Section \ref{sec:acyclic}).

The recent result of Matthew Andrews on approximating the edge-disjoint
paths problem on undirected graphs is another example, where problems
(a) and (b) occur and our LLL-techniques are applied to avoid
super-polynomially many bad events  \cite{andrews:focs10}.

\subsection{Applications that avoid many bad events}
\label{sec:app-many-avoided}
Many settings require ``almost all'' bad events to be avoided, and not
necessarily all; e.g., consider MAX-SAT as opposed to SAT. However, in the
LLL context, essentially the only known general applications were ``all or
nothing'': either the LLL's sufficient conditions hold, and we are able to
avoid all bad events, or the LLL's sufficient conditions are violated, and
the only known bound on the number of bad events is the trivial one given
by the linearity of expectation (which does not exploit any
``almost-independence'' of the bad events, as does the LLL). This situation is
even more pronounced in the algorithmic setting. We take what are, to our
knowledge, the first steps in this direction, interpolating between these
two extremes.

While our discussion here holds for all applications of the symmetric LLL,
let us take MAX-$k$-SAT as an illustrative example.
(The LLL is stated in Section~\ref{sec:algframework}, but let us
recall its well-known ``symmetric'' special case: in the setting of MT with
$\P$ and $A$ as defined near the beginning of Section \ref{sec:intro},
if $\prob{A_i} \leq p$ and $A_i$ depends on at most $d$ other $A_j$ for all
$i$, then $e \cdot p \cdot (d + 1) \leq 1$ suffices to avoid
all the $A_i$.)
Recall that in MAX-$k$-SAT, we have a CNF formula
on $n$ variables, with $m$ clauses each containing exactly $k$ literals; as
opposed to SAT, where we have to satisfy all clauses, we aim to maximize the
number of satisfied clauses here. The best general upper-bounds on the
number of ``violated events'' (unsatisfied clauses) follow from the
probabilistic method, where each variable is set to True or False uniformly
at random and independently. On the one hand, the linearity of expectation
yields that the expected number of unsatisfied clauses is $m \cdot 2^{-k}$
(with a derandomization using the method of conditional probabilities). On the
other hand, if each clause shares a variable with at most $2^k/e - 1$ other
clauses, a simple application of the symmetric LLL shows that all clauses
can be satisfied (and made constructive using MT). No interpolation
between these was known before; among other results, we show that if each
clause shares a variable with at most $\sim \alpha 2^k/e$ other clauses
for $1 < \alpha < e$, then
we can efficiently construct an assignment to the variables that violates
at most $(e \ln(\alpha) / \alpha + o(1)) \cdot m \cdot 2^{-k}$ clauses for
large $k$.
(This is better than the linearity of expectation iff $\alpha < e$: it is
easy to construct examples with $\alpha = e$ where one cannot do better
than the linearity of expectation. See \cite{agksy:max-sat} for the
fixed-parameter tractability of MAX-$k$-SAT above
$(1 - 2^{-k}) m$ satisfied clauses.)

The above and related results for applications of the symmetric LLL,
follow from the connection to the ``further probabilistic
analysis using the remaining randomness of LLL-distributions'' that we alluded
to above; see Section \ref{sec:beyond}.
We believe this connection to be the main conceptual message of
this paper, and expect further applications in the future.

\subsection{Preliminaries \& Algorithmic Framework}
\label{sec:algframework}
We follow the general algorithmic framework of the Local Lemma due to MT.
As in our description at the beginning of Section \ref{sec:intro}, let $\P$ be a finite collection of mutually independent
random variables $\{P_1, P_2, \ldots,P_n\}$ and let $\A=\{A_1, A_2, \ldots, A_m\}$ be a collection of events, each determined by some subset of $\P$. For any event $B$ that is determined by a subset of $\P$ we denote the smallest such subset by $\vbl(B)$.
For any event $B$ that is determined by the variables in $\P$, we furthermore write $\Gamma(B) = \Gamma_{\A}(B)$ for the set of all events $A \neq B$ in $\A$ with $\vbl(A) \cap \vbl(B) \neq \emptyset$. This neighborhood relation induces the following standard \emph{dependency graph} or \emph{variable-sharing graph} on $\A$: For the vertex set $\A$ let $G = G_{\A}$ be the undirected graph with an edge between events $A, B \in \A$ iff $A \in \Gamma(B)$. We often refer to events in $\A$ as \emph{bad events} and want to find a point in the probability space, or equivalently an assignment to the variables $\P$, wherein none of the bad events happen. We call such an assignment a \textbf{good assignment}.

With these definitions the general (``asymmetric'') version of the LLL simply states:

\begin{theorem}[Asymmetric Lov\'{a}sz Local Lemma]
\label{theorem:lll}
With $\A,\P$ and $\Gamma$ defined as above, if there exists an assignment of reals $x: \mathcal{A} \rightarrow (0,1)$ such that
\begin{equation}
\label{eqn:alll}
\forall A\in \A: \prob{A}\leq x(A) \prod_{B \in \Gamma(A)}(1-x(B));
\end{equation}
then the probability of avoiding all bad events is at least $\Pi_{A \in \mathcal{A}}(1-x(A)) > 0$ and thus there exists a good assignment to the variables in $\P$.
\end{theorem}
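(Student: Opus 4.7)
The plan is to prove the standard strengthened inductive claim from which the theorem follows by a chain-rule computation. Specifically, I would first establish:
\begin{equation*}
\forall A \in \mathcal{A}, \; \forall S \subseteq \mathcal{A}\setminus\{A\} : \quad \Pr\Bigl[A \;\Big|\; \bigcap_{B\in S} \overline{B}\Bigr] \;\leq\; x(A),
\end{equation*}
by induction on $|S|$. The base case $S=\emptyset$ is immediate, since the hypothesis (\ref{eqn:alll}) together with $\prod_{B\in\Gamma(A)}(1-x(B))\leq 1$ gives $\Pr[A]\leq x(A)$.

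For the inductive step, I would partition $S = S_1 \cup S_2$ where $S_1 = S\cap \Gamma(A)$ and $S_2 = S\setminus \Gamma(A)$, and write
\begin{equation*}
\Pr\Bigl[A \;\Big|\; \bigcap_{B\in S}\overline{B}\Bigr] \;=\; \frac{\Pr\bigl[A \cap \bigcap_{B\in S_1}\overline{B} \;\big|\; \bigcap_{B\in S_2}\overline{B}\bigr]}{\Pr\bigl[\bigcap_{B\in S_1}\overline{B} \;\big|\; \bigcap_{B\in S_2}\overline{B}\bigr]}.
\end{equation*}
The numerator is at most $\Pr[A \mid \bigcap_{B\in S_2}\overline{B}] = \Pr[A]$, since $\vbl(A)$ is disjoint from $\vbl(B)$ for every $B\in S_2$ (so $A$ is mutually independent of the family $\{\overline{B}: B\in S_2\}$, which is determined by variables outside $\vbl(A)$). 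Thus by (\ref{eqn:alll}) the numerator is at most $x(A)\prod_{B\in\Gamma(A)}(1-x(B))$. For the denominator, I enumerate $S_1=\{B_1,\dots,B_k\}$ and expand via the chain rule as $\prod_{i=1}^{k} \Pr[\overline{B_i}\mid \bigcap_{j<i}\overline{B_j}\cap \bigcap_{B\in S_2}\overline{B}]$; each such conditioning set has size strictly less than $|S|$, so the inductive hypothesis yields $\Pr[B_i \mid \cdots]\leq x(B_i)$ and hence the denominator is at least $\prod_{i=1}^k (1-x(B_i)) \geq \prod_{B\in\Gamma(A)}(1-x(B))$ (using $1-x(B)\leq 1$ to extend the product). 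Dividing gives the desired bound $x(A)$.

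Once the claim is established, I conclude by another chain-rule expansion: enumerating $\mathcal{A}=\{A_1,\dots,A_m\}$ in any order,
\begin{equation*}
\Pr\Bigl[\bigcap_{i=1}^{m}\overline{A_i}\Bigr] \;=\; \prod_{i=1}^{m} \Pr\Bigl[\overline{A_i} \;\Big|\; \bigcap_{j<i}\overline{A_j}\Bigr] \;\geq\; \prod_{i=1}^{m}(1-x(A_i)) \;>\; 0,
\end{equation*}
where the inequality uses the strengthened claim (applied to $A=A_i$ and $S=\{A_1,\dots,A_{i-1}\}$) and positivity follows because each $x(A_i)\in(0,1)$. The positive probability then guarantees the existence of a good assignment.

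The main obstacle in the proof is the careful bookkeeping in the inductive step: one must ensure that the partition $S=S_1\cup S_2$ is handled correctly so that the conditional independence argument applies to $S_2$, and that the sets conditioned on when invoking the inductive hypothesis for the $B_i$'s are genuinely of smaller cardinality than $S$. Everything else is a routine application of the chain rule and the hypothesis (\ref{eqn:alll}).
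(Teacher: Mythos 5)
The paper does not actually prove Theorem~\ref{theorem:lll}; it states it as the well-known asymmetric Local Lemma and defers to standard references (Alon--Spencer). Your argument is precisely the standard textbook proof of that result: the strengthened inductive claim $\Pr[A\mid\bigcap_{B\in S}\overline{B}]\le x(A)$, the split $S=S_1\cup S_2$ with $S_1=S\cap\Gamma(A)$, mutual independence of $A$ from the events determined outside $\vbl(A)$, a chain-rule lower bound on the denominator via the inductive hypothesis, and a final chain-rule expansion of $\Pr[\bigcap_i\overline{A_i}]$. The key steps and the bookkeeping on $|S|$ are all correct.

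One small technical point worth making explicit: your inductive hypothesis should also carry the statement $\Pr[\bigcap_{B\in S}\overline{B}]>0$, so that every conditional probability you write (both the one being bounded and the ones invoked inside the chain-rule expansion of the denominator) is actually well-defined. This follows along the same induction, since
\begin{equation*}
\Pr\Bigl[\bigcap_{B\in S\cup\{A\}}\overline{B}\Bigr]
 = \Pr\Bigl[\bigcap_{B\in S}\overline{B}\Bigr]\cdot\Bigl(1-\Pr\bigl[A\mid\textstyle\bigcap_{B\in S}\overline{B}\bigr]\Bigr)
 \geq \Pr\Bigl[\bigcap_{B\in S}\overline{B}\Bigr]\cdot(1-x(A)) > 0,
\end{equation*}
using $x(A)\in(0,1)$. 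With that addendum, the argument is complete and matches the standard proof the paper implicitly relies on.
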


We study several LLL instances where the number of events to be avoided, $m$, is super-polynomial in $n$; our goal is to develop algorithms whose running time is polynomial in $n$ which is also the size of the output - namely a good assignment of values to the $n$ variables. We introduce a key parameter:
\begin{equation}
\label{eqn:defn-delta}
\delta := \min_{A \in \A} x(A) \prod_{B \in \Gamma(A)}(1-x(B)).
\end{equation}
Note that without loss of generality $\delta \leq \frac{1}{4}$ because otherwise all $A \in \A$ are independent, i.e., defined on disjoint sets of variables. Indeed if $\delta > \frac{1}{4}$ and there is an edge in $G$ between $A \in \A$ and $B \in \A$ then we have $\frac{1}{4} > x(A) (1 - x(B))$ and $\frac{1}{4} > x(B) (1 - x(A))$, i.e., $\frac{1}{4} \cdot \frac{1}{4} > x(A) (1 - x(A)) \cdot x(B) (1 - x(B))$ which is a contradiction because $x(1-x)\leq \frac{1}{4}$ for all $x$ (the maximum is attained at $x=\frac{1}{2}$).

We allow our algorithms to have a running-time that is polynomial in $\log(1/\delta)$; in all applications known to us, $\delta \geq \exp(-O(n \log n))$, and hence, $\log(1/\delta) = O(n \log n)$. In fact because $\delta$ is an upper bound for $\min_{A \in \A} P(A)$ in any typical encodings of the domains and the probabilities of the variables, $\log(1/\delta)$ will be at most linear in the size of the input or the output.

The following subsection \ref{sec:MTreview} reviews the MT algorithm and its
analysis, which will be helpful to understand some of our proofs and
technical contributions; the reader familiar with the MT algorithm may skip it.

\subsection{Review of the MT Algorithm and its Analysis}
\label{sec:MTreview}

Recall the resampling-based MT algorithm; let us now review some of the technical elements in the analysis of this algorithm,
that will help in understanding our technical contributions better.

A witness tree $\tau=(T,\sigma_{T})$ is a finite rooted tree $T$ together with a labeling
$\sigma_{T}:V(T) \rightarrow \mathcal{A}$ of its vertices to events,
such that the children
of a vertex $u \in V(T)$ receive labels from $\Gamma(\sigma_{T}(u))\cup \sigma_{T}(u)$.
In a proper witness tree distinct children of the
same vertex always receive distinct labels. The ``log'' $C$ of an
execution of MT lists the events as they
have been selected for resampling in each step. Given $C$, we can associate a witness tree
$\tau_{C}(t)$ with each resampling step $t$ that can serve as a justification for the
necessity of that correction step. $\tau_{C}(t)$ will be rooted at $C(t)$. A witness tree is
said to occur in $C$, if there exists $t \in N$, such that $\tau_{C}(t)=\tau$. It has been shown
in \cite{moser-tardos:lll} that if $\tau$ appears in $C$, then
it is proper and it appears in $C$ with probability at most
$\Pi_{v \in V(\tau)}\prob{\sigma_{T}(v)}$.

To bound the running time of the MT algorithm, one needs to bound the number of times an event $A \in \A$ is
resampled. If $N_{A}$ denotes the random variable for the number of resampling steps of $A$ and
$C$ is the execution log; then $N_{A}$ is the number of occurrences of $A$ in this log and also
the number of distinct proper witness trees occurring in $C$ that have their root labeled $A$. As a
result one can bound the expected value of $N_{A}$ simply by summing the probabilities of appearances of
distinct witness trees rooted at $A$. These probabilities can be related to a Galton-Watson branching process
to obtain the desired bound on the running time.

A Galton-Watson branching process can be used to generate a proper witness tree as follows. In the first
round the root of the witness tree is produced, say it corresponds to event $A$. Then in each subsequent round, for
each vertex $v$ independently and again independently, for each event $B \in \Gamma{\sigma_{T}(v)}\cup\sigma_{T}(v)$,
$B$ is selected as a child of $v$ with probability $x(B)$ and is skipped with probability $(1-x_{B})$. We will use
the concept of a proper witness trees and Galton-Watson process in several of our proofs.

\section{LLL-Distribution}
\label{sec:llldistrib}
When trying to turn the non-constructive Lov\'{a}sz Local Lemma into an algorithm that finds a good assignment the following straightforward approach comes to mind: draw a random sample for the variables in $\P$ until one is found that avoids all bad events. If the LLL-conditions are met this rejection-sampling algorithm certainly always terminates but because the probability of obtaining a good assignment is typically exponentially small it takes an expected exponential number of resamplings and is therefore non-efficient. While the celebrated algorithm of Moser (and Tardos) is much more efficient, the above rejection-sampling method has a major advantage: it does not just produce an arbitrary assignment but provides a randomly chosen assignment from the distribution
that is obtained when one conditions on no bad event happening. In the following, we call this distribution \emph{LLL-distribution} or \emph{conditional LLL-distribution}.

The LLL-conditions and further probabilistic analysis can be a powerful tool
to obtain new results (constructive or otherwise) like the constructive one
in Section \ref{sec:beyond}. The following is a well-known bound on the
probability $\probs{_D}{B}$ that the LLL-distribution $D$ places on \emph{any}
event $B$ that is determined by variables in $\P$ (its proof is an easy
extension of the standard non-constructive LLL-proof \cite{alon-spencer}):

\begin{theorem}\label{thm:llldistrib-nonconstr}
If the LLL-conditions from Theorem \ref{theorem:lll} are met, then the LLL-distribution $D$ is well-defined. For any
event $B$ that is determined by $\P$, the probability \probs{_D}{B} of $B$ under $D$ satisfies:
\begin{equation}
\label{eqn:cond-lll}
\probs{_D}{B} := \prob{B \bigm| \bigwedge_{A \in \A} \overline{A}} \leq
\prob{B} \cdot \prod_{C \in \Gamma(B)} (1 - x_C)^{-1};
\end{equation}
here, $\prob{B}$ is the probability of $B$ holding under a random choice of $P_1, P_2, \ldots, P_n$.
\end{theorem}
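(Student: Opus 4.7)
The plan is to mimic the standard inductive proof of the non-constructive LLL, but applied to the ``external'' event $B$. First, well-definedness of $D$ follows immediately from Theorem~\ref{theorem:lll}: it guarantees $\prob{\bigwedge_{A \in \A}\overline{A}} \geq \prod_{A \in \A}(1 - x(A)) > 0$, so conditioning on this event makes sense.

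For the bound, I would write the LLL-distribution probability as the ratio
\[
\probs{_D}{B} \;=\; \frac{\prob{B \wedge \bigwedge_{A \in \A}\overline{A}}}{\prob{\bigwedge_{A \in \A}\overline{A}}},
\]
and then handle numerator and denominator separately, using the split $\A = \Gamma(B) \,\cup\, (\A \setminus \Gamma(B))$. The key structural observation is that $B$, together with every event in $\A \setminus \Gamma(B)$, is determined by variables disjoint from $\vbl(B)$ by definition of $\Gamma(B)$; since the $P_i$ are mutually independent, $B$ is (unconditionally) independent of the entire joint event $\bigwedge_{A \in \A \setminus \Gamma(B)}\overline{A}$. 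For the numerator I would first drop the conjuncts over $\Gamma(B)$, obtaining the upper bound
\[
\prob{B \wedge \bigwedge_{A \in \A}\overline{A}} \;\leq\; \prob{B \wedge \bigwedge_{A \in \A \setminus \Gamma(B)}\overline{A}} \;=\; \prob{B}\cdot\prob{\bigwedge_{A \in \A \setminus \Gamma(B)}\overline{A}}.
\]

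For the denominator, I would factor
\[
\prob{\bigwedge_{A \in \A}\overline{A}} \;=\; \prob{\bigwedge_{A \in \A \setminus \Gamma(B)}\overline{A}} \cdot \prob{\bigwedge_{C \in \Gamma(B)}\overline{C} \,\Bigm|\, \bigwedge_{A \in \A \setminus \Gamma(B)}\overline{A}}
\]
and lower-bound the second factor by enumerating $\Gamma(B) = \{C_1,\ldots,C_k\}$, applying the chain rule, and invoking on each factor the well-known inductive inequality from the standard LLL proof, namely $\prob{C_i \mid \bigwedge_{A \in S}\overline{A}} \leq x(C_i)$ for every $S \subseteq \A$ with $C_i \notin S$. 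This yields $\prob{\bigwedge_{C \in \Gamma(B)}\overline{C} \mid \bigwedge_{A \in \A \setminus \Gamma(B)}\overline{A}} \geq \prod_{C \in \Gamma(B)}(1-x(C))$. Plugging these into the ratio, the factor $\prob{\bigwedge_{A \in \A \setminus \Gamma(B)}\overline{A}}$ cancels and we obtain $\probs{_D}{B} \leq \prob{B}/\prod_{C \in \Gamma(B)}(1 - x(C))$, as desired.

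The main conceptual point (and the only non-routine step) is recognizing that $B$ plays the role of an ``extra'' event that is not in $\A$ but whose dependencies with $\A$ are captured entirely by $\Gamma(B)$; this lets us reuse the inductive inequality from the standard LLL proof off the shelf, without re-proving it. The only mild subtlety is making sure the inductive inequality is available in the form ``conditional on an arbitrary subset of $\A$,'' which is exactly how it is proved in Alon--Spencer, so no new induction is needed here.
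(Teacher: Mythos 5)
Your argument is correct and is exactly the ``easy extension of the standard non-constructive LLL-proof'' that the paper (following Alon--Spencer) alludes to without writing out: you reuse the inductive inequality $\prob{C \mid \bigwedge_{A\in S}\overline{A}}\le x(C)$ off the shelf on the factors coming from $\Gamma(B)$ in the denominator and use disjointness of variable sets for the rest. The only cosmetic gap is the degenerate case $B\in\A$, where $B\in\A\setminus\Gamma(B)$ so the independence step in your numerator bound does not literally apply; but there $\probs{_D}{B}=0$ trivially, so a one-line remark disposes of it.
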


The fact that the probability of an event $B$ does not increase much in the conditional LLL-distribution when $B$ does not depend on ``too many'' $C \in \A$, is used critically in the rest of the paper.

More importantly, the following theorem states that the output distribution $D'$ of the MT-algorithm
approximates the LLL-distribution $D$ and has the very nice property that it essentially also satisfies (\ref{eqn:cond-lll}):

\begin{theorem}
\label{thm:distrib}
Suppose there is an assignment of reals $x:\mathcal{A}\rightarrow (0,1)$ such that (\ref{eqn:alll}) holds. Let $B$ be any event that is determined by $\P$. Then, the probability that $B$ was true \emph{at least once}
during the execution of the MT algorithm on the events in $\mathcal{A}$, is at most $\prob{B} \cdot (\prod_{C \in \Gamma(B)} (1 - x_C))^{-1}$. In particular the probability of $B$ being true in the output distribution of MT obeys this upper-bound.
\end{theorem}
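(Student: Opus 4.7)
The plan is to extend Moser--Tardos's witness-tree analysis by treating $B$ as a ``phantom'' bad event that we never actually resample. Suppose $B$ becomes true at some moment $t^*$ during the execution. I would construct a witness tree rooted at $B$ in direct analogy with the MT construction: start with a single root labeled $B$, then process the resampling log backward from time $t^*-1$, attaching each resampled event $A_{i_s}$ as a child of the deepest existing vertex whose label shares a variable with $A_{i_s}$ (and skipping if no such vertex exists). As in MT, this tree is proper, and the MT probabilistic argument goes through essentially verbatim to show that any specific proper tree $\tau$ rooted at $B$ appears in the execution with probability at most $\prod_{v \in V(\tau)} \prob{\sigma_T(v)}$; the key point is that the variable values read by each vertex of $\tau$ are mutually independent fresh samples.

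Since ``$B$ holds at some step of MT'' implies that some proper witness tree rooted at $B$ appears, a union bound yields
\[
\prob{B \text{ holds at some step}} \;\leq\; \sum_{\tau} \prob{B} \cdot \prod_{v \neq \mathrm{root}} \prob{\sigma_T(v)},
\]
the sum being over proper witness trees with root label $B$. I would then apply the LLL hypothesis (\ref{eqn:alll}) to each non-root vertex (whose label lies in $\A$) to upper-bound $\prob{\sigma_T(v)}$ by $x(\sigma_T(v)) \prod_{C \in \Gamma(\sigma_T(v))}(1-x_C)$.

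The remaining task is to show $\sum_\tau \prod_{v \neq \mathrm{root}} x(\sigma_T(v)) \prod_{C \in \Gamma(\sigma_T(v))}(1-x_C) \;\leq\; \prod_{C \in \Gamma(B)}(1-x_C)^{-1}$. For this I would mimic MT's Lemma~3.1 via a modified Galton--Watson process that deterministically places the root labeled $B$ and then draws its children from $\Gamma(B)$ (each $C$ independently included with probability $x_C$), while descendants follow the usual MT rule (children of a vertex $v$ drawn independently from $\Gamma(\sigma_T(v))\cup\{\sigma_T(v)\}$). A direct bookkeeping rearrangement (identical to MT's, except that the root contributes $\prod_{C\in\Gamma(B)}(1-x_C)$ instead of $(1-x(A))\prod_{C\in\Gamma(A)}(1-x_C)$) shows that the probability $p_\tau$ of generating a specific proper tree $\tau$ equals exactly $\prod_{C \in \Gamma(B)}(1-x_C)$ times $\prod_{v \neq \mathrm{root}} x(\sigma_T(v)) \prod_{C \in \Gamma(\sigma_T(v))}(1-x_C)$, and then $\sum_\tau p_\tau \leq 1$ gives the claimed inequality.

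The main obstacle is the careful accounting in this last step: because $B \notin \A$, the root contributes neither an $x(B)$ factor nor a ``self''-child to the Galton--Watson process, so one cannot simply invoke MT's bound $\Exp[N_A] \leq x(A)/(1-x(A))$ as a black box; the factors have to be tracked from scratch. Once this is done, however, the cancellations actually come out cleaner (no $x(B)/(1-x(B))$ residue) and yield the advertised bound $\prob{B} \cdot \prod_{C \in \Gamma(B)}(1-x_C)^{-1}$.
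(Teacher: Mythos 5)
Your proposal is correct and follows essentially the same route as the paper's proof: witness trees rooted at $B$ grown from the execution log, a union bound over such trees, and a modified Galton--Watson process with root labeled $B$ whose children come only from $\Gamma(B)$, yielding exactly the factorization $p_\tau = \prod_{C\in\Gamma(B)}(1-x_C)\prod_{v\neq\mathrm{root}}x(\sigma_T(v))\prod_{C\in\Gamma(\sigma_T(v))}(1-x_C)$ that the paper also derives. The one small refinement worth noting is that when $B\in\mathcal{A}$, your justification ``because $B\notin\mathcal{A}$'' for the absence of a self-child should be replaced by taking $t^*$ to be the \emph{first} time $B$ holds, which is precisely how the paper ensures no non-root vertex is labeled $B$.
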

\begin{proof}
The bound on the probability of $B$ ever happening is a simple extension of the MT proof \cite{moser-tardos:lll}. Note that we want to prove the theorem irrespective of whether $B$ is in $\A$ or not. In either case we are interested in the probability that the event was true at least once during the execution, i.e., if $B$ is in $\A$ whether it could have been resampled at least once. The witness trees that certify the \emph{first time} $B$ becomes true are the ones that have $B$ as a root and all non-root nodes from $\A \setminus \{B\}$. Similarly as in \cite{moser-tardos:lll}, we calculate the expected number of these witness trees via a union bound. Let $\tau$ be a fixed proper witness tree with its root vertex labeled $B$. Following the proof of Lemma 3.1 and using the fact that $B$ cannot be a child of itself, it can be shown that the probability $p_{\tau}$ with which the Galton-Watson process that starts with $B$ yields exactly the tree $\tau$ is $p_{\tau}=\prod_{A \in \Gamma(B)} (1 - x(A)) \cdot \prod_{v \in V(\tau)}x'(\sigma_{v})$. Here $V(\tau)$ are the non-root vertices of $\tau$ and $x'(\sigma_{v})=x(\sigma_{v})\prod_{C \in \Gamma(\sigma_{v})}(1-x(C))$. Plugging this in the arguments following the proof of Lemma 3.1 of \cite{moser-tardos:lll} it is easy to see that the union bound over all these trees and therefore also the desired probability is at most $\prob{B} \cdot (\prod_{C \in \Gamma(B)} (1 - x_C))^{-1}$ where the term ``\prob{B}'' accounts for the fact that the root-event $B$ has to be true as well.
\end{proof}

Using this theorem we can view the MT algorithm as an \emph{efficient} way to obtain a sample
that comes approximately from the conditional LLL-distribution. This efficient sampling procedure makes it possible to make proofs using the conditional LLL-distribution constructive and directly convert them into algorithms. All constructive results of this paper are based on Theorem \ref{thm:distrib} and demonstrate this idea. 
\section{LLL Applications with Super-Polynomially Many Bad Events}
\label{sec:superpoly}
In several applications of the LLL, the number of bad events is super-polynomially larger than the number of underlying variables. In these cases we aim for an
algorithm that still runs in time polynomial
in the number of variables, and it is not efficient to have an explicit representation of all bad events.
Surprisingly, Theorem \ref{thm:runningtime} shows that the number of resamplings done by the MT algorithm
remains quadratic and in most cases even near-linear in the number of variables $n$.

\begin{theorem}\label{thm:runningtime}
Suppose there is an $\eps \in [0,1)$ and an assignment of reals $x:\mathcal{A}\rightarrow (0,1)$ such that:
$$\forall A\in \A: \prob{A}\leq (1 - \eps) x(A) \prod_{B \in \Gamma(A)}(1-x(B)).$$
With $\delta$ denoting $\min_{A \in \A} x(A) \prod_{B \in \Gamma(A)}(1-x(B))$,
we have
\begin{equation}
\label{eqn:T-bound}
T := \sum_{A \in \A} x_A \leq n \log(1/\delta).
\end{equation}
Furthermore:
\begin{enumerate}
\item if $\eps=0$, then the expected number of resamplings done by the MT algorithm is at most
$v_1 = T \max_{A \in \A} \frac{1}{1 - x(A)}$, and for any parameter $\lambda \geq 1$, the MT algorithm terminates within $\lambda v_1$ resamplings with probability at least $1 - 1/\lambda$.
\item if $\eps > 0$, then the expected number of resamplings done by the MT algorithm is at most
$v_2 = O(\frac{n}{\eps} \log \frac{T}{\eps})$, and for any parameter $\lambda \geq 1$, the MT algorithm terminates within $\lambda v_2$ resamplings with probability $1 - \exp(-\lambda)$.
\end{enumerate}
\end{theorem}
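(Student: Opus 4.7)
My plan is to first establish the $T$-bound and then handle the two running-time cases separately, leaning on the Moser--Tardos witness-tree framework recalled in Section~\ref{sec:MTreview}.

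For the bound $T \le n \log(1/\delta)$, I would exploit the standard observation that for each variable $P_i \in \P$, the set $\A_i := \{A \in \A : P_i \in \vbl(A)\}$ forms a clique in the dependency graph $G_\A$. For any $A \in \A_i$ the defining inequality $\delta \leq x(A)\prod_{B \in \Gamma(A)}(1-x(B))$, restricted to the clique $\A_i \setminus \{A\}$ and combined with $1-x(B) \leq e^{-x(B)}$, gives $\log(1/\delta) \geq -\log x(A) + \sum_{B \in \A_i \setminus \{A\}} x(B)$. Choosing the $A \in \A_i$ minimising $x(A)$ and rearranging yields $\sum_{A \in \A_i} x(A) \leq \log(1/\delta)$, where the additive $O(1)$ correction can be absorbed using $\delta \leq \tfrac14$. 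Since every $A \in \A$ has $\vbl(A) \neq \emptyset$ and so appears in at least one $\A_i$, summing the clique bound over the $n$ variables gives $T = \sum_A x(A) \leq \sum_{i=1}^n \sum_{A \in \A_i} x(A) \leq n\log(1/\delta)$.

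For Case~1 ($\eps = 0$), the Moser--Tardos analysis directly gives $\Exp[N_A] \leq x(A)/(1-x(A))$ for each $A$, so the total expected number of resamplings is at most $\sum_A x(A)/(1-x(A)) \leq T \cdot \max_A \tfrac{1}{1-x(A)} = v_1$, and the tail bound $\Pr[\text{total} > \lambda v_1] \leq 1/\lambda$ is just Markov. For Case~2 ($\eps > 0$), I would use the slack to damp each proper witness tree $\tau$'s appearance probability by an additional factor $(1-\eps)^{|V(\tau)|}$: concretely, the MT union bound gives $\sum_A \Exp[N_A] \leq \sum_A \sum_\tau (1-\eps)^{|V(\tau)|} p_\tau^{GW}$, where $p_\tau^{GW}$ is the probability the $A$-rooted Galton--Watson process produces $\tau$. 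I would then split at a size threshold $s^\star \asymp \tfrac{1}{\eps}\log(T/\eps)$: the contribution of trees of size $s \leq s^\star$ is at most $O(n s^\star)$ via a direct counting using the $T$-bound applied to the root labels and the fact that each non-root layer contributes a bounded expected number of children, while the tail $s > s^\star$ contributes at most $T \cdot (1-\eps)^{s^\star} = o(1)$ by the choice of $s^\star$. Summing yields $v_2 = O\bigl(\tfrac{n}{\eps}\log \tfrac{T}{\eps}\bigr)$. The concentration statement $\Pr[\text{total} > \lambda v_2] \leq \exp(-\lambda)$ follows from the geometric decay in witness-tree sizes: the random variable ``size of the witness tree at step $t$'' has an exponential tail with rate $\eps$, so a Chernoff/supermartingale argument over the execution gives sub-exponential concentration of the total count around its mean.

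The main obstacle I anticipate is the Case~2 expected bound. Getting $O((n/\eps)\log(T/\eps))$ rather than the naive $O(T/\eps)$ requires aggregating witness-tree contributions across all events (using the $T$-bound only for the $n$ ``root'' contributions) while exploiting the $(1-\eps)^s$ damping exactly at a threshold calibrated to $T$. Done per-event, the slack only buys a factor of $1/\eps$ and one loses the crucial near-linearity in $n$; the global aggregation, combined with the fact that $\log T = O(\log n + \log\log(1/\delta))$, is what makes the bound genuinely near-linear in $n$.
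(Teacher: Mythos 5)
Your $T$-bound follows the paper's clique decomposition, but the claim that ``$\delta \leq \tfrac14$'' absorbs the additive correction $x(A) + \ln x(A)$ does not go through as stated: at $\delta = \tfrac14$ the headroom $\log_2(1/\delta) - \ln(1/\delta) \approx 0.61$ is smaller than the up-to-$1$ worst-case correction when $x(A)$ is near $1$. What actually saves the bound in the regime $x(A) > \tfrac12$ is that two clique members with large $x$-values force $\delta$ itself to be small; the paper handles this explicitly via the elementary inequality $x_A(1-x_{B_1}) \leq e^{-(x_A + x_{B_1})}$ for $\tfrac12 \leq x_A \leq x_{B_1} \leq 1$, and some version of that observation is needed to close your argument. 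The $\eps = 0$ case you give matches the paper exactly.

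The $\eps > 0$ case has a genuine gap. You assert that the small-tree contribution $\sum_A \sum_{|\tau| \leq s^\star}(1-\eps)^{|\tau|}p_\tau^{GW}$ is $O(n s^\star)$ ``via a direct counting using the $T$-bound applied to the root labels,'' but that union bound is indexed by events, not variables: with the Galton--Watson normalization $\sum_\tau p_\tau^{GW} \leq 1$, the small-tree sum caps at $\sum_A \frac{x_A}{1-x_A} = O(T/\eps)$, and there is no mechanism in that count by which $n$ enters. When $\log(1/\delta)$ is large compared to $\log T$ (which the theorem explicitly permits), one has $T \gg n\log(T/\eps)$ and so $O(T/\eps)$ strictly exceeds the target $v_2 = O\bigl(\tfrac{n}{\eps}\log\tfrac{T}{\eps}\bigr)$. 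The paper obtains the factor $n$ by a different route: citing Section~5 of Moser--Tardos, with high probability no witness tree of size at least $s^\star$ occurs; and because the $k$-th resampling of any fixed variable $P_i$ produces a witness tree of size at least $k$ (every earlier resampling touching $P_i$ appears as an ancestor of the root), this caps each of the $n$ variables at $s^\star$ resamplings, hence $O(n s^\star)$ in total. This ``no large witness tree $\Rightarrow$ no variable resampled often'' step is the crux of part~(2), and it is absent from your sketch; truncating the union bound by tree size alone does not recover it.
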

\begin{proof}
The main idea of relating the quantity $T$ to $n$ and $\delta$ is to use:
(i) the fact that the variable-sharing graph $G$ is very dense, and
(ii) the nature of the LLL-conditions which force highly connected events to have small probabilities and $x$-values. To see that $G$ is dense,
consider for any variable $P \in \P$ the set of events
\[ \A_P = \{A \in \A | P \in \vbl(A)\}, \]
and note that these events form a clique in $G$. Indeed, the $m$ vertices
of $G$ can be partitioned into $n$ such cliques with potentially further
edges between them, and therefore has at least
$n \cdot {{m/n} \choose 2} = m^2/(2n) - m/2$ edges, which is high
density for $m \gg n$.

Let us first prove the bound on $T$. To do so, we fix
any $P \in \P$ and show that
$\sum_{B \in \A_P} x_B \leq \log(1/\delta)$,
which will clearly suffice.
Recall from the discussion following (\ref{eqn:defn-delta}) that we can assume w.l.o.g.\ that
$\delta \leq \frac{1}{4}$.
If $|A_P| = 1$, then of course $\sum_{B \in \A_P} x_B \leq 1 \leq
\log(1/\delta) $.
If $|A_P| > 1$, let $A \in \A_P$ have the smallest $x_A$ value.
Note that by definition
\[ \delta \leq x_A \prod_{B \in \A_P\setminus A} (1 - x_B)=\frac{x_A}{1-x_A} \prod_{B \in \A_P} (1 - x_B). \]
If $x_A \leq 1/2$, then $\delta \leq \prod_{B \in \A_P} (1 - x_B) \leq e^{-\sum_{B \in \A_P} x_B}$,
and we get $\sum_{B \in \A_P} x_B \leq \ln{(1/\delta)} < \log(1/\delta)$
as required.
Otherwise, if $x_A > 1/2$, let $B_1 \in \A_{P} \setminus A$. Then,
\begin{equation}
\label{eqn:T-bound1}
\delta \leq x_A \cdot \prod_{B \in \A_P \setminus A} (1 - x_B)
= x_A(1-x_{B_1}) \prod_{B \in \A_P \setminus (A \cup B_1)} (1 - x_B)
\leq x_A(1-x_{B_1}) e^{-\sum_{B \in \A_P \setminus (A \cup B_1)} x_B}.
\end{equation}
Let us now show that for $1/2 \leq x_A \leq x_{B_1} \leq 1$,
\begin{equation}
\label{eqn:T-bound2}
x_A(1-x_{B_1}) \leq e^{-(x_A + x_{B_1})}.
\end{equation}
Fix $x_{A}$. We thus need to show $e^{x_{B_1}}(1-x_{B_1})\leq \frac{1}{x_{A}e^{x_{A}}}$.
The derivative of $e^{x_{B_1}}(1-x_{B_1})$ is negative for $x_{B_1}\geq 0$, showing that it is a decreasing
function in the range $x_{B_1} \in [x_A,1]$. Therefore the maximum value of $e^{x_{B_1}}(1-x_{B_1})$ is obtained
at $x_{B_1}=x_{A}$ and for (\ref{eqn:T-bound2}) to hold, it is enough to show that,
$x_A(1-x_{A}) \leq e^{-2x_A }$ holds. The second derivative of $e^{-2x_{A}}-x_{A}(1-x_{A})$
is positive. Differentiating $e^{-2x_{A}}-x_{A}(1-x_{A})$ and equating the derivative to $0$,
returns the minimum in $[1/2,1]$ at $x_{A}=0.7315$. The minimum value is
$0.0351 > 0$.
Thus we have (\ref{eqn:T-bound2}) and so we get
\[x_A(1-x_{B_1}) e^{-\sum_{B \in \A_P \setminus (A \cup B_1)} x_B} \leq e^{-\sum_{B \in \A_P} x_B}; \]
using this with (\ref{eqn:T-bound1}), we obtain
$\sum_{B \in \A_P} x_B \leq \ln{(1/\delta)} < \log(1/\delta)$ as desired.

Given the bound on $T$, part (1) follows directly from the main theorem of \cite{moser-tardos:lll} and by a simple application of Markov's inequality.

Part (2) now also follows from \cite{moser-tardos:lll}. In section 5 of \cite{moser-tardos:lll} it is shown that saving a $(1-\eps)$ factor in the probability of every resampling step implies that with high probability, no witness tree of size $\Omega(\frac{1}{\eps} \log \sum_{A \in \A} \frac{x_A}{1-x_A})$ occurs. This easily implies that none of the $n$ variables can be resampled more often. It is furthermore shown that without loss of generality all $x$-values can be assumed to be bounded away from $1$ by at least $O(\eps)$. This simplifies the upper bound on the expected running time to $n \cdot O(\frac{1}{\eps} \log \frac{T}{\eps})$.
\end{proof}

As mentioned following the introduction of $\delta$ in (\ref{eqn:defn-delta}),
$\log(1/\delta) \leq O(n \log n)$ in all applications known to us, and is
often even smaller.

\paragraph{Remarks}
\begin{itemize}
\item
The $\ds \max_{A \in \A} \frac{1}{1 - x(A)}$ factor in the running time of part (1) of Theorem \ref{thm:runningtime} corresponds to the expected number of times the event $A$ gets resampled until one satisfying assignment to its variables is found. It is obviously unavoidable for an algorithm that has only black-box resampling and evaluation access to the events. If one alters the algorithm to pick a random assignment that satisfies $A$ (which can for example be computed using rejection sampling, taking an expected $\Theta(\frac{1}{1 - x(A)})$ trials each time), this factor can be avoided.
\item The estimation $T = \sum_{A \in \A} x_A = O(n \log 1/\delta)$ is tight and can be achieved, e.g., by having an isolated event with constant probability for each variable. In many cases with $\log 1/\delta = \omega(\log n)$ it is nevertheless an overestimate, and in most cases the running time is
$O(n \log n)$ even for $\eps = 0$.
\end{itemize}

While Theorem~\ref{thm:runningtime} gives very good bounds on the running
time of MT even for applications with $\Omega(n) \leq m
\leq \poly(n)$ many events, it unfortunately often fails to be directly
applicable when $m$ becomes super-polynomial in $n$. The reason is that maintaining bad events implicitly and running the resampling process requires an efficient way to find violated events. In many examples like those of Section \ref{sec:santa}, \ref{sec:non-rep} and \ref{sec:ramsey} with super-polynomially many events, finding violated events or even just verifying a good assignment is not known to be in polynomial time (often even provably NP-hard). To capture the sets of events for which we can run the MT algorithm efficiently we use the following definition:

\begin{definition}
\label{defn:verifiable}
\textbf{(Efficient verifiability)}
A set $\A$ of events that are determined by variables in $\P$ is \emph{efficiently verifiable} if, given an arbitrary assignment to $\P$, we can efficiently find an event $A \in \A$ that holds or detect that there is no such event.
\end{definition}

Because many large $\A$ of interest are not efficiently verifiable, a direct application of the MT-algorithm is not efficient. Nevertheless we show in the rest of this section that using the randomness in the output distribution of the MT-algorithm characterized by Theorem \ref{thm:distrib}, it is still practically always possible to obtain efficient Monte Carlo algorithms that produce a good assignment with high probability.

The main idea is to judiciously select an efficiently verifiable \emph{core subset} $\A' \subseteq \A$ of bad events and apply the MT-algorithm to it. Essentially instead of looking for violated events in $\A$ we only resample events from $\A'$ and terminate when we cannot find one such violated event. The non-core events will have small probabilities and will be sparsely connected to core events and as such their probabilities in the LLL-distribution and therefore also the output distribution of the algorithm does not blow up by much. There is thus hope that the non-core events remain unlikely to happen even though they were not explicitly fixed by the algorithm. Theorem \ref{thm:core} shows that if the LLL-conditions are fulfilled for $\A$ then a non-core event $A \in  \A \setminus  \A'$ is violated in the produced output with probability at most $x_A$. This makes the success probability of such an approach at least $\ds 1 - \sum_{A \in \A \setminus \A'} x_A$.

\begin{theorem} \label{thm:core}
Let $\A' \subseteq \A$ be an efficiently verifiable core subset of $\A$. If there is an $\eps \in [0,1)$ and an assignment of reals $x:\mathcal{A}\rightarrow (0,1)$ such that:
$$\forall A\in \A: \prob{A}\leq (1 - \eps) x(A) \prod_{B \in \Gamma(A) \cap \A'}(1-x(B)).$$
Then the modified MT-algorithm can be efficiently implemented with an expected number of resamplings according to Theorem \ref{thm:runningtime}. The algorithm furthermore outputs a good assignment with probability at least $\ds 1 - \sum_{A \in \A \setminus \A'} x_A$.
\end{theorem}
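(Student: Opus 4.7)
The plan is to reduce everything to the MT-algorithm applied to the smaller event system $\A'$, and then control the non-core events via the distributional bound of Theorem \ref{thm:distrib}.

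First I would verify the hypothesis for MT on $\A'$ alone. For any $A \in \A'$ we have $\Gamma_{\A'}(A) \subseteq \Gamma_\A(A) \cap \A'$, so the stated condition $\prob{A}\leq (1-\eps)x(A)\prod_{B \in \Gamma(A)\cap \A'}(1-x(B))$ immediately implies the $(1-\eps)$-slack LLL-condition of Theorem \ref{thm:runningtime} applied to the event system $(\A',\P)$. Since $\A'$ is efficiently verifiable, the MT resampling loop can be executed in polynomial time per step, and Theorem \ref{thm:runningtime} bounds the expected number of resamplings. In particular the algorithm terminates almost surely in an assignment that avoids every event in $\A'$.

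Next I would bound the probability that a non-core event $A \in \A \setminus \A'$ holds in the final assignment. The key point is that the proof of Theorem \ref{thm:distrib} treats $B$ as an arbitrary event determined by $\P$ (not necessarily in the set being resampled). Applying it with $\A'$ as the resampled family and $B := A$ shows that the probability that $A$ is ever true during the MT-execution on $\A'$ — and, in particular, true at termination — is at most
\[
\prob{A}\cdot \prod_{C \in \Gamma(A)\cap \A'}(1-x(C))^{-1}.
\]
By the hypothesis of the theorem this quantity is at most $(1-\eps)\,x(A) \leq x(A)$.

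A union bound over $A \in \A\setminus \A'$ then gives that the probability that any non-core event holds in the output is at most $\sum_{A \in \A \setminus \A'}x_A$. Combined with the fact that no core event holds at termination, this yields the claimed success probability $1-\sum_{A\in \A\setminus \A'} x_A$ for producing a good assignment for the whole family $\A$. The main subtlety — and the only step that requires any care — is confirming that Theorem \ref{thm:distrib} still gives the bound above when the resampling family is $\A'$ rather than $\A$: the witness-tree argument only uses events actually resampled by the algorithm, so the relevant neighborhood of $A$ shrinks to $\Gamma(A)\cap \A'$, which is exactly the product that appears in the hypothesis. Everything else is bookkeeping.
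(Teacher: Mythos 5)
Your proof is correct and follows essentially the same route as the paper: verify the LLL-condition for the core system $(\A',\P)$ so that Theorem \ref{thm:runningtime} applies, then invoke Theorem \ref{thm:distrib} with $B:=A$ for each non-core event and union-bound. The one subtlety you flag — that Theorem \ref{thm:distrib}, applied with $\A'$ as the resampled family, shrinks the neighborhood of $A$ to $\Gamma(A)\cap\A'$ — is exactly the point the paper emphasizes (and backs up with its alternative direct witness-tree argument, in which non-root nodes are constrained to $\A'$).
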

\begin{proof}
Note that the set $\A'$ on which the actual MT-algorithm is run fulfills the LLL-conditions. This makes Theorem \ref{thm:runningtime} applicable. To argue about the success probability of the modified algorithm, note that $x(A) \geq \prob{A} \prod_{B \in \Gamma'(A)}(1-x(B))$ where $\Gamma'(A)$ are the neighbors of $A$ in the variable sharing graph defined on $A'$. Using Theorem \ref{thm:distrib} we get that the probability that a non-core bad event $A \in \A \setminus \A'$ holds in the assignment produced by the modified algorithm is at most $x_A$. Since core-events are avoided completely by the MT-algorithm a simple union bound over all conditional non-core event probabilities results in a failure probability of at most $\sum_{A \in \A \setminus \A'} x_A$.

Here is furthermore a direct proof of the theorem incorporating the argument from Theorem \ref{thm:distrib} into the proof:

Redefine the witness trees of \cite{moser-tardos:lll} to have only events
from $\A'$ in non-root nodes, thus getting a modification of the
Galton-Watson process from Section 3 of \cite{moser-tardos:lll}.
As in \cite{moser-tardos:lll}, we grow witness trees from an execution-log
starting with a root event that holds at a certain point in time.
This guarantees that we capture events $A \in \A \setminus \A'$ happening
even though they are never resampled (since we never check whether
such events $A$ hold or not). Note that if some $A \in A \setminus \A'$
holds after termination, then there is a witness tree with $A$ as root
and with all non-root nodes belonging to $\A'$. Following the proof of
Lemma 3.1 from \cite{moser-tardos:lll} the probability for this to happen
is at most $\sum_{A \in \A \setminus \A'} x_A$.
(We do not get $\sum_{A \in \A \setminus \A'} x_A/(1 - x_A)$, since $A$
cannot be a child of itself in the witness trees that we construct.)
\end{proof}

While the concept of an efficiently verifiable core is easy to understand, it is not clear how often and how such a core can be found. Furthermore having such a core is only useful if the probability of the non-core events is small enough to make the failure probability, which is based on the union bound over those probabilities, meaningful. The following main theorem shows that in all applications that can tolerate a small ``exponential'' $\eps$-slack as introduced by $\cite{CGH:detlll}$, finding such a good core is straightforward:

\begin{theorem}
\label{thm:polycoremain}
Suppose there is a fixed constant $\eps \in (0,1)$ and an assignment of reals $x:\mathcal{A}\rightarrow (0,1-\eps)$ such that:
$$\forall A \in \A: \prob{A}^{1-\eps} \leq x(A) \prod_{B \in \Gamma(A)}(1-x(B)).$$
Suppose further that $\log 1/\delta \leq \poly(n)$, where $\delta=\min_{A \in \A} x(A) \prod_{B \in \Gamma(A)}(1-x(B))$.
Then for every $p \geq \frac{1}{\poly(n)}$ the set $\{A_i \in \A:~\prob{A_i} \geq p\}$ has size at most $\poly(n)$, and is thus essentially always an efficiently verifiable core subset of $\A$. If this is the case, then there is a Monte Carlo algorithm that terminates after $O(\frac{n}{\eps^2} \log{ \frac{n}{\eps^2}})$ resamplings and returns a good assignment with probability at least $1 - n^{-c}$, where $c > 0$ is any desired constant.
\end{theorem}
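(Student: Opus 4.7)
My plan is to run the modified Moser--Tardos algorithm from Theorem \ref{thm:core} on the core $\A' := \{A \in \A : \prob{A} \geq p\}$ for a threshold $p = n^{-\Theta(1/\eps)}$, and to sharpen Theorem \ref{thm:core}'s naive failure bound by exploiting the $\eps$-slack inside the witness-tree argument behind Theorem \ref{thm:distrib}.

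First I would convert the slack hypothesis into a standard LLL condition. For any $A$, set $y := x(A)\prod_{B \in \Gamma(A)}(1-x(B))$; then $y \leq x(A) \leq 1-\eps$, and the hypothesis gives
\[
\prob{A} \leq y^{1/(1-\eps)} = y \cdot y^{\eps/(1-\eps)} \leq y\cdot (1-\eps)^{\eps/(1-\eps)} = (1-\eps')\,y,
\]
where $\eps' := 1-(1-\eps)^{\eps/(1-\eps)} = \Omega(\eps^2)$. Dropping the terms with $B \notin \A'$ only enlarges the product, so the same inequality holds with $\Gamma(A)\cap\A'$ in place of $\Gamma(A)$, verifying the hypothesis of Theorem \ref{thm:core} with slack $\eps'$. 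Combined with $\log(1/\delta) \leq \poly(n)$, Theorem \ref{thm:runningtime} bounds $T := \sum_A x(A) \leq n\log(1/\delta) = \poly(n)$ and yields an expected $O(\tfrac{n}{\eps'}\log(T/\eps')) = O(\tfrac{n}{\eps^2}\log\tfrac{n}{\eps^2})$ resamplings. Moreover, for $A \in \A'$ the slack forces $x(A) \geq \prob{A}^{1-\eps} \geq p^{1-\eps}$, so $|\A'| \leq T / p^{1-\eps} \leq \poly(n)$ whenever $p \geq 1/\poly(n)$.

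The main obstacle is the failure-probability estimate: the bound $\sum_{A\notin\A'} x(A)$ supplied by the direct proof of Theorem \ref{thm:core} can be as large as $T = \poly(n)$, which is useless here. To tighten it, I would redo the witness-tree argument of Theorem \ref{thm:distrib} in the modified execution, where witness trees have a possibly non-core root $A$ but only core non-root nodes. That Galton--Watson analysis bounds the probability that any fixed $A \in \A\setminus\A'$ holds at some point during the execution by $\prob{A} \cdot \prod_{C\in\Gamma(A)\cap\A'}(1-x(C))^{-1}$. Using $\prod_{C\in\Gamma(A)\cap\A'}(1-x(C)) \geq \prod_{C\in\Gamma(A)}(1-x(C)) \geq \prob{A}^{1-\eps}/x(A)$ from the slack, this simplifies to $x(A)\,\prob{A}^{\eps} \leq x(A)\, p^{\eps}$ for non-core $A$. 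A union bound then gives a total failure probability of at most $p^{\eps}\sum_{A\notin\A'}x(A) \leq p^{\eps}\cdot T$. Choosing $p := n^{-(c+c_T)/\eps}$, where $c_T$ is a constant with $T \leq n^{c_T}$, makes this at most $n^{-c}$ while keeping $p \geq 1/\poly(n)$, and completes the proof.
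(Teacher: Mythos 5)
Your proposal is correct and follows essentially the same route as the paper's proof: define the core as the high-probability events, use $T = \sum_A x_A \leq n\log(1/\delta) = \poly(n)$ to bound $|\A'|$, invoke Theorem~\ref{thm:runningtime} with the $\Theta(\eps^2)$ slack extracted from the exponential-slack hypothesis, and then use Theorem~\ref{thm:distrib} (restricted to witness trees with core internal nodes) together with the slack to shave off a $\prob{A}^\eps \leq p^\eps$ factor for non-core $A$, which makes the union bound $p^\eps T$ small once $p = n^{-\Theta(1/\eps)}$. Your derivation of the slack constant via $y^{\eps/(1-\eps)} \leq (1-\eps)^{\eps/(1-\eps)}$ is a minor (valid) variant of the paper's $\prob{A}^\eps \leq (1-\eps)^\eps$, and your core-size bound $|\A'| \leq T/p^{1-\eps}$ is marginally tighter than the paper's $T/p$; neither difference is substantive.
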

\begin{proof}
For a probability $p = 1 /\poly(n)$ to be fixed later we define $\A'$ as the set of events with probability at least $p$. Recall from Theorem \ref{thm:runningtime} that $\sum_{A \in \A} x_A \leq O(n \log(1/\delta))$. Since $x_A \geq p$ for $A \in \A'$, we get that $|\A'| \leq O(n \log(1/\delta) / p) = \poly(n)$. By assumption $\A'$ is efficiently verifiable and we can run the modified resampling algorithm with it.

For every event we have $\prob{A} \leq x_A < 1-\eps$ and thus get a $(1-\eps)^\eps = (1 - \Theta(\epsilon^2))$-slack; therefore Theorem \ref{thm:runningtime} applies and guarantees that the algorithm terminates with high probability after $O(\frac{n}{\eps^2} \log{\frac{n}{\eps^2}})$ resamplings.

To prove the failure probability note that for every non-core event $A \in \A \setminus \A'$, the LLL-conditions with the ``exponential $\eps$-slack'' provide an extra multiplicative $p^{-\eps}$ factor over the LLL-conditions in Theorem \ref{thm:runningtime}. We have $x(A)\prob{A}^{\epsilon} \geq \prob{A} \prod_{B \in \Gamma'(A)}(1-x(B))$ where $\Gamma'(A)$ are the neighbors of $A$ in the variable sharing graph defined on $A'$. Using Theorem \ref{thm:distrib} and setting
$p = n^{-\Theta(1/\eps)}$, we get that the probability that a non-core bad event $A \in \A \setminus \A'$ holds in the assignment produced by the modified algorithm is at most $x_A\prob{A}^{\epsilon}\leq x_{A}n^{-\Theta(1)}$. Since core-events are avoided completely by the MT-algorithm, a simple union bound over all conditional non-core event probabilities results in a failure probability of at most $\frac{1}{n^{\Theta(1)}}\sum_{A \in \A \setminus \A'} x_A$. Now since,
$\sum_{A \in \A \setminus \A'}x_{A} \leq \sum_{A \in A'} x_{A}=T=poly(n)$ holds, we get that we fail with probability at most $n^{-c}$ on non-core events while safely avoiding the core. This completes the proof of the theorem.
\end{proof}

The last theorem nicely completes this section; it shows that in practically all applications of the general LLL it is possible to obtain a fast Monte Carlo algorithm with arbitrarily high success probability. The conditions of Theorem \ref{thm:polycoremain} are very easy to check and are usually directly fulfilled.
That is, in all LLL-based proofs (with a large number of events $A_i$) known to us, the set of high-probability events forms a polynomial-sized core that is trivially efficiently verifiable, e.g., by exhaustive enumeration. Theorem \ref{thm:polycoremain} makes these proofs constructive without further complicated analysis. In most cases, only some adjustments in the bounds are needed to respect the $\eps$-slack in the LLL-condition.

\paragraph{Remarks}
\begin{itemize}
\item  Note that the failure probability can be made an arbitrarily small inverse polynomial. This is important since for problems with non-efficiently verifiable solutions the success probability of Monte Carlo algorithms cannot be boosted using standard probability amplification techniques.
\item In all applications known to us, the core above has further nice structure: usually the probability of an event $A_i$ is exponentially small in the number of variables it depends on. Thus, each event in the core only depends on
$O(\log n)$ many $A_i$, and
hence is usually trivial to enumerate. This makes the core efficiently verifiable, even when finding a general violated event in $\A$ is NP-hard.
\item The fact that the core consists of polynomially many events with usually
logarithmically many variables each, makes it often even possible to enumerate the core in
parallel and to evaluate each event in parallel. If this is the case one can get an RNC algorithm by first building the dependency graph on the core and then computing an MIS of violated events in each round (using MIS algorithms such as \cite{abi,luby:mis}). Using the proof of Theorem \ref{thm:runningtime} which is based on some ideas from the parallel LLL algorithm of MT, it is easy to see
that only logarithmically many rounds of resampling these events are needed.
\item Even though the derandomization of $\cite{CGH:detlll}$ also only requires an ``exponential $\eps$-slack'' in the LLL-conditions, applying the techniques used there and in general getting efficient deterministic algorithms when $m$ is superpolynomial seems hard. The derandomization in  \cite{CGH:detlll} either explicitly works on all $m$ events when applying
the method of conditional probabilities or uses approximate $O(\log{m})$-wise
independent probability spaces which have an inherently $poly(m)$ size domain.
\end{itemize}

\section{A Constant-Factor Approximation Algorithm for the Santa Claus Problem}
\label{sec:santa}
The {\em Santa Claus} problem  is the restricted assignment version of the max-min allocation problem of indivisible items. In this section, we present the first efficient randomized constant-factor approximation algorithm for this problem.

In the max-min allocation problem,
there is a set $\mathcal{C}$ of $n$ items,
and $m$ children. The value (utility) of item $j$ to child $i$ is
$p_{i,j} \geq 0$. An item can be assigned to only one child. If a child $i$
receives a subset of the items $S_i \subseteq \mathcal{C}$, then the
total valuation of the items received by $i$ is $\sum_{j \in S_i} p(i,j)$.
The goal is to maximize the minimum total valuation of the items received by
any child, that is, to maximize $\min_{i} \sum_{j \in S_{i}} p(i,j)$.
(The ``minmax'' version of this ``maxmin'' problem is the
classical problem of makespan minimization in unrelated parallel
machine scheduling \cite{LST}.) This problem has received much attention
recently \cite{bansal:stoc06,asadpour:stoc07,feige:soda08,asadpour-feige-saberi,bateni:stoc09,julia:focs09,saha-srin:ics10}.

A restricted version of max-min allocation is where each item has an intrinsic value, and where for every child $i$,
$p_{i,j}$ is either $p_{j}$ or $0$. This is known as the Santa Claus problem. The Santa Claus problem is NP-hard
and no efficient approximation algorithm better than $1/2$ can be obtained unless $P=NP$ \cite{bezakova:sigecom05}.
Bansal and Sviridenko \cite{bansal:stoc06} considered a linear-programming (LP) relaxation of the problem
known as the configuration LP, and showed how to round this LP to obtain an $O(\log{\log{\log{m}}}/ \log{\log{m}})$-approximation
algorithm for the Santa Claus problem. They also showed a reduction to a crisp combinatorial problem, a feasible solution to which implies a
constant-factor integrality gap for the configuration LP.

Subsequently, Feige \cite{feige:soda08} showed that the configuration LP has a constant integrality gap. Normally such a proof immediately gives a constant-factor approximation algorithm that rounds an LP solution along the line of the integrality-gap proof. In this case Feige's proof could not be made constructive because
it was heavily based on repeated reductions that apply the asymmetric version of the LLL to exponentially many events. Due to this unsatisfactory situation, the Santa Claus problem was the first on a list of problems reported in the survey
``Estimation Algorithms versus Approximation Algorithms'' \cite{Feige08-estimationvsapproximation} for which a constructive proof would be desirable. Using a completely different approach, Asadpour, Feige and Saberi \cite{asadpour-feige-saberi} could show that the configuration LP has an integrality gap of at most $\frac{1}{5}$. Their proof uses local-search and hypergraph matching theorems of Haxell \cite{haxell1995hypergraphmatching}. Haxell's theorems are again highly non-constructive and the stated local-search problem is not known to be efficiently solvable. Thus this second non-constructive proof still left the question of a constant-factor approximation algorithm open.

In this section we show how our Theorem \ref{thm:polycoremain} can be used to easily and directly constructivize the LLL-based proof of
Feige \cite{feige:soda08}, giving the first constant-factor approximation algorithm for the Santa Claus problem.

It is to be noted that the more general max-min fair allocation problem appears significantly harder. It is known that
for general max-min fair allocation, the configuration LP has a gap of $\Omega(\sqrt{m})$. Asadpour
and Saberi \cite{asadpour:stoc07} gave an $O(\sqrt{m}\ln^{3}(m))$ approximation factor for this problem
using the configuration LP. Recently, Saha and Srinivasan \cite{saha-srin:ics10} have improved this
to $O(\frac{\sqrt{m\ln{m}}}{\ln{\ln{m}}})$. So far the best approximation ratio known for this problem
due to Chakraborty, Chuzhoy and Khanna  is $O(n^{\epsilon})$ \cite{julia:focs09}, for any constant
$\epsilon >0$; their algorithm  runs in $O(n^{1/\epsilon})$ time.

\subsection{Algorithm}

We focus on the Santa Claus problem here. We start by describing the configuration LP and the reduction of it to a combinatorial problem over a set system, albeit with a constant factor loss in approximation. Next we give a constructive solution for the set system problem, thus providing a constant-factor approximation algorithm for the Santa Claus problem.

We guess the optimal solution value $T$ using binary search. An item $j$
is said to be small, if $p_{j} < \alpha T$, otherwise it is
said to be big. Here $\alpha < 1$ is the approximation ratio, which will get fixed later. A configuration is a subset of items.
The value of a configuration $C$ to child $i$ is denoted by $p_{i,C}=\sum_{j \in C} p_{i,j}$. A configuration $C$
is called valid for child $i$ if:
\begin{itemize}
\item  $p_{i,C} \geq T$ and all the items are small; or
\item $C$ contains only one item $j$ and $p_{i,j}=p_{j} \geq \alpha T$, that is, $j$ is a big item for child $i$.
\end{itemize}
Let $C(i,T)$ denote the set of all valid configurations corresponding to child $i$ with respect to $T$.
We define an indicator variable $y_{i,C}$ for each child $i$ and all valid configurations $C \in C(i,T)$ such that
it is $1$ if child $i$ receives configuration $C$ and $0$ otherwise. These variables are relaxed to take
any fractional value in $[0,1]$ to obtain the configuration LP relaxation.

\begin{eqnarray}
\label{eqn:lp}
\forall j: \sum_{C \ni j}\sum_{i} y_{i,C} \leq 1 \\
\forall i: \sum_{C \in C(i,T)} y_{i,C}=1 \nonumber\\
\forall i, C: y_{i,C} \geq 0 \nonumber
\end{eqnarray}

Bansal and Sviridenko showed that if the above LP is feasible, then it is possible to find a fractional allocation
 that assigns a configuration with value at least $(1-\epsilon)T$ to each child in polynomial time.

 The algorithm of Bansal and Sviridenko starts by solving the configuration LP (\ref{eqn:lp}). Then by
 various steps of simplification, they reduce the problem to the following instance:

 \quad

 \emph{ There are $p$ groups, each group containing $l$ children. Each child is associated with a collection of $k$ items
 with a total valuation of $\frac{T}{c}$, for some constant $c >0$. Each item appears in at most $\beta l$ sets for
 some $\beta \leq 3$. Such an instance is referred to as $(k,l,\beta)$-system.}

 The goal is to efficiently select one child from each group and assign at least $\lfloor \gamma k \rfloor$ items
 to each of the chosen children, such that each item is assigned only once. If such an assignment
 exists, then the corresponding $(k,l,\beta)$-system is said to be $\gamma$-good $(k,l,\beta)$-system.

 Feige showed that indeed the $(k,l,\beta)$-system that results from the configuration LP is $\gamma$-good,
 where $\gamma=O\left(\frac{1}{max(1,\beta)}\right)$ \cite{feige:soda08}.
 This established a {\em constant} factor integrality gap for
 the configuration LP. However, the proof being
non-constructive, no algorithm was known to efficiently
 find such an assignment. In the remaining of this section,
 we make Feige's argument constructive, thus giving a constant-factor
 approximation algorithm for the Santa Claus problem. But before that, for the sake of completeness,
 we briefly describe the procedure that obtains a $(k,l,\beta)$-system from an optimal solution of the
 configuration LP \cite{bansal:stoc06}.

\subsection{From a configuration LP solution to a $(k,l,\beta)$-system}

The algorithm starts by simplifying the assignment of {\em big} items in an optimal solution (say) $y^{*}$ of the configuration LP.
Let $J_{B}$ denote the set of big items. Consider a bipartite graph $G$ with children $M$ on the right side and big items $J_{B}$ on the left side.
An edge $(i,j), i \in M, j \in J_{B}$ of weight $w_{i,j}=\sum_{j \in C(i,T)} y^{*}_{i,C}$ is inserted in $G$ if  $w_{i,j} > 0$. These
$w_{i,j}$ values are then modified such that after the modification the edges of $G$ with weight in $(0,1)$ form a forest.

{\em {\it \bf Lemma 5 \cite{bansal:stoc06}.}
The solution $y^{*}$ can be transformed into another feasible solution of the configuration LP where the graph $G$ is a forest.
}

The transformation is performed using the simple cycle-breaking trick. Each cycle is broken into two matchings;
weights on the edges of one matching are increased gradually while the weights on the other are decreased until
some weights hit $0$ or $1$. If a $w_{i,j}$ becomes $0$ in this procedure,
the edge $(i,j)$ is removed from $G$. Else if it becomes $1$, then item $j$ is permanently assigned to child $i$ and
the edge $(i,j)$ is removed.

Suppose $G'$ is the forest obtained after this transformation. The forest structure is then further exploited to form
groups of children and big items.

{\em {\it \bf Lemma 6 \cite{bansal:stoc06}.}
The solution $y^{*}$ can be transformed into another solution $y'$ such that children  $M$ and big items $J_{B}$ can be clustered into
$p$ groups $M_1, M_2, \ldots, M_{p}$ and $J_{B,1}, J_{B_2}, \ldots, J_{B_p}$ respectively with the following properties.
\begin{enumerate}
\item For each $i=1,2,\ldots,p$, the number of jobs $J_{B,i}$ in group $M_i$ is exactly $|M_{i}|-1$. The group $J_{B,i}$ could
possibly be empty.
\item Within each group the assignment of big job is entirely flexible in the sense that they can be placed feasibly on any of the
$|M_{i}|-1$ children out of the $|M_{i}|$ children.
\item For each group $M_{i}$, the solution $y'$ assigns exactly one unit of small configurations to children in $M_{i}$ and all the
$|M_{i}|-1$  units of configurations correspond to big jobs in $J_{B,i}$. Also, for each small job $j$, $\sum_{C \ni j}\sum_{i} y'_{i,C} \leq 2$.
\end{enumerate}
}

Lemma 6 implies that the assignment of big items to children in a group is completely flexible and can be ignored. We only need to
choose one child from each group who will be satisfied by a configuration of small items. Let $y'$ assigns a small configuration
$C$ to an extent of $y'_{m,C}$ to some child $c \in M_{i}, i\in [1,p]$, then we say that $M_{i}$ contains the small configuration $C$ for
child $c \in M_{i}$. Without loss of generality, it can be assumed that each child in the groups is fractionally assigned to exactly
one small configuration. Bansal and Sviridenko further showed that $y'$ can again be simplified such that each small configuration is assigned
to at least to an extent of $\frac{1}{l}=\frac{1}{n+m}$ to each child and for each small job $j$,  $\sum_{C \ni j}\sum_{i} y'_{i,C} \leq 3$. This
implies, if we consider all the small configurations across $p$ groups, then each small job appears in at most $\beta l$ configurations, where $\beta=3$.

Finally, the following lemma shows that by losing a constant factor in the approximation, one can assume that all the small jobs have same size.

{\em {\it \bf Lemma 8 \cite{bansal:stoc06}.}
Given the algorithmic framework above, by losing a constant factor in the approximation, each small job can be
assumed to have size $\frac{\epsilon T}{n}$.
}

As a consequence of the above lemma, we now have the following scenario.

{\em There are $p$ groups $M_{1}, M_2, \ldots, M_{p}$, each containing at most $l$ children. Each child is associated with a set that contains
$k=\Theta(\frac{n}{\epsilon})$ items. Each item belongs to at most $\beta l$ sets. The goal is to pick one child from each group and assign at least
a constant fraction of the items in its set such that each item is assigned exactly once.}

Therefore, we arrive at what is referred as a $(k,l,\beta)$-system.

\subsection{Construction of a $\gamma$-good solution for a $(k,l,\beta)$-system}

We now point out the main steps in Feige's algorithm, and in detail, describe the modifications required
 to make Feige's algorithm constructive.

 \paragraph*{Feige's Nonconstructive Proof for $\gamma$-good $(k,l,\beta)$-system:}

Feige's approach is based on a systematic reduction of $k$ and $l$ in iterations, finally arriving to a system
 where $k$ or $l$ are constants. For constant $k$ or $l$ the following lemma asserts a constant $\gamma$.

 \begin{lemma}[Lemma 2.1 and 2.2 of \cite{feige:soda08}]
 \label{lemma:constant}
 For every $(k,l,\beta)$-system a $\gamma$-good solution with $\gamma$ satisfying, $\gamma=\frac{1}{k}$ or
 $\gamma k =\lfloor \frac{k}{\lceil \beta l \rceil} \rfloor$ can be found efficiently.
  \end{lemma}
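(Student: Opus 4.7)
The plan is to realize each of the two bounds as a feasibility condition in a bipartite assignment problem and verify it with Hall's theorem (equivalently, max-flow / min-cut), so that standard polynomial-time algorithms for bipartite $b$-matching produce the assignment.

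For the bound $\gamma k = \lfloor k/\lceil \beta l\rceil \rfloor$, I would first pick one child $c_i$ from each group $M_i$ arbitrarily, obtaining $p$ sets $S_1,\ldots,S_p$ of size $k$ in which each item appears in at most $\lceil \beta l\rceil$ of the $S_i$ (this follows directly from the $(k,l,\beta)$-system definition). I then set up a bipartite $b$-matching: on one side the chosen children with demand $\lfloor k/\lceil \beta l\rceil\rfloor$ each, on the other side the items with capacity $1$, and an edge from $c_i$ to every $j \in S_{c_i}$. By the Hall-type feasibility condition for $b$-matchings, it suffices that $\bigl|\bigcup_{i \in T} S_{i}\bigr| \geq |T|\lfloor k/\lceil\beta l\rceil\rfloor$ for every $T \subseteq [p]$. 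A simple double count gives $|T|k \leq \lceil\beta l\rceil \cdot \bigl|\bigcup_{i\in T} S_{i}\bigr|$, which rearranges to exactly the required inequality; the desired assignment is then computed by a single max-flow call.

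For the bound $\gamma = 1/k$, the plan is to work with the auxiliary bipartite graph $H$ whose left side is the groups and whose right side is the items, with $M_i$ joined to $j$ iff some child $c \in M_i$ contains $j$. A matching that saturates the left side of $H$ yields pairs $(M_i, j_i)$, and for each $i$ I then pick the chosen child $c_i$ to be any child of $M_i$ containing $j_i$, which exists by the definition of $H$. To verify Hall's condition $|N(S)| \geq |S|$, I would double count (child, item) incidences inside $\bigcup_{i \in S} M_i$: there are exactly $|S|lk$ of them, and each item contributes at most $\beta l$, so $|N(S)| \geq |S|k/\beta$. In the Santa Claus reduction $\beta \leq 3$ and $k = \Theta(n/\epsilon)$, so $|N(S)| \geq |S|$ holds comfortably and a standard bipartite matching algorithm produces the required selection.

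The main obstacle is the corner case where $k < \beta$ in the second construction, since then the double count is too weak to yield Hall's condition. I would handle it by noting that in the Santa Claus reduction this case never arises (as just observed), and more generally fall back on a case analysis exploiting that $\beta \leq 3$ forces $k \in \{1,2\}$ and only $O(pl)$ items are relevant, making a direct greedy argument possible. The overall payoff of this proof plan is that, unlike the later inductive reductions that shrink $k$ and $l$ and require the LLL, the base case is entirely deterministic and efficient, relying only on bipartite matching.
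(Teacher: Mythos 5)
The paper itself does not prove this lemma; it is a citation of Lemmas~2.1 and~2.2 of Feige's SODA~2008 paper, so there is no in-paper proof to compare your attempt against. That said, your reconstruction via Hall's theorem and bipartite $b$-matching/max-flow is the right structural idea and, for the branch $\gamma k = \lfloor k/\lceil\beta l\rceil\rfloor$, is correct and complete: choosing one child per group arbitrarily yields sets $S_1,\ldots,S_p$ in which each item appears in at most $\lceil\beta l\rceil$ of the $S_i$, the double count $|T|k = \sum_{i\in T}|S_i| \leq \lceil\beta l\rceil\cdot\bigl|\bigcup_{i\in T}S_i\bigr|$ verifies the deficiency form of Hall's condition with demand $\lfloor k/\lceil\beta l\rceil\rfloor$, and a single max-flow computation yields the assignment. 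This is essentially Feige's Lemma~2.2, and it is also the branch the paper actually uses (it concludes $\gamma_s \approx 1/\lceil\beta_s l_s\rceil$ is a constant because $\beta_s \leq 6$ and $l_s$ is a constant).

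The $\gamma = 1/k$ branch, however, has a real gap in your proposal. Your double count gives $|N(S)| \geq |S|k/\beta$, so Hall's condition on the groups-vs-items graph is established only when $k \geq \beta$, which you flag; but your proposed fallback (``a direct greedy argument'' for $k<\beta$) is unsubstantiated, and the branch is simply false as stated for every $(k,l,\beta)$-system without an extra hypothesis. Concretely, take two groups each consisting of a single child, both children owning the same single item: then $k=1$, $l=1$, $\beta=2$, and no selection of one child per group admits one distinct item per child, so the system is not $1/k$-good. Thus Feige's Lemma~2.1 must carry a restriction (such as $\beta\le 1$, or $k\geq\beta$) that is omitted in the lemma statement cited by the paper and that you should state explicitly rather than defer to a vague greedy. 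This gap is harmless for the Santa Claus application since there only the second branch is invoked, but it would be incorrect to claim the $\gamma=1/k$ branch is proven for arbitrary parameters.
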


The reduction of $(k, l, \beta)$-system to constant $k$ and $l$ involves two main lemmas, which we refer to
 as {\em Reduce-l} lemma and {\em Reduce-k} lemma respectively.

 \begin{lemma}[Lemma 2.3 of \cite{feige:soda08}, Reduce-l]
 For $l > c$ ($c$ is a sufficiently large constant), every $\gamma$-good $(k,l,\beta)$-system with $k \leq l$ can
 be transformed into a $\gamma$-good $(k,l', \beta')$-system with $l' \leq \log^{5}{l}$ and $\beta' \leq \beta(1+\frac{1}{\log{l}})$.
 \end{lemma}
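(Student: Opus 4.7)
The plan is to form the reduced $(k, l', \beta')$-system by independently subsampling a set $M_i' \subseteq M_i$ of exactly $l' := \lceil \log^5 l \rceil$ children uniformly at random from each of the $p$ groups; the $k$ items attached to each surviving child are kept unchanged. The underlying mutually independent variables $P_1, \ldots, P_p$ of the LLL-framework are these per-group choices, and the goal is to show that with positive probability the resulting system simultaneously (i) has item-degree at most $\beta' l'$ with $\beta' = \beta(1 + 1/\log l)$, and (ii) is still $\gamma$-good.

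I would split the bad events into two families. For each item $j$, define a degree event $D_j$ that $j$ appears in more than $\beta' l'$ sampled sets; since $j$ appears in at most $\beta l$ sets originally with at most one per child, independence across groups together with a Chernoff bound yield $\prob{D_j} \leq \exp(-\Omega(l' / \log^2 l)) = \exp(-\Omega(\log^3 l))$, and $D_j$ depends on at most $\beta l$ of the $P_i$. For preservation of $\gamma$-goodness, I would encode Hall-type obstructions as a family $\{G_{S, U}\}$ indexed by a subset $S$ of groups and a subset $U$ of items with $|U| < \gamma k |S|$: $G_{S,U}$ asserts that every sampled child in every group of $S$ has almost its entire $k$-item set contained in $U$, which prevents selecting one child per group in $S$ with $\gamma k$ distinct items each. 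Using that the original system is $\gamma$-good, children whose sets concentrate in a prescribed $U$ must be rare per group; hence $\prob{G_{S, U}}$ decays exponentially in $|S| \cdot l'$, which more than compensates for the super-polynomial count of such $(S, U)$-pairs, and $G_{S,U}$ depends only on the $|S|$ variables $\{P_i : i \in S\}$.

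With these ingredients in hand, I would verify the asymmetric LLL-condition~(\ref{eqn:alll}) with a small exponential $\eps$-slack, choosing $x(D_j)$ at the scale of the Chernoff tail and $x(G_{S,U})$ exponentially small in $|S|$ (and in $k - |U|/|S|$). The two sums $\sum x(D_{j'})$ and $\sum x(G_{S', U'})$ over neighbors in $\Gamma(\cdot)$ both telescope: the first because each group contains at most $k$ items and items are shared across at most $\beta l$ groups, so the $D_{j'}$-neighbors of any fixed event are bounded by a $\poly(l)$ factor swallowed by the $\exp(-\Omega(\log^3 l))$ tail; the second because the number of $(S', U')$ sharing a given group, weighted by $x(G_{S',U'})$, is a geometric-like sum that converges.

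The main obstacle I expect is pinning down the Hall-type events $G_{S, U}$ and their probabilities precisely enough that the count of such events is dominated by the decay in $|S| \cdot l'$. This is the combinatorial heart of Feige's argument and relies on extracting sufficient ``spread'' from a $\gamma$-good witness of the original system to bound the fraction of children per group whose sets are almost contained in a prescribed $U$. Once existence of a good sampling is established by the LLL, the constructive version claimed by the paper follows immediately from Theorem~\ref{thm:polycoremain}: the events $\{D_j\}$ form a polynomially-sized efficiently verifiable core, while the super-polynomially many $G_{S, U}$ events, each of tiny probability and sparsely connected to the core, are absorbed by the LLL-distribution bound of Theorem~\ref{thm:distrib}.
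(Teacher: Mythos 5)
Your first ingredient is correct and matches the paper: subsample $l' = \lfloor \log^5 l\rfloor$ children per group uniformly and independently, and for each item $j$ define a bad event that $j$ survives in more than $\beta' l'$ of the subsampled sets. A Chernoff bound gives probability $\le e^{-\Omega(\log^3 l)}$, the dependency degree is $\le k l \cdot \beta l \le O(l^3)$, and the symmetric LLL with $x_{A_j} = \Theta(l^{-\log^2 l})$ applies. Since there is exactly one such event per item, they are only polynomially many and trivially checkable by counting, so the plain Moser--Tardos algorithm (Theorem~\ref{thm:runningtime}) runs in expected time $O(m)$. You do not need the core-subset machinery of Theorem~\ref{thm:polycoremain} here at all.

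The second half of your plan, however, is off target. You introduce a super-polynomial family of Hall-type obstruction events $G_{S,U}$ to guarantee that the subsampled system remains $\gamma$-good, and you (rightly) flag that bounding their probabilities and dependency counts is the hard combinatorial core. But for Reduce-$l$ no such events are required. The subsampled system is a \emph{sub-system} of the original: the children are a subset and the per-child item sets are unchanged. Consequently, a $\gamma$-good witness (one child from each group, plus an assignment of $\lfloor\gamma k\rfloor$ distinct items) for the reduced system is \emph{verbatim} a $\gamma$-good witness for the original. The overall argument only ever needs the implication ``reduced good $\Rightarrow$ original good'' when pulling a solution back up the reduction chain (see Lemma~\ref{lemma:constant-new}, where it is noted that the Reduce-$l$ step ``does not affect the goodness of the set system''). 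What could go wrong in the subsampling is only that some item becomes overloaded, violating the $\beta'$ bound; that is the single family of bad events the paper handles, and is all that is needed. The Hall-type bad events and the associated dependency/counting analysis belong to Reduce-$k$ (Section~\ref{subsec:reducek}), where items rather than children are deleted and a witness does not automatically pull back; importing them into Reduce-$l$ adds an unnecessary and, as you concede, incomplete layer to the argument.
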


 \begin{lemma}[Lemma 2.4 of \cite{feige:soda08}, Reduce-k]
 Every $(k,l,\beta)$-system with $k \geq l \geq c$ can be transformed into a $(k',l,\beta)$-system with $k' \leq \frac{k}{2}$
 and with the following additional property: if the original system is not $\gamma$-good, then the new system is not
 $\gamma'$-good for $\gamma'=\gamma(1+\frac{3\log{k}}{\sqrt{\gamma k}})$. Conversely, if the new system is $\gamma'$-good,
 then the original system was $\gamma$-good.
 \end{lemma}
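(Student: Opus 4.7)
The plan is to give a randomized construction of the reduced system and invoke this paper's Theorem~\ref{thm:polycoremain} to make it efficient. I would take the new ``items'' to be the pair-classes of a uniformly random partition of the global item set into pairs, and define the new set $S_i'$ as the collection of pairs lying entirely inside $S_i$. Standard Chernoff/martingale concentration then gives $|S_i'| \leq k/2$ and an appearance bound of at most $\beta l$ per pair with high probability. The ``lifting'' direction of the lemma is immediate: any disjoint collection of $\lfloor \gamma' k' \rfloor$ new items in $S_{c_i}'$ unfolds to $2 \lfloor \gamma' k' \rfloor$ disjoint original items in $S_{c_i}$, and the slack $\gamma' = \gamma(1 + 3\log k / \sqrt{\gamma k})$ is chosen precisely to absorb the $O(\sqrt{k \log k})$ concentration deviation, so $2\lfloor \gamma' k' \rfloor \geq \gamma k$.

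The bad events split into three types: (i) size-concentration failures $|S_i'| \notin k/2 \pm O(\sqrt{k \log k})$; (ii) per-pair membership-count failures, where a pair lies in more than $\beta l$ new sets; and (iii) ``rogue'' $\gamma'$-good assignments in the new system, which, under the hypothesis that the original is not $\gamma$-good, cannot correspond to any $\gamma$-good assignment in the original. Types (i) and (ii) are polynomial in number, each depends on $O(k)$ random choices, and each has probability $k^{-\Omega(1)}$ by Chernoff; type (iii) is superpolynomial, but each event has probability at most $\exp(-\Omega(\log k \cdot \sqrt{\gamma k}))$. All three types satisfy the asymmetric LLL with a constant $\eps$-slack, because the Chernoff exponents dominate the logarithm of each event's dependency degree.

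Theorem~\ref{thm:polycoremain} then applies: types (i) and (ii) form the polynomial-sized, efficiently verifiable core (each is directly testable by counting), while type (iii) events each have probability at most $n^{-\Theta(1/\eps)}$ and, by Theorem~\ref{thm:distrib}, contribute at most $n^{-c}$ to the failure probability via the union bound on non-core events in the LLL-distribution. The modified Moser--Tardos algorithm on the core terminates in $O(\frac{n}{\eps^2} \log \frac{n}{\eps^2})$ resamplings and outputs a random partition that witnesses the required $(k', l, \beta)$-system. The main obstacle is the structural choice of the random coarsening: it must jointly concentrate set sizes around $k/2$, preserve the bound $\beta$, and admit the clean $\gamma'$-to-$\gamma$ unfolding. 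This is the heart of Feige's non-constructive argument, and once it is in place the constructive conversion via Theorem~\ref{thm:polycoremain} is essentially mechanical.
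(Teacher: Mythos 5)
The construction you propose is not the one Feige uses, and unfortunately it breaks at the very first step. You define the new items as pairs from a uniformly random perfect matching of the \emph{global} item set, and take $S_i'$ to be the pairs lying entirely inside $S_i$. But in a $(k,l,\beta)$-system the total number of items $N$ is typically vastly larger than $k$ (it is on the order of $plk/\beta$), so for a uniformly random global pairing the partner of a given item $j \in S_i$ lands in $S_i$ only with probability $\approx (k-1)/(N-1) = o(1)$. Consequently $\Exp[|S_i'|] \approx k^2/(2N) \ll k/2$, not $\approx k/2$, and the whole system collapses rather than being halved. The reduction in Feige's Lemma~2.4 --- and the one the paper makes constructive in Section~5.3.2 --- is the much simpler subsampling experiment: each item is \emph{retained} independently with probability $1/2$, so every set keeps about $k/2$ of \emph{its own} items and the appearance bound $\beta l$ is preserved tautologically, since surviving items belong to exactly the same sets as before.

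You also omit the ingredient that makes the bad events tractable: Feige's Hall's-theorem characterization (his Lemma~2.7, reproduced in the paper) states that a choice of one child per group is $\gamma$-good iff every connected subcollection of $i$ chosen sets has union of size at least $i\gamma k$. This turns your vaguely defined ``type (iii) rogue assignments'' into a clean family of events $B_i$ indexed by connected collections of $i$ sets: namely, that a connected collection whose union originally had $\leq i\gamma k$ items retains $> i\delta' k/2$ of them after subsampling, where $\delta' = \gamma\bigl(1 + \log k / \sqrt{\gamma k}\bigr)$, together with the single deviation event $B_1$ that some set keeps fewer than $k' = \bigl(1 - \log k / \sqrt{k}\bigr)\frac{k}{2}$ items. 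Avoiding $B_1$ lets one truncate each set to exactly $k'$ items; avoiding $B_i$ for $i \geq 2$ then ensures that every connected collection still has a large enough union, which by Lemma~2.7 is exactly the certificate that $\gamma'$-goodness of the new system implies $\gamma$-goodness of the old, with $\gamma' = \delta'\frac{k}{2}\frac{1}{k'} \leq \gamma\bigl(1 + 3\log k/\sqrt{\gamma k}\bigr)$. Your loose claim that the $\gamma'$-slack ``absorbs the $O(\sqrt{k\log k})$ concentration deviation'' is qualitatively what one wants, but without the connectivity/Hall's-theorem machinery one cannot even define the events to union-bound, let alone show they satisfy (\ref{eqn:feige-santa}) with the $x_i = 2^{-10i\log k}$ assignment.

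The part of your outline concerning Theorem~\ref{thm:polycoremain} is directionally correct: \emph{once} the right bad events are in place, the core is the polynomially many $B_i$ with $i = O(\log m / \log k)$, which are enumerable by BFS on the set-intersection graph, and the superpolynomially many large-$i$ events are handled via the LLL-distribution and a union bound exactly as in Lemma~\ref{lemma:good}. But the constructive layer is ``essentially mechanical,'' as you put it, only because the existential layer you skipped is where the real work happens; with the pairing construction in place of Feige's subsampling, neither layer goes through.
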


 If  $\beta$ is not a constant to start with, then by applying the following lemma repeatedly, $\beta$
 can be reduced below $1$.

 \begin{lemma}[Lemma 2.5 of \cite{feige:soda08}]
 For $l > c$, every $\gamma$-good $(k,l, \beta)$-system can be transformed into a $\gamma$-good $(k', l, \beta')$-system with
 $k'=\lfloor \frac{k}{2} \rfloor$ and $\beta' \leq \frac{\beta}{2}\left(1+\frac{\log{\beta l}}{\sqrt{\beta l}}\right)$.
 \end{lemma}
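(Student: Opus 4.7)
The plan is to establish the lemma by random halving followed by the asymmetric LLL, and then invoke Theorem~\ref{thm:polycoremain} to make it constructive. Let $\mathcal{A}^*$ be any $\gamma$-good assignment of the original system, which for each group $i$ selects a child $c_i^*$ and a subset $T_i^* \subseteq S_{c_i^*}$ of size $\gamma k$. For each child $c$ independently, sample $S'_c \subseteq S_c$ uniformly among all $\lfloor k/2 \rfloor$-subsets; the new item-sets are the $S'_c$, giving $k' = \lfloor k/2 \rfloor$. I need to show that with positive probability every item appears in at most $\beta' l$ of the new sets and that the restricted assignment $T_i^* \cap S'_{c_i^*}$ still witnesses $\gamma$-goodness.

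Introduce bad events $E_j$ (``item $j$ appears in more than $\beta' l$ new sets'') for each item $j$ and $F_i$ (``$|T_i^* \cap S'_{c_i^*}| < \lceil \gamma k' \rceil$'') for each group $i$. For $E_j$, since $j$ originally lies in at most $\beta l$ children's sets and each such child independently keeps $j$ with probability $\lfloor k/2\rfloor / k \leq 1/2$, the count of new sets containing $j$ is a sum of at most $\beta l$ independent Bernoullis with mean at most $\beta l / 2$. A Chernoff bound together with the choice $\beta' l = (\beta l / 2)(1 + \log(\beta l)/\sqrt{\beta l})$ yields $\prob{E_j} \leq \exp(-\Omega(\log^2(\beta l)))$, which is super-polynomially small. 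The dependency structure is mild: each $E_j$ is determined by the random choices of the $\leq \beta l$ children containing $j$, and each such child's choice affects at most $k$ other $E$-events, so $E_j$ neighbors at most $O(k \beta l)$ other bad events, which is polynomial in the input.

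The main obstacle is $F_i$: $|T_i^* \cap S'_{c_i^*}|$ is hypergeometric with mean exactly $\gamma \lfloor k/2 \rfloor = \gamma k'$, so the one-sided deviation required for $F_i$ is not on its own exponentially small. I would resolve this by modifying the sampling at each chosen child $c_i^*$ into two stages --- first sample a uniformly random $\lceil \gamma k' \rceil$-subset of $T_i^*$, then an independent random $(k' - \lceil \gamma k' \rceil)$-subset of $S_{c_i^*} \setminus T_i^*$, and take their union as $S'_{c_i^*}$. This makes $F_i$ impossible while perturbing the marginal inclusion probability of every item by at most $O(1/k)$ from the baseline $\lfloor k/2\rfloor / k$, so the Chernoff bound for $E_j$ goes through unchanged up to absorbed constants. (A less elegant alternative, which also works, is to absorb an additive $\Theta(\sqrt{\gamma k \log k})$ slack into the target and use two-sided hypergeometric concentration; the iteration tolerates the resulting $(1-o(1))$-factor loss in $\gamma$ because $\gamma k$ is large at every non-base step.)

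With $\prob{E_j}$ super-polynomially small, $\prob{F_i} = 0$, and the dependency degree polynomial, the exponential $\eps$-slack hypothesis of Theorem~\ref{thm:polycoremain} is satisfied for any constant $\eps \in (0,1)$. The bad-event set has size polynomial in the input (one event per item plus one per group) and is trivially efficiently verifiable by direct inspection of the sampled subsets, so Theorem~\ref{thm:polycoremain} yields a randomized $O(n \log n / \eps^2)$-resampling algorithm that with high probability outputs a transformation witnessing all conclusions of the lemma.
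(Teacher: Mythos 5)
The paper does not prove this lemma. Immediately after quoting it, the authors write ``However in our context, $\beta \leq 3$, thus we ignore Lemma 2.5 of \cite{feige:soda08} from further discussions''; the statement is a citation of Feige's result, used as a black box and in fact explicitly set aside because the Santa Claus pipeline they constructivize never invokes the $\beta$-reduction step. So there is no proof inside this paper against which to compare your attempt.

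Taken on its own terms, your argument has a genuine gap at its core. The bad events $F_i$ and the repaired ``two-stage'' sampling at the chosen children are both parameterized by the witness $(c_i^*, T_i^*)$ of $\gamma$-goodness, which is exactly the object the overall algorithm is trying to produce. A transformation that requires knowing which child is selected in each group and which $\gamma k$ items it is to receive cannot be implemented, and therefore cannot be fed to Theorem~\ref{thm:polycoremain} (which resamples from a distribution you must be able to simulate). This is precisely why Feige's proof of the companion Lemma~2.4, as recounted in Section~\ref{subsec:reducek}, phrases the ``goodness is preserved'' events as Hall-type conditions on \emph{connected subcollections of sets} rather than on any particular witness: those events are intrinsic to the set system, enumerable by BFS, and witness-free. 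The same device is what you would need here, and without it the constructive claim does not go through; at best you obtain a non-constructive existence statement. Two further, smaller, issues: (i) with per-child random $k'$-subset sampling each $E_j$ shares a variable with up to $O(k\,\beta l)$ other $E$-events while its failure probability is only $\exp(-\Theta(\log^2(\beta l)))$, so when $\beta l$ is small the symmetric LLL condition $ep(d+1)\le 1$ is not even checked (let alone with the exponential $\eps$-slack Theorem~\ref{thm:polycoremain} requires); and (ii) your fallback of absorbing an additive $\Theta(\sqrt{\gamma k\log k})$ loss into $\gamma$ yields a $\gamma(1-o(1))$-good system, which is a strictly weaker conclusion than the lemma, whose whole point — unlike Lemma~2.4 — is that $\gamma$ is preserved exactly across the $\Theta(\log\beta)$ iterations.
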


 However in our context, $\beta \leq 3$, thus we ignore Lemma 2.5 of \cite{feige:soda08} from further
 discussions.

 Starting from the original system, as long as $l > c$, Lemma Reduce-l is applied when $ l > k$ and  Lemma Reduce-k
 is applied when $k \geq l$. In this process $\beta$ grows at most by a factor of $2$. Thus at the end, $l$ is a constant and so is $\beta$.
 Thus by applying Lemma \ref{lemma:constant}, the constant integrality gap for the configuration LP
 is established.

 \paragraph*{Randomized Algorithm for $\gamma$-good $(k,l,\beta)$-system:}

 There are two main steps in the algorithm.

 \begin{enumerate}
 \item Show a constructive procedure to obtain the reduced system through Lemma Reduce-l and Lemma Reduce-k.
 \item Map the solution of the final reduced system back to the original system.
 \end{enumerate}

 We now elaborate upon each of these.


 \subsubsection{Making Lemma Reduce-l Constructive}

This follows quite directly from \cite{moser-tardos:lll}. The algorithm
picks $\lfloor \log^{5}{l} \rfloor$ sets uniformly at random and
independently from each group. Thus while the value of $k$ remains fixed,
$l$ is reduced to $l'=\lfloor \log^{5}{l} \rfloor$. Now in expectation the value of
$\beta$ does not change and the probability that
$\beta' > \beta(1+\frac{1}{\log{l}})$, and hence
$\beta'l' > \beta l (1+\frac{1}{\log{l}})$, is at most
$e^{-\beta' l' /3 \log^{2}{l}} \leq e^{-\log^{3}{l}}=l^{-\log^{2}{l}}$.
We define a bad event corresponding to each element:
 \begin{itemize}
 \item $A_{j}$: Element $j$ has more than $\beta' l'$ copies.
 \end{itemize}

Now noting that the dependency graph has degree at most
$kl \beta l \leq 6 l^3$, the uniform (symmetric) version of the LLL applies.
Now it is easy to check if there exists a violated event: we simply
count the number of times an  element appears in all the sets.
Thus we directly follow \cite{moser-tardos:lll};
setting $x_{A_{j}}=\frac{1}{el^{\log^{2}{l}}}$, we
get the expected running time to avoid all the bad events to be
$O(plk/l^{\log^{2}{l}})=O(p)=O(m)$.

\subsubsection{Making Lemma Reduce-k Constructive} \label{subsec:reducek}

This is the main challenging part.
The random experiment involves  selecting each item independently at random with probability $\frac{1}{2}$. To  characterize the bad events, we need a structural lemma from \cite{feige:soda08}. Construct a graph on the sets, where there is an edge between two sets if they share an element. A collection of sets is said to be connected if and only if the subgraph induced by this collection is connected.

We consider two types of bad events:
 \begin{enumerate}
 \item $B_{1}$: some set has less than $k'=\left(1-\frac{\log{k}}{\sqrt{k}}\right)\frac{k}{2}$ items surviving, and
 \item  $B_{i}$ for $i \geq 2$: there is a connected collection of $i$ sets from distinct groups whose union originally contained at most
 $i \gamma k$ items, of which more than $i \delta' \frac{k}{2}$ items survive, where $\delta'=\gamma\left(1+\frac{\log{k}}{\sqrt{\gamma k}}\right)$.
 \end{enumerate}

If none of the above bad events happen, then we can consider the first $k'$ items from each set and yet the second type of bad events do not happen. These events are chosen such that $\gamma'$-goodness ($\gamma'=\delta'\frac{k}{2}\frac{1}{k'}\leq \gamma\left(1+\frac{3\log{k}}{\sqrt{\gamma k}}\right)$) of the new system certifies that the original system was $\gamma$-good. That this is indeed the case follows directly from Hall's theorem as proven by Feige:

\begin{lemma}[Lemma 2.7 of \cite{feige:soda08}]
Consider a collection of $n$ sets and a positive integer $q$.
\begin{enumerate}
 \item If for some $1 \leq i \leq n$, there is a connected subcollection of $i$ sets whose union
 contains less than $iq$ items, then there is no choice of $q$ items per set such that all items
 are distinct.
 \item If for every $i$, $ 1 \leq i \leq n$, the union of every connected subcollection of $i$
 sets contains at least $iq$ (distinct) items, then there is a choice of $q$ items per set such
 that all items are distinct.
\end{enumerate}
\end{lemma}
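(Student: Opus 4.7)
The plan is to handle the two parts separately. Part 1 is a direct pigeonhole observation: if $i$ sets jointly contain fewer than $iq$ distinct items, then any choice of $q$ items per set draws $iq$ tokens from a pool of fewer than $iq$ items, forcing a repeat. Note that connectivity is actually irrelevant for part 1 (it just strengthens the conclusion by making the hypothesis easier to satisfy), so no work is needed beyond this counting remark.

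For part 2, the approach is to reduce to the standard $q$-matching version of Hall's theorem. I would form the bipartite graph $H$ whose left vertices are the $n$ sets, whose right vertices are the items, and whose edges record membership. A choice of $q$ distinct items per set with no item reused is exactly a left-perfect $q$-matching in $H$, and by the generalized Hall's theorem such a matching exists if and only if $|N(\mathcal{T})| \geq q|\mathcal{T}|$ holds for \emph{every} subcollection $\mathcal{T}$ of sets, not just the connected ones.

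The key step is to bridge this gap. Given any $\mathcal{T}$, decompose it into its connected components $\mathcal{T}_1, \ldots, \mathcal{T}_r$ in the auxiliary graph on sets (the one whose edges link sets that share an item). By the very definition of that graph, sets lying in different components $\mathcal{T}_j$ share no elements, so the neighborhoods $N(\mathcal{T}_1), \ldots, N(\mathcal{T}_r)$ in $H$ are pairwise disjoint. Applying the hypothesis to each connected $\mathcal{T}_j$ gives
\[ |N(\mathcal{T})| \;=\; \sum_{j=1}^{r} |N(\mathcal{T}_j)| \;\geq\; \sum_{j=1}^{r} q\,|\mathcal{T}_j| \;=\; q\,|\mathcal{T}|, \]
so the full Hall condition holds and the desired matching exists.

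The only conceptual point that could be mistaken for an obstacle is the disjointness of the $N(\mathcal{T}_j)$'s; once one notices that distinct components cannot share an item (otherwise the auxiliary graph would have an edge across them, merging them), the reduction to Hall's theorem is immediate and the proof contains no further calculation.
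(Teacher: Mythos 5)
Your proof is correct. A small clarification worth making explicit: the paper does not actually prove this lemma; it is stated verbatim as Lemma~2.7 of Feige's SODA~2008 paper and invoked as a black box, so there is no in-paper argument to compare against. Your argument is nonetheless the standard and right one. Part~1 is pure pigeonhole (and you are right that connectivity is inessential there). For Part~2, you correctly invoke the $q$-fold version of Hall's theorem (sometimes attributed to Halmos--Vaughan): a choice of $q$ distinct items per set with no global reuse exists iff $|N(\mathcal{T})| \geq q|\mathcal{T}|$ for every subcollection $\mathcal{T}$. The only real content of the lemma is, as you identify, that it suffices to check this on connected subcollections, and your decomposition into the connected components $\mathcal{T}_1,\ldots,\mathcal{T}_r$ of the induced intersection graph handles this cleanly: sets in distinct components cannot share an item (else they would be adjacent, contradicting that they lie in different components of the induced subgraph), so the neighborhoods $N(\mathcal{T}_j)$ are pairwise disjoint, the sizes add, and the hypothesis applied to each connected $\mathcal{T}_j$ yields the full Hall condition. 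One minor presentational point: it is worth stating explicitly that each $\mathcal{T}_j$, being a connected component of $G[\mathcal{T}]$, is itself a connected subcollection in the sense of the hypothesis (i.e., connectivity within $\mathcal{T}$ coincides with connectivity in the ambient intersection graph restricted to those vertices), so the hypothesis genuinely applies to it; this is immediate but is the one place a reader might pause.
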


Feige showed in \cite{feige:soda08} that for bad events of type $B_i, i \geq 1$, taking $x_{i}=2^{-10i\log{k}}$ is sufficient to satisfy the condition (\ref{eqn:alll}) of the asymmetric LLL. More precisely, suppose we define, for any
bad event $B \in \bigcup_{i \geq 1} B_i$, $\Gamma(B)$ to be as in
Section~\ref{sec:algframework}: i.e., $\Gamma(B)$ is the set of all
bad events $A \neq B$ such that $A$ and $B$ both depend on at least one
common random variable in our ``randomly and independently selecting items''
experiment. Then, it is shown in \cite{feige:soda08} that with the choice
$x_{i}=2^{-10i\log{k}}$ for all events in $B_i$, we have
\begin{equation}
\label{eqn:feige-santa}
\forall (i \geq 1) ~\forall (B \in B_i), ~
\prob{B} \leq 2^{-20 i \log{k}} \leq
x_i \prod_{j \geq 1} \prod_{A \in (B_j \cap \Gamma(B))} (1 - x_j).
\end{equation}
Thus by the LLL, there exists an assignment that avoids all the bad events. However, no efficient construction was known here, and as Feige points out, ``the main source of difficulty in this respect is Lemma 2.4, because there the number of bad events is exponential in the problem size, and moreover, there are bad events that involve a constant fraction of the random variables.'' Our Theorem \ref{thm:polycoremain} again directly makes this proof constructive and gives an efficient Monte Carlo algorithm for producing a reduce-k system with high probability.

\begin{lemma} \label{lemma:good}
There is a Monte Carlo algorithm that produces a valid reduce-k system with
probability at least $1 - 1/m^2$.
\end{lemma}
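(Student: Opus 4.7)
The plan is to apply Theorem \ref{thm:polycoremain} with the event set $\A = \bigcup_{i\ge 1} B_i$ from Feige's analysis, taking as an efficiently verifiable core $\A'$ the set of low-index events. First I would reverify the hypothesis of Theorem \ref{thm:polycoremain}. Feige's inequality (\ref{eqn:feige-santa}) shows $\prob{B} \leq 2^{-20 i \log k}$ for $B \in B_i$, while his chosen $x_i = 2^{-10 i \log k}$ already satisfies the ordinary LLL. Because the probability bound is exponentially smaller than $x_i$, there is substantial slack; repeating Feige's counting argument with a small perturbation of the $x$-values (e.g.\ $x_i' = 2^{-11 i \log k}$) yields the exponential $\eps$-slack form $\prob{B}^{1-\eps} \leq x_i' \prod_{A \in \Gamma(B)}(1-x'_{j(A)})$ for some fixed constant $\eps > 0$. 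The side-condition $\log(1/\delta) \leq \poly(n)$ is immediate: the minimum $x$-value occurs for the largest relevant index $i$, which is at most the number of sets, and so $\log(1/\delta) = O(\log m \cdot \log k) = \poly(n)$.

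Next I would exhibit the core. Set $\A' = \{ B \in B_i : i \leq I^* \}$ for a threshold $I^* = \Theta(\log m/\log k)$ chosen so that Theorem \ref{thm:polycoremain}'s guarantee $\{A : \prob{A} \geq 1/\poly(n)\} \subseteq \A'$ holds. The core is efficiently verifiable: events in $B_1$ reduce to checking whether each of the $pl$ individual sets has at least $k'$ surviving items, and events in $B_i$ for $2 \leq i \leq I^*$ are indexed by connected $i$-tuples in the set-intersection graph. That graph has degree at most $k \cdot \beta l = \poly(m)$ because each set contains $k$ items and each item lies in at most $\beta l$ sets, so the number of connected $i$-subcollections is at most $p l \cdot \poly(m)^{i-1} = \poly(m)$ for $i \leq I^*$. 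Each such subcollection can be enumerated and tested in polynomial time by summing surviving items in its union.

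With both ingredients verified, Theorem \ref{thm:polycoremain} directly supplies a Monte Carlo algorithm that halts after $O((n/\eps^2)\log(n/\eps^2))$ resamplings and whose failure probability, obtained by the union bound over non-core events
$$\sum_{i > I^*} |B_i| \cdot x_i' \prob{B}^{\eps} \;\leq\; \sum_{i > I^*} \poly(m)^{i} \cdot 2^{-(11 + 20\eps) i \log k},$$
can be driven below $1/m^2$ by taking $I^*$ sufficiently large (while still $O(\log m/\log k)$), which establishes Lemma \ref{lemma:good}.

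The main obstacle will be the reverification step: one must confirm that Feige's counting of the dependency neighbourhood $|B_j \cap \Gamma(B)|$ still goes through with the slightly smaller $x$-values and delivers the stronger $\prob{B}^{1-\eps}$ bound. The calculation is mechanical — Feige's counting is dominated by an $O(i \log k)$ term in $\sum_j |B_j \cap \Gamma(B)| \, x_j$ and is loose by constant factors — but it has to be performed carefully, because all of the quantitative slack we rely on is concentrated in this single step.
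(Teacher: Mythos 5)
Your proposal is correct and follows the paper's own route: both invoke Theorem~\ref{thm:polycoremain} with a core consisting of the ``small'' events $B_i$, $i\le I^*=O(\log m/\log k)$, argue polynomial size via the bounded degree $k\beta l$ of the set-intersection graph, and verify the core by enumerating connected subcollections. Two small notes: your bound $\log(1/\delta)=O(\log m\cdot\log k)$ should read $O(m\log k)$ (the minimum $x$-value is $2^{-\Theta(i_{\max}\log k)}$ with $i_{\max}$ up to the number of groups, not $\log$ of it) -- still $\poly(n)$, so nothing downstream changes; and you are right to flag that the exponential $\eps$-slack step deserves a careful re-derivation from Feige's counting, a point the paper's own proof actually silently absorbs by taking~(\ref{eqn:feige-santa}) at face value, where the factor $2^{-10i\log k}$ of room between $\prob{B}\le 2^{-20i\log k}$ and $x_i$ already supplies roughly $\eps\approx 1/2$ without changing the $x$-values.
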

\begin{proof}
Note from (\ref{eqn:feige-santa}) that we can take $\delta = 2^{-20 m \log{k}}$. So, we get that $\log 1/\delta = O(m \log k) = O(n \log n)$ where $n$ is the number of items and $m \leq n$ is the number of children. We furthermore get that all events with probability larger than a fixed inverse-polynomial involve
only connected subsets of size $O(\frac{\log m}{\log k})$ and Theorem \ref{thm:polycoremain} implies that there are only polynomially many such ``high'' probability events. (This can also be seen directly since the degree of a subset
is bounded by $k\beta l \leq 6k^2$ and the number of connected subcollections is therefore at most $(6k^2)^{O(\frac{\log m}{\log k})} = m^{O(1)}= n^{O(1)}$.) The connected collections of subsets are easy to enumerate
using, e.g., breadth-first search and are therefore efficiently verifiable (in fact, even in parallel). Theorem \ref{thm:polycoremain} thus applies and directly proves the lemma.
\end{proof}

\subsubsection{Mapping the solution of the final reduced system back to the original system}

By repeatedly applying algorithms to produce Reduce-l or Reduce-k system, we can
completely reduce down the original system to a system with a constant number of children per group, where
$\beta$ can increase from $3$ to at most $6$ due to Lemma Reduce-l.
This involves at most $\log m$ Reduce-l reductions and at most $\log n$ Reduce-k reductions.
We can furthermore assume that $n < 2^m$ since otherwise simply all combinations of one child per group could be tried in time polynomial in $n$.
Since, each Reduce-l or Reduce-k operation produces a desired solution with probability at least $1-\frac{1}{m^{2}}$, by union bound,
with probability at least $1 - O(\log n \log m/m^2) = 1 - O(\log m / m)$ a final $(k,l,\beta)$-system is produced that is $\gamma$-good for some
constant $\gamma$ by Lemma \ref{lemma:constant}. Using Lemma \ref{lemma:constant}, we can also find a $\gamma$-good selection of children. Now, once
one child from each group is selected, we can construct a standard network flow instance to assign items to these chosen children (Lemma \ref{lemma:assign}).
This finishes the process of mapping back a solution of the reduced system to the original $(k,l,\beta)$-system. While checking whether
an individual reduction failed seems to be a NP-hard task, it is easy to see in the end whether a good enough assignment is produced. This enables us to rerun the algorithm in the unlikely event of a failure. Thus, the Monte Carlo algorithm can be strengthened to an algorithm that always produces a good solution and has an expected polynomial running-time.

The details of the above are given in two lemmas, Lemma \ref{lemma:constant-new} and Lemma \ref{lemma:assign}.
Theorem \ref{theorem:santa} follows from the two lemmas.

Suppose we start with a $(k_1, l_1, \beta_1)$-system and after repeated application of either
Lemma Reduce-l or Lemma Reduce-k reach at a $(k_{s}, l_{s}, \beta_{s})$-system, where
$l_{s} < c$, a constant. We then employ Lemma \ref{lemma:constant} to obtain a $\gamma_{s}$-good
$(k_s, l_s, \beta_{s})$-system, where $\gamma_{s}$ satisfies
$\gamma_{s} k_{s} =\lfloor \frac{k_{s}}{\lceil \beta_{s} l_{s} \rceil} \rfloor$. Since $l_{s}$ is a constant
and $\beta_{s} \leq 6$, $\gamma_{s}$ is also a constant. Lemma \ref{lemma:constant} also gives a choice of a child
from each group, denoted by a function $f : \{1,\ldots,p\} \rightarrow \{1, \ldots, l_{s}\}$ that serves as
a witness for $\gamma_{s}$-goodness of $(k_{s}, l_{s}, \beta_{s})$-system. We use this same mapping for the original system.
The following lemma establishes the goodness of the $(k_1,l_1,\beta_{1})$-system.

\begin{lemma}
\label{lemma:constant-new}
Given a sequence of reductions of $k$, $(k_1,l_1,\beta_1)\rightarrow \ldots \rightarrow (k_{s}, l_{s}, \beta_{s})$,
interleaved with reductions of $l$, let for all $s \geq 2, \gamma_{s}=\gamma_{s-1}(1+\frac{3\log{k_{s-1}}}{\sqrt{\gamma_{s-1} k_{s-1}}})$.
Then if the final reduced system is $\gamma_{s}$-good and the function $f: \{1,\ldots,p\} \rightarrow \{1,\ldots, l_{s}\}$
serves as a witness for its $\gamma_{s}$-goodness, then $f$ also serves as a witness of $\gamma$-goodness of $(k_1,l_1,\beta_1)$ system
with high probability. In other words, we can simply use the assignment given by $f$ to select one child from each group and that assignment serves
 as a witness of $\gamma$-goodness of the original system with high probability.
\end{lemma}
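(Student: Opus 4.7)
The plan is to proceed by induction on the number of reductions in the sequence $(k_1,l_1,\beta_1)\to\cdots\to(k_s,l_s,\beta_s)$, with the base case $s=1$ being immediate from Lemma~\ref{lemma:constant}. For the inductive step I would treat the two reduction types separately, and in each case argue that if the \emph{same} function $f$ picks a child per group, the Hall-type certificate at the end of the chain pulls back to a Hall-type certificate at the previous level.

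For a Reduce-$l$ step from $(k_i,l_i,\beta_i)$ to $(k_{i+1}=k_i,\,l_{i+1}\leq \log^5 l_i,\,\beta_{i+1})$, the reduced system is constructed by sampling a sub-collection of children within each group; hence there is a natural injection $\iota_i$ from the children of the reduced system into those of the original. A witness $f$ in the reduced system lifts to $\iota_i\circ f$ in the previous one, and since each child carries the exact same $k_i$-item set, the value of $\gamma$ does not change in this step.

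For a Reduce-$k$ step from $(k_i,l_i,\beta_i)$ to $(k_{i+1}\leq k_i/2,\,l_i,\,\beta_i)$, I would condition on the high-probability event from Lemma~\ref{lemma:good} that none of the bad events $B_1,B_2,\ldots$ fire in this randomized step. Given a witness $f$ for $\gamma_{i+1}$-goodness of the reduced system, Hall's theorem (Lemma~2.7 of \cite{feige:soda08}) says that for every $j$, every connected subcollection of $j$ chosen sets contains at least $j\gamma_{i+1}k_{i+1}$ surviving items. The absence of the bad event $B_j$ is exactly the contrapositive that forces any such subcollection to have had strictly more than $j\gamma_i k_i$ items in the pre-reduction system. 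Applying Hall's theorem in the converse direction then produces an assignment of $\gamma_i k_i$ distinct items per chosen set in the pre-reduction system, which is precisely what it means for $f$ to witness $\gamma_i$-goodness there.

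The only remaining point is bookkeeping the overall success probability and the growth of $\gamma$. There are at most $\log n$ Reduce-$k$ and $\log m$ Reduce-$l$ randomized phases, and each succeeds (avoids all its bad events) with probability $1-O(1/m^2)$ by Lemma~\ref{lemma:good} and the analysis of the Reduce-$l$ step from the algorithm in the preceding subsection; a union bound gives overall success probability $1-O(\log n\log m/m^2)$, which is the ``with high probability'' in the statement. The telescoping $\gamma_s=\gamma_{s-1}(1+3\log k_{s-1}/\sqrt{\gamma_{s-1}k_{s-1}})$ already appears in \cite{feige:soda08} and converges to a constant because $k$ halves at each Reduce-$k$ step, so the error factor is a geometric series. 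I expect the inductive framework to be routine; the genuinely new content, relative to Feige's non-constructive argument, is just the simultaneous avoidance of all bad events provided by Lemma~\ref{lemma:good}, which is why I would formulate the induction in a way that isolates Hall's theorem as a black box and feeds the constructive LLL core only into the randomized reduction steps.
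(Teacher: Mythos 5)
Your proposal matches the paper's argument: both identify that a failure of $f$ to pull back as a $\gamma_{s-1}$-goodness witness is exactly the firing of some bad event $B_j$ (via the Hall-type characterization of Lemma~2.7 of Feige), and both then apply Lemma~\ref{lemma:good} per Reduce-$k$ phase and union-bound over the $O(\log n)$ phases. The one bookkeeping step you omit is the paper's observation that one may assume $n \leq 2^m$ without loss of generality (otherwise $f$ can be guessed by exhaustive enumeration and verified via Lemma~\ref{lemma:assign}); without this the bound $1 - O(\log n \log m / m^2)$ need not be ``high probability'' when $n$ is super-polynomial in $m$.
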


\begin{proof}
Suppose there exists a function $f$ that serves as a witness for
$\gamma_{s}$-goodness of the $(k_s,l_s,\beta_{s})$-system, but does not
serve as a witness that $(k_{s-1},l_{s-1},\beta_{s-1})$-system is $\gamma_{s-1}$-good. Then there must exist  a connected
collection of $i, i >0$ sets chosen from $p$ groups according to $f$, such that their union contains less than $\gamma_{s-1}k_{s-1}i$
items. However in the reduced system, their union has $\gamma_{s}k_{s-1}i$ elements. Call such a function $f$ bad. Thus every
bad function is characterized by a violation of event of type
$B_{i}, i \geq 1$, described in Section~\ref{subsec:reducek}.
However, by Lemma \ref{lemma:good} we have
$\prob{\exists \text{ a bad function } f }\leq
\prob{ \text{an event of type } B_{i}, i \geq 1 \text{ happens}}
\leq \frac{1}{m^2}$.

Now the maximum number of times the Reduce-k step is applied is at most
$\log{k_1}\leq \log{n}$. Thus if the Reduce-l step is not applied at all,
then by a union bound, function $f$ is $\gamma$-good for the
$(k_1,l_1,\beta_1)$-system with probability at least
$1-\frac{\log{m}\log{n}}{m^2}$. We can assume
 without loss of generality that $n \leq 2^m$. (Otherwise in polynomial time we can guess the children who receive small items and thus know $f$.
 Once $f$ is known, an assignment of small items to the children chosen by $f$ can be done in polynomial time through Lemma \ref{lemma:assign}.)
 Since $n \leq 2^m$, function $f$ is $\gamma$-good for the
$(k_1,l_1,\beta_1)$-system with probability at least $1-\log{m}/m$. Now since
the Reduce-l step only reduces $l$ and keeps $k$ intact, it does not affect the goodness of the set system.
\end{proof}

Once we know the function $f$, using Lemma \ref{lemma:assign}, we can get
a valid assignment of $\floor{k\gamma}$ items to each chosen child:
\begin{lemma}
\label{lemma:assign}
Given a function $f:\{1,\ldots, p\} \rightarrow \{1,\ldots,l\}$, and parameter $\gamma$, there is a polynomial time algorithm to determine,
whether $f$ is $\gamma$-good and we can determine the subset of $\floor{k\gamma}$ items received by each child $f(i), i\in[1,p]$.
\end{lemma}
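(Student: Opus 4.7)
The plan is to reduce the problem to a single max-flow computation on a polynomially-sized bipartite-style network. Let $S_{i,j}$ denote the set of $k$ items associated with the $j$-th child of group $i$, so that the chosen child in group $i$ is associated with the set $S_{i,f(i)}$. The goal is to decide whether it is possible to select a subset $T_i \subseteq S_{i,f(i)}$ of size exactly $\floor{k\gamma}$ for each $i \in \{1,\ldots,p\}$ such that the $T_i$'s are pairwise disjoint, and to exhibit such a collection if it exists.

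To do this I would construct a directed flow network with a source $s$, a sink $t$, one node $u_i$ for each group $i \in \{1,\ldots,p\}$, and one node $v_j$ for each item $j$ in the universe. Put a capacity-$\floor{k\gamma}$ arc from $s$ to each $u_i$, a capacity-$1$ arc from $u_i$ to $v_j$ for every item $j \in S_{i,f(i)}$, and a capacity-$1$ arc from each $v_j$ to $t$. Then compute a maximum $s$-$t$ flow using any standard polynomial-time algorithm. The function $f$ is $\gamma$-good if and only if the value of this max flow equals $p \cdot \floor{k\gamma}$: integrality of max flow on a network with integer capacities guarantees that an integral max flow exists, and such a flow saturating all the source arcs directly yields the required disjoint item subsets $T_i$ (the items sent through $u_i$ form $T_i$).

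The equivalence between max-flow value $= p \cdot \floor{k\gamma}$ and $\gamma$-goodness is essentially the content of Lemma 2.7 of \cite{feige:soda08} (a Hall-type condition): if no valid assignment exists, some connected subcollection of sets witnesses a Hall-violation, which in turn corresponds to a cut of value less than $p \cdot \floor{k\gamma}$ in the network, and conversely a saturating flow gives the disjoint assignment. Thus running the max-flow algorithm simultaneously decides $\gamma$-goodness and (when the answer is yes) outputs the $T_i$'s by reading off the flow values on the middle-layer arcs.

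There is no real obstacle here: the network has $O(p + N)$ nodes and $O(pk + N)$ arcs, where $N$ is the total number of items, so everything is polynomial in the input size. The only mild subtlety is to make sure we interpret ``determine whether $f$ is $\gamma$-good'' strictly via the flow value, rather than attempting a structural enumeration of connected subcollections; this keeps the running time polynomial even when the number of potential Hall-violating subcollections is exponential.
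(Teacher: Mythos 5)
Your construction is exactly the one the paper uses: the same three-layer flow network (source to group nodes with capacity $\floor{k\gamma}$, group nodes to item nodes with capacity $1$, item nodes to sink with capacity $1$), the same integrality-of-max-flow argument, and the same criterion of max-flow value equal to $p\cdot\floor{k\gamma}$. The only small superfluity is your appeal to Lemma 2.7 of Feige (the Hall-type condition): the paper proves the equivalence directly from the flow (a $\gamma$-good assignment routes a saturating flow; conversely an integral saturating flow reads off a disjoint assignment), which is both sufficient and simpler than passing through connected subcollections, but your argument is still correct.
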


\begin{proof}
We construct a bipartite graph with a set of vertices $U=\{1,\ldots,p\}$
corresponding to each chosen child from the $p$ groups, a set of vertices
$V$ corresponding to the small items in the sets of the chosen children,
a source $s$ and a sink $t$.
Next we add a directed edge of capacity $\floor{\gamma k}$ from source $s$ to each vertex in $U$. We also add directed edges $(u,v), u \in U, v \in V$, if the item $u$ belongs to the set of $v$. These edges have capacity $1$. Finally we add a directed edge from each vertex in $V$ to the sink $t$ with capacity $1$. We claim that this flow network has a maximum flow of $\floor{k\gamma}p$ iff $f$ is $\gamma$-good:

For the one direction let $f$ be $\gamma$-good. Thus there exists a set of $\floor{\gamma k}$ elements that can be assigned to each child $u \in U$. Send one unit of flow from each child to these items that it receives. The outgoing flow from each $u \in U$ is exactly $\floor{\gamma k}$. Since each item is assigned to at most one child, flow on each edge $(v,t), v \in V$ is at most $1$. Thus all the capacity constraints are maintained and the flow value is $\floor{\gamma k}p$.

For the other direction consider an integral maximum flow of $\floor{k\gamma}p$. Since the total capacity of all the edges emanating from the source is $\floor{k\gamma}p$, they must all be saturated by the maxflow. Since the flow is integral, for each child $u$ there are exactly $\floor{\gamma k}$ edges with flow $1$ corresponding to the items that it receives. Also since no edge capacity is violated, each item is assigned to exactly one child. Therefore $f$ is $\gamma$-good.

To check a function $f$ for $\gamma$-goodness and obtain the good assignment we construct the flow graph and run a max flow algorithm that outputs in an integral flow. As proven above a max flow value of $\floor{k\gamma}p$ indicates $\gamma$-goodness and for a $\gamma$-good $f$ the assignment can be directly constructed from the flow by considering only the flow carrying edges.
\end{proof}


\begin{theorem}
\label{theorem:santa}
There exists a constant $\alpha > 0$ and a randomized algorithm for the Santa Claus problem that runs in expected polynomial time and always assigns items of total valuation at least $\alpha \cdot \mathrm{OPT}$ to each child.
\end{theorem}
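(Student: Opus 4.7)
The plan is to assemble the pieces developed throughout Section \ref{sec:santa} into a single algorithm and bound its expected running time via a standard ``verify and retry'' argument. First, I would binary search over the candidate optimum value $T$ and, for each guess, solve the configuration LP (\ref{eqn:lp}); infeasibility rules out the guess, while feasibility lets me invoke Lemmas 5, 6 and 8 of \cite{bansal:stoc06} to reduce the instance to a $(k_1,l_1,\beta_1)$-system with $\beta_1\le 3$ at the cost of only a constant factor in the approximation ratio. This reduction is already polynomial-time and deterministic, so the focus is on producing a $\gamma$-good selection in the resulting set system.

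Next, I would run Feige's reduction chain, but replace each non-constructive step by the efficient implementations described above. Starting from $(k_1,l_1,\beta_1)$ and while $l>c$, apply Reduce-l when $l>k$ (implemented by the direct symmetric-LLL / Moser--Tardos argument, which is efficient because violated events are trivial to detect) and apply Reduce-k when $k\ge l$ (implemented by Lemma~\ref{lemma:good}, i.e., by Theorem~\ref{thm:polycoremain} on the polynomial-sized core of ``high-probability'' connected subcollections). As shown in Section \ref{subsec:reducek}, each Reduce-k invocation succeeds with probability at least $1-1/m^2$, and each Reduce-l invocation succeeds with comparable probability. The total number of reductions is at most $\log n + \log m = O(\log n)$, so a union bound gives that with probability $1-O(\log m/m)$ every reduction produces a system with the claimed parameters and $\beta$ never exceeds $6$.

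Once $l$ drops below the constant $c$, Lemma \ref{lemma:constant} produces, in polynomial time, a $\gamma_s$-good assignment together with a witness function $f:\{1,\dots,p\}\to\{1,\dots,l_s\}$ selecting one child from each group. Lemma \ref{lemma:constant-new} then asserts that the same $f$ is $\gamma_1$-good for the original system with high probability (using $n\le 2^m$, which we may assume since otherwise brute force over group-choices is already polynomial), and Lemma \ref{lemma:assign} converts $f$ into an explicit assignment of $\lfloor\gamma_1 k_1\rfloor$ small items to each chosen child via a single max-flow computation. Combined with the (completely flexible) assignment of big items guaranteed by Lemma 6 of \cite{bansal:stoc06}, this yields, for the correct guess of $T$, a feasible integral allocation whose minimum total valuation is $\Omega(T)$, hence a constant-factor approximation.

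The final ingredient is promoting the Monte Carlo algorithm to one with expected polynomial running time. Although detecting failure inside an individual Reduce-k step appears hard, detecting failure of the \emph{entire} pipeline is easy: once $f$ and the max-flow are computed, one simply checks whether every child receives items of total valuation at least $\alpha\cdot T$ for the target constant $\alpha$. If not, restart the whole procedure. Since the overall success probability is $1-O(\log m/m)\ge 1/2$ for $m$ large enough (and small $m$ can be handled by brute force), the expected number of restarts is $O(1)$ and the total expected running time is polynomial. The main obstacle is verifying that the parameter bookkeeping across the $O(\log n)$ reductions is consistent, in particular that the product of the constant losses from Lemma~8 of \cite{bansal:stoc06}, the growth of $\beta$ under Reduce-l, and the growth of $\gamma_s$ in Lemma~\ref{lemma:constant-new} all remain $\Theta(1)$, so that the final $\alpha$ is a positive constant; this, however, follows from the same bookkeeping already carried out non-constructively in \cite{feige:soda08}.
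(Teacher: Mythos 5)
Your proposal follows the paper's construction essentially step for step: LP, Bansal–Sviridenko reduction to a $(k,l,\beta)$-system, alternating constructive Reduce-$l$ (direct Moser–Tardos) and Reduce-$k$ (Theorem~\ref{thm:polycoremain} on the polynomial core), union bound over $O(\log n)$ reductions, mapping back via Lemmas~\ref{lemma:constant-new} and~\ref{lemma:assign}, and the final ``verify-and-retry'' wrapper exploiting that the end-to-end allocation is cheap to check even though individual reduction failures are not. The only minor inexactness is attributing a failure probability to Reduce-$l$: since its bad events are trivially verifiable, Moser–Tardos makes that step Las Vegas rather than Monte Carlo, so only the Reduce-$k$ steps contribute to the union bound — but this does not affect the conclusion.
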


\section{Non-repetitive Coloring of Graphs}
\label{sec:non-rep}

In this section, we give an efficient Monte-Carlo
construction for non-repetitive coloring of graphs.
Call a word (string) $w$ ``squarefree'' or ``non-repetitive'' if there does
not exist
any string of the form $w=xx$, where $x \neq \emptyset$.
Let us refer to graphs using the symbol $H$ instead of $G$, to not confuse
with our dependency graphs $G$.
Recall from Section \ref{sec:intro} that a $k$-coloring of the edges of $H$
is called \emph{non-repetitive} if the sequence of colors
along any path in $H$ is squarefree: i.e., we want a coloring in which
no path has a color-sequence of the form $xx$. (All paths here refer to
simple paths.) The smallest $k$ such that
$H$ has a non-repetitive coloring using $k$ colors is called
the {\em Thue number} of $H$ and
is denoted by $\pi(H)$.
The Thue number was first defined by Alon, Grytczuk, Hauszczak and Riordan
in \cite{alon:random02}: it is named after Thue who proved in 1906 that
if $H$ is a simple path, then $\pi(H)=3$ \cite{thue:1906}.
While the method of Thue is constructive, no efficient construction
is known for general graphs. Alon et al.\ showed through
application of the asymmetric LLL that $\pi(H)\leq c \Delta(H)^{2}$ for
some absolute constant $c$. Their proof was non-constructive.
The number of bad events is exponential. Not only that, checking whether a given
coloring is non-repetitive is coNP-Hard, even when the number of colorings is restricted to $4$
 \cite{marx:discrete09}. Thus checking if some ``bad event'' holds in
a given coloring is coNP-Hard.
Since the work of Alon et al.,
the non-repetitive coloring of graphs has received a good deal of attention
in the last few years
\cite{james:thr05, marcus:sigact02, jaroslaw:discrete, kundgen:08, bresar:07, alonG:08}.
Yet no efficient construction is known till date, except for some
special classes of graphs such as complete graphs, cycles and trees.

\subsection{Randomized algorithm for obtaining a non-repetitive coloring}
Suppose we are given a graph $H$ with maximum degree $\Delta$.
We first give the proof of Alon et al.\ which shows that
$\pi(H) \leq c\Delta^2$, and then show how to convert this proof directly
into a constructive algorithm (with the loss of a $\Delta^{\epsilon}$ factor
in the number of colors used):

\begin{theorem}[Theorem 1 of \cite{alon:random02}] \label{thm:nonrepcol}
There exists an absolute constant $c$ such that
$\pi(H) \leq c \Delta^{2}$
for all graphs $H$ with maximum degree at most $\Delta$.
\end{theorem}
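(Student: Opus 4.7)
The plan is to apply the asymmetric Lov\'{a}sz Local Lemma (Theorem~\ref{theorem:lll}) directly. Color each edge of $H$ independently and uniformly at random from a palette of $k = c\Delta^{2}$ colors, where $c$ is a sufficiently large constant to be fixed. For every simple path $P$ in $H$ of even length $2\ell$ (with $\ell \geq 1$), define the bad event $A_P$ that the color-sequence along $P$ has the form $xx$: i.e., the colors of the first $\ell$ edges agree, position by position, with those of the last $\ell$ edges. Avoiding every $A_P$ is exactly the non-repetitiveness condition, so it suffices to verify the LLL hypothesis. Since the $\ell$ required color-equalities are independent events, $\prob{A_P} = k^{-\ell}$.

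The underlying variables are the edge colors, so the variable-sharing graph has an edge between $A_P$ and $A_Q$ iff $P$ and $Q$ share at least one edge of $H$. To bound $|\Gamma(A_P)|$ by length, first count paths $Q$ of length $2\ell'$ containing a fixed edge $e$: choose one of the $2\ell'$ positions in $Q$ for $e$ (and one of the two orientations of $e$ within $Q$), then extend in each direction with at most $\Delta$ choices at each of the remaining $2\ell'-1$ vertices, giving at most $O(\ell' \Delta^{2\ell'-1})$ such paths. Summing over the $2\ell$ edges of $P$, at most $N(\ell,\ell') = O(\ell\ell'\Delta^{2\ell'-1})$ paths of length $2\ell'$ lie in $\Gamma(A_P)$. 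I set $x(A_P) = y^{\ell}$ for a single parameter $y = \Theta(\Delta^{-2})$ depending only on the length of $P$.

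With this choice the asymmetric LLL inequality for $A_P$ becomes
\[
k^{-\ell} \;\leq\; y^{\ell} \prod_{\ell' \geq 1} (1 - y^{\ell'})^{N(\ell,\ell')}.
\]
Using $\ln(1-t) \geq -2t$ for $t \in [0,1/2]$, the logarithm of the product is at least $-2 \sum_{\ell' \geq 1} N(\ell,\ell') \, y^{\ell'} = -O(\ell) \sum_{\ell' \geq 1} \ell' \Delta^{2\ell'-1} y^{\ell'}$. Taking $y = (2\Delta)^{-2}$ collapses this into a convergent geometric-type series bounded by $O(\ell/\Delta)$, so the product is at least $\exp(-O(\ell/\Delta))$, which for large $\Delta$ is the $\ell$-th power of a constant arbitrarily close to $1$. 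The desired inequality then reduces to $1/k \leq y \cdot \exp(-O(1/\Delta))$, which holds once $c$ is taken larger than $4 \exp(O(1/\Delta))$, i.e., larger than some absolute constant. The main obstacle is precisely this bookkeeping: one must calibrate $y$ so that the geometric decay $y^{\ell'}$ dominates the $\Delta^{2\ell'-1}$ blow-up in the neighborhood count and still leaves enough slack for the polynomial factor $\ell\ell'$; the factor $2$ hidden in $y = (2\Delta)^{-2}$ is exactly what buys this. Once the LLL condition holds uniformly in $\ell \geq 1$, Theorem~\ref{theorem:lll} guarantees that with positive probability no $A_P$ occurs, so $\pi(H) \leq c\Delta^{2}$.
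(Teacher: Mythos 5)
Your proof is correct and takes essentially the same route as the paper's: a uniform random edge-coloring, bad events indexed by even-length paths, the asymmetric LLL with $x$-values decaying geometrically in the path length, and the infinite product collapsed via $1-t \geq e^{-2t}$. The only differences are minor book-keeping choices (your neighborhood count $O(\ell\ell'\Delta^{2\ell'-1})$ is a factor $\Delta$ sharper than the paper's $4ij\Delta^{2j}$, and you take $y = 1/(4\Delta^2)$ where the paper takes $1/(2\Delta^2)$), which just changes the value of the absolute constant $c$.
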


\begin{proof}
Let $C = (2e^{16}+1) \Delta^2$. Randomly color each edge of $H$ with colors from $C$.
Consider the following types of bad events $B_{i}$, for $i \geq 1$:
``there exists a path $P$ of length $2i$, such that the second
half of $P$ is colored identically to its first half''.

We have for a path $P$ of length $2i, i \geq 1$,
$\prob{\text{P has coloring of the form xx} }= \frac{1}{C^{i}}$.
Also, a path of length $2i$ intersects at most $4ij\Delta^{2j}$ paths of
length $2j$. Thus, for any bad event $A$ of type $i$, we have
$\prob{A} = \frac{1}{C^{i}}$ and that each bad event of type $i$
share variables with at most $4ij\Delta^{2j}$ bad events of
type $B_j$. Set $x_{i}=\frac{1}{2^i \Delta^{2i}}$.
We have $(1-x_j)\geq e^{-2x_{j}}$; this, along with the fact that
$\sum_{j \geq 1} j/2^j = 2$, shows that
$$x_{i} \prod_{j} (1-x_{j})^{4ij\Delta^{2j}} \geq x_{i} e^{-8i  \sum_{j} x_{j}j\Delta^{2j}}\geq \frac{1}{2^i \Delta^{2i}}e^{-8i\sum_{j} \frac{j}{2^{j}}} = (2 e^{16} \Delta^2)^{-i}.$$
Since $C=(2e^{16}+1) \Delta^2$, the condition of the LLL is satisfied and
we are guaranteed the existence of such a non-repetitive coloring.
\end{proof}

Now we see that using just a slightly higher number of colors suffices to make Theorem \ref{thm:polycoremain} apply.

\begin{theorem}
There exists an absolute constant $c$ such that for every constant
$\eps > 0$ there exists a Monte Carlo algorithm that given a graph $H$
with maximum degree $\Delta$, produces a non-repetitive coloring using
at most $c \Delta^{2+\eps}$ colors. The failure probability of the algorithm
is an arbitrarily small inverse polynomial in the size of $H$.
\end{theorem}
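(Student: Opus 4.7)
The plan is to apply Theorem~\ref{thm:polycoremain} to essentially the same family of bad events $B_i$ used in Theorem~\ref{thm:nonrepcol}, but with slightly more colors $C = c \Delta^{2+\eps}$ for a sufficiently large constant $c = c(\eps)$, so that the LLL-conditions hold with an exponential $\eps$-slack. The only two things to establish are (i) the slack condition of Theorem~\ref{thm:polycoremain}, and (ii) that the high-probability events form a polynomial-sized, efficiently verifiable core $\A'$.

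For (i), I would redo the calculation in the proof of Theorem~\ref{thm:nonrepcol} with $x_i := (2\Delta^{2+\eps})^{-i}$ (or a similar geometrically decreasing choice) and verify that
$$\prob{B}^{1-\eps} \leq x(B)\prod_{A \in \Gamma(B)}(1-x(A))$$
holds for every bad event $B \in \bigcup_i B_i$. Since $\prob{B_i} = 1/C^i$, raising the left-hand side to the power $1-\eps$ gains an extra factor of $C^{\eps i}$, which more than offsets the small change in $x_i$ once $c$ is taken large enough in terms of $\eps$; the estimate of $\prod(1-x_j)^{4ij\Delta^{2j}}$ goes through unchanged because the sum $\sum_j j \, x_j \Delta^{2j}$ still converges. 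It is also immediate from the exponential form of the $x_i$ that $\delta = \min_i x_i \prod(1-x_j)^{\deg}$ is bounded below by $\exp(-\poly(n))$, so the hypothesis $\log(1/\delta) \leq \poly(n)$ of Theorem~\ref{thm:polycoremain} is met.

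For (ii), by Theorem~\ref{thm:polycoremain} the core can be taken to be $\A' := \{B \in \A : \prob{B} \geq p\}$ for some $p = 1/\poly(n)$. Since $\prob{B_i} = 1/C^i$, membership in $\A'$ forces $i \leq L$ with $L = O(\log n / \log \Delta)$. The number of simple paths of length $\leq 2L$ in a graph of maximum degree $\Delta$ is at most $n \cdot \Delta^{2L} = n^{O(1)}$, so $|\A'| = \poly(n)$. These short paths can be enumerated by depth-limited DFS from every vertex, and for each one, checking whether its color sequence has the form $xx$ is trivial; hence $\A'$ is efficiently verifiable (indeed this is easily parallelizable, which would also yield an RNC variant). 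Invoking Theorem~\ref{thm:polycoremain} then produces the desired Monte Carlo algorithm terminating after $O(\tfrac{n}{\eps^2}\log\tfrac{n}{\eps^2})$ resamplings, with failure probability an arbitrarily small inverse polynomial in $|V(H)|$.

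The main obstacle is the joint bookkeeping in step (i)--(ii): the constant $c$, the choice of $x_i$, the slack exponent $\eps$, and the polynomial threshold defining $\A'$ must all be balanced so that the $\eps$-slack LLL-condition holds \emph{and} the cutoff $L$ is still only $O(\log n / \log \Delta)$. The reason this works out cleanly is precisely that $\prob{B_i}$ decays geometrically in $i$, so a logarithmic cutoff suffices to capture every event of inverse-polynomial probability, while the exponential decay of $x_i$ makes the LLL-sum over neighbors tolerate the mild enlargement from $\Delta^2$ to $\Delta^{2+\eps}$.
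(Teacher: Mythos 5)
Your overall strategy is right and matches the paper's: apply Theorem~\ref{thm:polycoremain} to the same bad events $B_i$ with slightly more colors, get an exponential $\eps$-slack, and observe that the poly-probability core consists of paths of length $O(\log n / \log\Delta)$, which can be enumerated by BFS/DFS. Step (ii) is fine.

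However, your specific choice $x_i := (2\Delta^{2+\eps})^{-i}$ in step (i) does not satisfy the slack condition with $C = c\Delta^{2+\eps}$, and the claim that ``the extra factor of $C^{\eps i}$ more than offsets the small change in $x_i$ once $c$ is large enough'' is false. With your $x_j$, one still gets $x_i\prod_j(1-x_j)^{4ij\Delta^{2j}} \geq (2e^{16}\Delta^{2+\eps})^{-i}$ (indeed the product is even closer to $1$), so the slack condition boils down to $(c\Delta^{2+\eps})^{1-\eps} \geq 2e^{16}\Delta^{2+\eps}$, i.e., $c^{1-\eps} \geq 2e^{16}\,\Delta^{\eps(2+\eps)}$. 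Since $\Delta$ is unbounded, no absolute constant $c$ can satisfy this. The problem is that you shrank $x_i$ by a factor $\Delta^{-\eps i}$ exactly when the slack condition already requires the right-hand side to be \emph{larger}, and the gain from $C^{\eps}$ on the left is eaten up by the factor $C^{-1}$ lurking in the comparison.

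The fix, and what the paper actually does, is to keep the \emph{original} values $x_i = (2\Delta^2)^{-i}$ unchanged; the dependency structure (hence the exponents $4ij\Delta^{2j}$ and the product bound $\prod_j(1-x_j)^{4ij\Delta^{2j}} \geq e^{-16i}$) does not depend on the number of colors. Then, writing $C' = C^{1/(1-\eps')}$ with $C = (2e^{16}+1)\Delta^2$, the event probabilities become $\prob{B_i} = (C')^{-i}$ so that $\prob{B_i}^{1-\eps'} = C^{-i}$, and the original inequality $C^{-i} \leq x_i\prod(1-x_j)^{\deg}$ \emph{is} the $\eps'$-slack condition verbatim. The resulting color count is $C' = C^{1/(1-\eps')} = \Theta(\Delta^{2/(1-\eps')})$, so picking $\eps' = \eps/(2+\eps)$ gives exactly $C' \leq c\Delta^{2+\eps}$ for an absolute constant $c$. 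In short: the slack exponent $\eps'$ in Theorem~\ref{thm:polycoremain} and the exponent bump $\eps$ in the color bound $\Delta^{2+\eps}$ are \emph{not} the same parameter, and conflating them (together with changing $x_i$) is where your argument breaks.
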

\begin{proof}
We apply the LLL using the same random experiments and bad events as in
Theorem~\ref{thm:nonrepcol} but with $C' = C^{\frac{1}{1-\eps'}}$ colors
such that $C' < c \Delta^{2+\eps}$. Using the same settings for $x_A$
gives an exponential $\eps'$ slack in the LLL-conditions since the
probability of a bad event of type $i$ is now at most
$\frac{1}{C'^{i}} = \left(\frac{1}{C'^{i}}\right)^{\frac{1}{1-\eps'}}$.
Recall Theorem \ref{thm:polycoremain}.
Clearly, $\log 1/\delta = O(n^2)$ and so the last thing to check to apply
Theorem \ref{thm:polycoremain} is that for any inverse polynomial $p$, the bad events
with probability at least $p$ are efficiently verifiable. Here these events
consist of paths smaller than a certain length (of the form
$O((1/\epsilon) \log n / \log \Delta)$, where $n$ is the number
of vertices), and Theorem \ref{thm:polycoremain}
guarantees that there are only polynomially many of these. Using
breadth-first-search to go through these paths and checking each of
them for non-repetitiveness is efficient and thus
Theorem \ref{thm:polycoremain} directly applies.
\end{proof} 
\section{Ramsey-type Bounds}
\label{sec:ramsey}
In this section, we briefly sketch another application of our method,
namely the construction of Ramsey-type graphs.

The Ramsey number $R(K^s,K^t)$ is the smallest number $\ell$ such that
for any $n \geq \ell$ and in any
red-blue coloring of the edges of $K^{n}$, there either exists a $K^s$ with all red edges or a $K^t$ with all blue edges.
Here, $K^{a}$ for any integer $a$ denotes a clique of size $a$ as usual.
The fact that these numbers are finite for all $s,m$ is a special case of
Ramsey's well-known theorem (see e.g. \cite{ramsey:book}).  In one of the first
applications of probabilistic methods in combinatorics Erd$\ddot{\text{o}}$s showed the lower bound
of $R(K^m,K^m)=\Omega(m2^{m/2})$
\cite{Erdos47}. Since, then obtaining lower bounds on $R(K^s,K^t)$ and constructing Ramsey graphs
avoiding ``large'' cliques as well as ``large'' independent sets
simultaneously has attracted much attention
\cite{alon94,alon99,kriv2,alonkriv}. The case of fixed $s$ is the main
example case for {\em off-diagonal} Ramsey numbers. Alon and Pudl\'{a}k gave
an explicit deterministic construction for off-diagonal Ramsey graphs in \cite{alon94}.
They showed constructively for some $\epsilon > 0$,
$R(K^s,K^t) \geq t^{\epsilon\sqrt{\log{s}/\log{\log{s}}}}$. The
best known bound for $R(K^s,K^t)$ can be obtained using
LLL, $R(K^s,K^t) =O\left(\frac{t}{\log{t}}\right)^{\frac{s+1}{2}}$ \cite{alonkriv}.
Krivelevich  gave a Monte Carlo algorithm matching this bound
through large deviation inequalities \cite{kriv2}. In addition, Krivelevich
considered related Ramsey type problems, for example, he showed that there
exists a $K^{4}$-free graph on $n$ vertices in which any $o(n^{3/5}\log^{1/2}{n})$ set of vertices
does not contain a $K^{3}$. The problem of finding constructions for Ramsey type graphs matching the best known bounds is
of great interest and may have algorithmic applications as well.

Using our method, we can achieve the best known bound for off-diagonal Ramsey numbers, that is, for fixed $s$, by directly
making the LLL-based proof \cite{alonkriv} constructive. More importantly,
we can provide randomized (Monte Carlo) constructions of graphs on $n$
vertices of the form: ``there is no subgraph $U$ in any set of $s$ vertices
and no subgraph $W$ in any set of $t$ vertices'', where $t$ can
be large, typically $n^{\Theta(1)}$ -- the existence of which can be
proved using the LLL (often using appropriate random-graphs $G(n,p)$).
When $U=K^{s}$, $W=K^{t}$ and $s$ is fixed, we get the off-diagonal
Ramsey number. We refer to these as general Ramsey-type graphs. This is a direct generalizations of
related Ramsey type problems considered by Krivelevich \cite{kriv2}.

When $U$ and $W$ are some special subgraphs, few results are known.
As mentioned earlier Krivelevich considered the case where $U=K^{4}$, $W=K^{3}$,
$s=4$, and $t=o(n^{3/5}\log^{1/2}{n})$. In addition, he also showed constructions for arbitrary
$U$, but when $W=K^{t}$ \cite{kriv2}. Alon  and  Krivelevich in
\cite{alonkriv} and Krivelevich in \cite{kriv2} considered a Ramsey-type bound
$R'(K^s,K^{r}(t))$: the smallest number $\ell$ such that for any
$n \geq \ell$, any graph
on $n$ vertices either contains a $K^{s}$ or there exists a set of
$t$ vertices containing a $K^{r}$. When $r=2$, $R(K^s,K^t)=R'(K^s,K^{2}(t))$. However, to the best of
our knowledge, no general algorithmic result avoiding any subgraph $U$ and $W$ on
any set of
$s$ and $t$ vertices respectively is known till date.

Briefly, the idea is as follows. Suppose, as in the typical
existence-proofs for such graphs, we are able to show using the
(asymmetric) LLL that for a suitable $p = p(n)$, the random graph with $n$
vertices and independent edge-probability $p$, satisfies all the required
properties with positive probability. Theorem~\ref{thm:polycoremain} will typically
immediately apply if we allow an exponential $\eps$-slack. When $s$ is fixed,
another related approach is to apply Theorem~\ref{thm:core} and to only
verify the events that correspond to $s$-sized subgraphs; since $s$ is
fixed, these can be enumerated and verified efficiently. Note that as pointed out
in \cite{alon-spencer}, the LLL may be much more significant in improving the
bounds with fixed $s$ and this is generally the case of interest while applying the LLL based
arguments.

\section{Acyclic Edge Coloring}
\label{sec:acyclic}
Given a graph $G$, an acyclic edge coloring is a proper edge coloring of $G$ where each cycle receives more than $2$ colors. In a proper edge coloring, no two incident edges receive the same color. In addition, here we require that no cycle receives only $2$ colors. The goal is to use a minimum number of colors  (known as the acyclic chromatic number $a(G)$) and obtain an acyclic
edge coloring. The concept of $a(G)$ was introduced way back in 1973 \cite{baum:acyclic} and has been studied by a series of researchers over the years \cite{alon:acyclic,molloy:lll,alon:acyclic1,baum:acyclic,muthua:acyclic}. In all these works, the asymmetric LLL is applied to achieve the best non-constructive bounds. Thus an algorithmic version of the local lemma strikes as the first choice to obtain an acyclic edge coloring.

Alon, McDiarmid and Reed \cite{alon:acyclic} showed that $a(G) < 64\Delta$, where $\Delta$ is the maximum degree. The constant was later improved to $16$ by Molloy and Reed \cite{molloy:lll}, but the proof still was non-constructive.
Both the methods are essentially the same: \emph{ randomly color each edge from a pool of colors $\{1,2,\ldots,C\}$}. They define a series of bad events, where \emph{Type 1} bad event corresponds to two incident edges $e,f$ receiving the same color and \emph{Type $k$} bad event implies a cycle of length $2k$ getting $2$ colors. A cycle of odd length automatically gets $3$ colors, if the coloring is proper. Note that the number of Type $k$ events for non-constant $k$ is super-polynomial in the number of edges of $G$. The probability of Type $1$ event is $1/C$ and the probability of Type $k$ event is $1/C^{2(k-1)}$. Let $\Delta$ be the maximum degree of $G$. It is now an easy exercise to verify that each Type $k$ event depends on at most $4k\Delta$ Type $1$ events and $2k\Delta^{2(l-1)}$ Type $l$ events. With this dependency, setting $C=16\Delta$, $x_{e}=\frac{2}{C}$ for each edge $e$ and $x_{k}=\left(\frac{2}{C}\right)^{2(k-1)}$ for each cycle of length $2k$ satisfies the asymmetric LLL condition \ref{theorem:lll}. We can turn this proof to an algorithm using $16\Delta$ colors as a direct corollary of Theorem \ref{thm:runningtime}.

\begin{theorem} \label{thm:acyclic1}
There is a randomized algorithm that produces a valid acyclic coloring of any graph with $n$ edges and maximum degree $\Delta$ in expected polynomial time using $16\Delta$ colors.
\end{theorem}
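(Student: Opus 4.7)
The proof plan is a direct application of Theorem \ref{thm:runningtime}: the LLL conditions have already been verified in the paragraph preceding the theorem (with $C = 16\Delta$, $x_e = 2/C$ for edge events, and $x_k = (2/C)^{2(k-1)}$ for cycle-of-length-$2k$ events). Only two things remain: (a) showing that a violated bad event can be found efficiently, even though the number of Type $k$ events is super-polynomial in $n$; and (b) verifying that $\log(1/\delta) = \poly(n)$ so that the running-time bound in Theorem \ref{thm:runningtime} is polynomial.

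For (a): given any coloring of the $n$ edges, Type 1 violations are detected by scanning each vertex's incident edges for color collisions in $O(n\Delta)$ time. For Type $k$ violations, the crucial observation is that a bichromatic $2k$-cycle lives entirely inside the subgraph induced by edges of some \emph{pair} of colors. Thus it suffices to enumerate the $\binom{C}{2} = O(\Delta^2)$ unordered color pairs and, for each, run standard cycle-detection (e.g., DFS) on the corresponding two-colored subgraph in $O(n)$ time. This either returns a Type $k$ violation (of some length $2k$ read off the detected cycle) or certifies that no bichromatic cycle exists, all in $O(n\Delta^2)$ total time. Therefore the collection of Type 1 and Type $k$ events is efficiently verifiable in the sense of Definition \ref{defn:verifiable}, and we can implement each MT resampling step in polynomial time.

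For (b): the smallest $x$-value is $x_n = (2/C)^{2(n-1)}$ (attained for the longest possible even cycle), and by the LLL verification the quantity $x(A)\prod_{B \in \Gamma(A)}(1 - x(B))$ is at least $\prob{A} \geq 1/C^{2(n-1)}$ for every bad event $A$. Consequently $\log(1/\delta) = O(n \log \Delta) = \poly(n)$. Theorem \ref{thm:runningtime} then bounds the expected number of resamplings by $O(n \log(1/\delta)) = O(n^2 \log \Delta)$, and combined with the polynomial cost of each resampling step this yields an expected polynomial running time using $16\Delta$ colors. The only mild obstacle is cleanly stating the efficient-verifiability argument; beyond that, the theorem follows mechanically from the already-established LLL verification and from Theorem \ref{thm:runningtime}.
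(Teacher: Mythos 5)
Your proposal is correct and follows essentially the same approach as the paper: bound $\log(1/\delta) = O(n\log\Delta)$ via the probability of the longest bichromatic cycle, observe that violated events are efficiently findable by checking the $O(\Delta^2)$ bichromatic subgraphs for cycles, and invoke Theorem~\ref{thm:runningtime} with $\epsilon = 0$. (Minor slip: the longest even cycle has length at most $n$, so the smallest $x$-value is $x_{\lfloor n/2\rfloor}$, not $x_n$, but this does not affect the $O(n\log\Delta)$ bound on $\log(1/\delta)$.)
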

\begin{proof}
Recall from Section \ref{sec:algframework} that $\delta$ needs to be at least as high as the smallest upper bound on a probability of a bad event which is $1/C^{2(k-1)}$ for an event of Type $k$. This gives that $\log 1/\delta$ is at most $O(n \log \Delta)$. Further, it is easy to check for a violated bad event in $O(\Delta^2 n)$ time: consider subgraphs on every pair of colors and check if there is a cycle in it. Therefore Theorem \ref{thm:runningtime} directly applies (we can set $\epsilon=0$ in it) and we get that the expected running time is $O(\Delta^2 n) \cdot n \cdot O(n \log{\Delta}) = O(n^3 \Delta^2 \log{\Delta})$. Note that this is far from tight. We can, for example, exploit that there is already a $\eps$-slack in the analysis to get a smaller number of resamplings from Theorem \ref{thm:runningtime}.
\end{proof}

Whereas this gives an efficient way to obtain acyclic edge coloring using $16\Delta$ colors and thus matching the bound known non-constructively so far; the conjectured bound for $a(G)$ is $\Delta +2$. Alon, Sudakov and Zaks showed indeed the conjecture is true for graphs having girth $\Omega(\Delta \log{\Delta})$ \cite{alon:acyclic1}. Their algorithm can be made constructive using Beck's technique \cite{beck:lll} to obtain an acyclic edge coloring using $\Delta +2$ colors, albeit for graphs with girth significantly larger than $\Theta(\Delta \log{\Delta})$. We bridge this gap by providing constructions to achieve the same girth bound as in \cite{alon:acyclic1}, yet obtaining an acyclic edge coloring with only $\Delta+2$ colors.

The proof of Alon, Sudakov and Zaks again relies on the asymmetric LLL, but their procedure for random coloring is different from \cite{alon:acyclic, molloy:lll}. They first perform a proper coloring of the edges of $G$ using $\Delta+1$ colors \cite{vizing}. Next each edge is switched to color $\Delta+2$ with probability $1/32\Delta$. Three types of bad events are defined. In type $1$ events, two incident edges $e,f$ are colored with $(\Delta+2)$th color. Type $2$ events correspond to the case where no edge of a previously bichromatic cycle switches its coloring to the $(\Delta+2)$th color. Type $3$ events correspond to the case where a cycle with half of its edges (every other edge) having the same color after the first step, receives $(\Delta+2)$th color on half of its remaining edges resulting in a bichromatic cycle. It is sufficient to avoid these three types of events to ensure an acyclic edge coloring. It has been shown in \cite{alon:acyclic1} that by setting the values of $x$ variables to $1/512\Delta^2$, $1/128\Delta^2$ and $1/2^k\Delta^k$ for events of Type 1, 2, and 3 (with cycles of length $2k$), conditions of the asymmetric LLL (Theorem \ref{theorem:lll}) satisfy. Converting this non-constructive proof into an algorithm using our method is an easy exercise. We state the theorem below, whose proof is similar to Theorem \ref{thm:acyclic1} above, and is left as an exercise.

\begin{theorem} \label{thm:acyclic2}
There is a randomized algorithm that produces a valid acyclic coloring in expected polynomial time using $(\Delta+2)$ colors for graphs having girth $\Omega(\Delta \log{\Delta})$.
\end{theorem}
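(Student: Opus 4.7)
The plan is to follow the template of Theorem \ref{thm:acyclic1}, just invoking Theorem \ref{thm:runningtime} on the random experiment and the three types of bad events defined by Alon, Sudakov and Zaks \cite{alon:acyclic1}. First I would fix the initial proper $(\Delta+1)$-edge-coloring via Vizing's algorithm, and let the independent random variables be the indicators $X_e \in \{0,1\}$ that determine whether each edge $e$ is switched to the new color $\Delta+2$ (each with probability $1/(32\Delta)$). Then I would set up the three event families (Type 1 collisions between incident edges at the new color; Type 2 previously bichromatic cycles surviving untouched; Type 3 ``alternating'' cycles turning bichromatic after the switch) with the $x$-values $1/(512\Delta^2)$, $1/(128\Delta^2)$ and $1/(2^k\Delta^k)$ respectively, exactly as in \cite{alon:acyclic1}; this is where the girth assumption $\Omega(\Delta \log \Delta)$ ensures the LLL inequality \eqref{eqn:alll} holds.

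The next step is to verify the two hypotheses needed to apply Theorem \ref{thm:runningtime}. For the first, I need $\log(1/\delta) = \poly(n)$. The smallest $x(A)\prod_{B \in \Gamma(A)}(1-x(B))$ is attained for Type 3 events on the longest cycle, but even for a cycle of length $2k$ with $k = O(n)$ we have $x_A = (2\Delta)^{-k}$, while the neighborhood product contributes only a constant factor; thus $\log(1/\delta) = O(n \log \Delta)$, well within polynomial. For the second, I need efficient detection of a currently-violated bad event: Type 1 is trivially checked by scanning incident edge pairs in $O(n\Delta)$ time, and for Types 2 and 3 it suffices to look at each of the $\binom{\Delta+2}{2}$ two-color subgraphs and test (in linear time) whether it contains a cycle. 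Thus a violated event, if one exists, is found in $\poly(n,\Delta)$ time.

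With these two conditions verified, Theorem \ref{thm:runningtime} (taking $\eps=0$, or the small constant slack already present in the Alon--Sudakov--Zaks calculation) bounds the expected number of resamplings by $O(n \log(1/\delta)) = O(n^2 \log \Delta)$. Multiplying by the per-step verification cost gives an overall expected polynomial running time, and when the algorithm halts no event of Types 1--3 holds, which by the Alon--Sudakov--Zaks argument certifies an acyclic edge coloring using the original $\Delta+1$ colors plus the extra color $\Delta+2$, i.e., $\Delta+2$ colors in total.

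The main (minor) obstacle is simply confirming that the slack in the LLL conditions from \cite{alon:acyclic1} is genuinely bounded away from $1$ uniformly in the cycle length $2k$, so that Theorem \ref{thm:runningtime} applies either with $\eps=0$ (as in Theorem \ref{thm:acyclic1}) or with a fixed $\eps > 0$; the girth lower bound $\Omega(\Delta \log \Delta)$ ensures there are no short cycles to spoil the verification bound and keeps the Type 2/3 probabilities small enough for their product in $\delta$ to stay polylogarithmic per variable, so no other modification to the argument of Theorem \ref{thm:acyclic1} is required.
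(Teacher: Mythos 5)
Your proof is correct and follows exactly the template the paper itself prescribes: the paper declares the proof of Theorem~\ref{thm:acyclic2} ``similar to Theorem~\ref{thm:acyclic1} above, and is left as an exercise,'' and you carry out that exercise faithfully --- Vizing coloring plus independent $\{0,1\}$ switch variables, the three ASZ bad-event families with the stated $x$-values, the $\log(1/\delta)=O(n\log\Delta)$ bound, verification by inspecting incident-edge pairs and all two-color subgraphs, and then Theorem~\ref{thm:runningtime} with $\eps=0$. No genuine differences from the intended argument, so there is nothing to flag.
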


Further non-asymptotic results are known for graphs with sufficiently large girth. Muthu, Narayanan and Subramanian showed $a(G)< 6\Delta$ for graphs with girth at least $9$ and $a(G) < 4.52\Delta$ for graphs with girth at least $220$ \cite{muthua:acyclic}. Their proofs can also be made constructive using essentially the same proof of Theorem \ref{thm:acyclic1}. 
\section{Beyond the LLL Threshold}
\label{sec:beyond}
This section sketches another application of using the properties of the conditional LLL-distribution introduced in Section \ref{sec:llldistrib} in a slightly different way. While all results presented so far rely on a union bound over events in the LLL-distribution, we use here the linearity of expectation for further probabilistic analysis of events in the LLL-distribution. This already leads to new non-constructive results. Similar to the other proofs involving the LLL-distribution in this paper, this upper bound can be made constructive using Theorem \ref{thm:distrib}. Considering that the LLL-distribution approximately preserves other quantities such as higher moments, we expect that there is much more room to use more sophisticated probabilistic tools like concentration bounds to give both new non-constructive and constructive existence proofs of discrete structures with additional strong properties.

The setting we wish to concentrate on here is when a set of bad events is given from which not necessarily all but as many as possible events are to be avoided. The exemplifying application is the well known MAX-$k$-SAT problem which in contrast to $k$-SAT asks not for a satisfying assignment of a $k$-CNF formula but for an assignment that violates as few clauses as possible. Given a $k$-CNF formula with $m$ clauses a random assignment to its variables violates each clause with probability $2^{-k}$ and thus using linearity of expectation it is easy to find an assignment that violates at most $m 2^{-k}$ clauses. If on the other hand each clause shares variables with at most $2^k/e - 1$ other clauses then the LLL can be used to proof the existence of a satisfying assignment (which violates $0$ clauses) and the MT algorithm can be used to find such an assignment efficiently. But what happens when the number of clauses sharing a variables is more than $2^k/e$? Lemma \ref{lem:max-sat-core} shows that a better assignment can be constructed if it is possible to find a sparsely connected sub-formula that satisfies the LLL-condition.

\begin{lemma}\label{lem:max-sat-core}
Suppose $F$ is a $k$-CNF formula in which there exists a set of core clauses $C$ with the property that: (i) every clause in $C$ shares variables with at most $d \leq 2^k/e - 1$ clauses in $C$, and
(ii) every clause in $\overline{C}$ shares variables with at most $\gamma (2^k/e - 1)$ many clauses in $C$, for some $\gamma \geq 0$. Let $n$ and $m$ denote the total number of variables and clauses in $F$, respectively. Then, for any $\theta \geq 1/\poly(n,m)$, there is a randomized $\poly(n,m)$-time algorithm that produces, with high probability, an assignment in which all
clauses in $C$ are satisfied and at most an $(1 + \theta) 2^{-k} e^{\gamma}$ fraction of clauses from
$\overline{C}$ are violated. (If we are content with success-probability $\rho - n^{-c}$ for some constant $c$, then there is also a randomized algorithm that runs in time $\poly(n, |C|)$, satisfies all clauses in $C$, and violates at most n $(1/\rho) \cdot 2^{-k} e^{\gamma}$ fraction of clauses from $\overline{C}$. This can be useful if $|C| \ll m$.)
\end{lemma}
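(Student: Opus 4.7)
The plan is to run the Moser--Tardos algorithm only on the core $C$, and then use Theorem \ref{thm:distrib} to control how many non-core clauses get violated in the resulting output distribution.

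First, observe that on $C$ alone, hypothesis (i) gives a symmetric LLL instance with bad-event probability $p = 2^{-k}$ and dependency degree at most $d \leq 2^k/e - 1$. Setting $x(A) = e/2^k$ for all $A \in C$ satisfies (\ref{eqn:alll}) since $(1 - e/2^k)^d \geq 1/e$ (with an arbitrarily small exponential $\eps$-slack if we replace $d$ by $d+1$ or shave $2^{-k}$ by a factor of $1-\eps$; this slack is needed only to invoke the fast running time in Theorem \ref{thm:runningtime}, and comes for free up to lower-order terms as $k$ grows). Thus the modified MT algorithm (Theorem \ref{thm:core}), restricted to $C$, terminates in expected $\poly(n, |C|)$ time and outputs an assignment in which every clause of $C$ is satisfied.

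Next, for each non-core clause $A \in \overline{C}$, apply Theorem \ref{thm:distrib} with $B = \overline{A}$ (the event that $A$ is violated). By hypothesis (ii), $|\Gamma(B) \cap C| \leq \gamma(2^k/e - 1)$, and so the probability that $A$ is violated in the output is at most
\[
2^{-k} \cdot \bigl(1 - e/2^k\bigr)^{-\gamma(2^k/e - 1)} \;\leq\; (1 + o_k(1))\cdot 2^{-k} e^{\gamma},
\]
where the $o_k(1)$ error tends to $0$ as $k \to \infty$ and can be absorbed into a factor of $(1 + \theta/2)$ (for fixed $k$, we can also tighten $x$ slightly to tame this error, or simply absorb it into the slack). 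By linearity of expectation, the expected number of violated clauses from $\overline{C}$ is at most $(1 + \theta/2)\cdot 2^{-k} e^\gamma \cdot |\overline{C}|$.

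To turn this expectation bound into a high-probability statement, apply Markov's inequality: a single run violates at most $(1 + \theta)\cdot 2^{-k} e^\gamma \cdot |\overline{C}|$ non-core clauses with probability at least $\theta/(2(1+\theta)) = \Omega(\theta)$. Since we can count violated clauses efficiently when $m = \poly(n)$, we simply repeat the whole procedure $O(\theta^{-1} \log(nm))$ times and output the best assignment seen; this drives the failure probability below any desired inverse polynomial while keeping total runtime $\poly(n,m)$ (valid whenever $\theta \geq 1/\poly(n,m)$). For the parenthetical variant with runtime $\poly(n, |C|)$, we cannot afford to examine $\overline{C}$, so verification is unavailable; instead, we apply Markov directly with threshold $1/\rho$, giving success probability at least $\rho$, minus the $n^{-c}$ failure probability of MT on the core itself (Theorem \ref{thm:core}).

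The main obstacle, and the only nontrivial step, is the passage from the bound $(1 - e/2^k)^{-\gamma(2^k/e - 1)}$ to $e^\gamma$: for small $k$ this correction is not negligible, and we must absorb it into the $(1+\theta)$ factor by choosing the LLL parameters (and the exponential $\eps$-slack invoked in Theorem \ref{thm:runningtime}) carefully so that the slack used for fast termination does not simultaneously blow up the bound in Theorem \ref{thm:distrib} by more than a $(1 + \theta/2)$ factor. Everything else is a direct application of Theorems \ref{thm:runningtime}, \ref{thm:distrib}, and \ref{thm:core}.
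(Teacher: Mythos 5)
Your approach is essentially the paper's: run modified MT only on the core $C$ with $x(A) = e/2^k$, invoke Theorem~\ref{thm:distrib} to bound the violation probability of each non-core clause via hypothesis (ii), apply linearity of expectation, and finish with Markov plus repetition. That is exactly what the paper does.

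The one place you go astray is in treating the step
\[
2^{-k}\cdot\bigl(1 - e/2^k\bigr)^{-\gamma(2^k/e - 1)} \;\leq\; e^{\gamma}\cdot 2^{-k}
\]
as holding only up to a $(1+o_k(1))$ factor that must be ``absorbed'' into $(1+\theta)$ by a careful choice of slack. In fact this inequality is exact for every $k\geq 2$ and every $\gamma \geq 0$, with no error term at all. Writing $y = e/2^k \in (0,1)$, the claim reduces to $(1-y)^{(1-y)/y}\geq e^{-1}$, i.e.\ $(1-y)\ln(1-y) \geq -y$. The function $h(y) = (1-y)\ln(1-y) + y$ satisfies $h(0)=0$ and $h'(y) = -\ln(1-y) > 0$ on $(0,1)$, so $h(y) > 0$ there, which gives the bound. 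Consequently the $(1+\theta)$ factor in the lemma statement is needed only for the Markov step (turning the expectation $2^{-k}e^{\gamma}|\overline{C}|$ into a single-run guarantee with decent probability), not to compensate for any looseness in the LLL-distribution bound. Similarly, you do not need any $\eps$-slack in Theorem~\ref{thm:runningtime} for termination: with $\eps=0$ and $x(A) = e/2^k$ the expected number of resamplings is already $T\cdot\max 1/(1-x(A)) \leq |C|$, so the ``main obstacle'' you describe is not there. The rest of your proof --- the repetition to drive failure below an inverse polynomial, and the parenthetical $\poly(n,|C|)$ variant via a direct Markov bound with threshold $1/\rho$ --- matches the intent of the paper's terse concluding sentence.
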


\begin{proof}
Briefly, the idea above is as follows. Suppose we do the obvious random
assignment to the variables: each is set to ``True'' or ``False'' uniformly
at random and independently. For any clause $C_i$, let $A_i$ be the bad
event that it is violated in such an assignment. It is well-known that we can
take $x(A_i) = e/2^k$ for all $C_i \in C$, and avoid all of these events
with positive probability: this can be made constructive using MT. Suppose
we run the MT algorithm for up to $n^c$ times its expected number of
resamplings. By Markov's inequality, the probability of MT not
terminating by then is at most $n^{-c}$. Furthermore, the probability that
at the end of this process, some clause $C_i \in \overline{C}$ is violated
can be bounded (using part (ii) of Theorem~\ref{thm:distrib}) by the following:
\[ 2^{-k} \cdot (1 - e/2^k)^{-\gamma (2^k/e - 1)} \leq e^{\gamma} \cdot 2^{-k}. \]
Thus, the expected fraction of clauses from $\overline{C}$ that are
violated in the end, is at most $2^{-k} e^{\gamma}$. Markov's inequality
and a union bound (for a sufficiently large choice of $c$)
complete the proof.
\end{proof}

Along these lines we aim to develop a general result that can be applied in cases where the number of dependencies are (slightly) beyond the LLL-threshold. For this, suppose we have a system of independent random variables $\P=\{P_1, P_2, \ldots,P_n\}$ and bad events $\A=\{A_1, A_2, \ldots, A_m\}$ with dependency graph $G = G_{\A}$ as in the introduction.  Let us consider the symmetric case in which $\Pr[A_i] \leq p$ for each $i$. Again there are only two types of constructive results known in general, in terms of only allowing a ``small'' number of the bad events to happen. It is easy to have only about $mp$ of the $A_i$ happen -- without any assumptions about $G$ -- just by using the linearity of expectation. On the other hand, if the maximum degree of $G$ is at most $1/(ep) - 1$, the conditions of
the symmetric LLL and the algorithm of \cite{moser-tardos:lll}, guarantee that we can efficiently ensure that \emph{none} of the $A_i$ happen. Interpolating between these two extremes Theorem \ref{thm:deg-more-than-thresh} characterizes the fraction $\lambda$ of events that can be avoided if the maximum degree of $G$ is by a factor of $\alpha > 1$ larger than the LLL-threshold $1/(ep) - 1$. To the best of our knowledge virtually nothing was known (even non-constructively) in this setting. Theorem \ref{thm:deg-more-than-thresh} is obtained by using the probabilistic method to construct a sparsely connected core that satisfies the LLL-conditions with a sufficiently large gap. Using the linearity of expectation in the analysis of the LLL-distribution with respect to this core the existence of a good assignment can be proven:

\begin{definition}\label{def:lambda}
For any $\alpha \geq 0$, let $\lambda(\alpha)$ be the smallest number satisfying the following:\\
For any setting of the standard form ``variables $\P$ and bad events $\A$'' in which\\
(a) the probability of any event $A \in \A$ is at most $p = o(1)$ and\\
(b) the maximum degree in the variable-sharing (dependency) graph is at most $d = \alpha (1/(ep) - 1)$,\\
there exists an assignment to variables in $\P$ that violates at most $(1 + o(1)) \lambda(\alpha) \cdot mp$ events.
\end{definition}

\begin{theorem}
\label{thm:deg-more-than-thresh}
The fraction $\lambda(\alpha)$ is upper bounded as follows:
\begin{itemize}
\item $\forall \alpha \leq 1: \lambda(\alpha) = 0$;
\item $\forall 1 < \alpha < e$:
$\lambda(\alpha) \leq e \ln(\alpha)/\alpha < 1$;
\item $\forall \alpha > e: \lambda(\alpha) = 1$.
\end{itemize}
\end{theorem}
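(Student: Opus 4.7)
The three cases are handled separately. The edge cases are immediate: when $\alpha \leq 1$, the bound $d \leq 1/(ep)-1$ is the symmetric LLL threshold, so all bad events can be avoided and $\lambda(\alpha)=0$; when $\alpha \geq e$, a uniformly random assignment has at most $mp$ violations in expectation, so $\lambda(\alpha) \leq 1$. The nontrivial content is the middle regime $1<\alpha<e$, where I plan to combine a random core $C \subseteq \A$ with the conditional LLL-distribution bound (Theorem~\ref{thm:llldistrib-nonconstr}), using a nonstandard LLL parameter $x$ that is much smaller than the usual $ep$.

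Let $D = 1/(ep)-1$, and build $C$ by including each event independently with probability $q = 1-\ln\alpha$. Set $x(A)=1/(\alpha D)$ for $A\in C$ and $x(A)=p\cdot (1-1/(\alpha D))^{-|\Gamma(A)\cap C|}$ for $A\notin C$. A Chernoff bound and union bound (which succeeds because $D\to\infty$ as $p=o(1)$) shows that with positive probability every $A\in \A$ satisfies $|\Gamma(A)\cap C|\leq (1+o(1))\,q\alpha D = (1+o(1))\,\alpha(1-\ln\alpha)D$; condition on this event. With these degree bounds, the asymmetric LLL condition (\ref{eqn:alll}) is satisfied on the core in the limit (since $x(1-x)^{\alpha(1-\ln\alpha)D}\to (1/(\alpha D))\cdot(\alpha/e) = 1/(eD) \geq p$) and is satisfied with equality at each non-core event by construction. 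Theorem~\ref{thm:llldistrib-nonconstr} then bounds the probability that a non-core event $A$ holds in the conditional LLL-distribution on $C$ by $x(A) = (1+o(1))\,pe/\alpha$. Summing over the non-core events gives expected violations at most $(1-q)m\cdot (1+o(1))\,pe/\alpha = (1+o(1))\,(e\ln\alpha/\alpha)\,mp$, and the probabilistic method yields an assignment meeting this bound.

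The main obstacle, and the real content of the proof, is locating the right $(q,x)$. The textbook choice $x=ep$ only gives the inferior bound $(1-1/\alpha)\,e$ (obtained by minimizing $(1-q)e$ over $q\leq 1/\alpha$), which is strictly worse than $e\ln\alpha/\alpha$ in the interior of the interval. Writing $x=c/D$ and $\beta=q\alpha$, the core LLL condition in the $D\to\infty$ limit reduces to $1+\ln c \geq c\beta$, while the non-core violation probability becomes $pe^{c\beta}$; the objective becomes $\min_{c,\beta}(1-\beta/\alpha)\,e^{c\beta}$ subject to this constraint. Routine calculus shows the minimum is attained on the boundary of the feasible region at $c=1/\alpha$, $\beta=\alpha(1-\ln\alpha)$, which plugs in to give exactly $e\ln\alpha/\alpha$. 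Once this unusual optimizer is identified, the rest of the argument is straightforward Chernoff concentration plus Theorem~\ref{thm:llldistrib-nonconstr} bookkeeping, with the constructive version (alluded to in the section introduction) following from using Theorem~\ref{thm:distrib} in place of Theorem~\ref{thm:llldistrib-nonconstr}.
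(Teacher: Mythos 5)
Your proposal takes essentially the paper's approach (random core of constant relative density, small LLL parameter $x=\Theta(1/d)$ so the per-event blowup in the conditional LLL-distribution is only a constant, then linearity of expectation over non-core events), and your optimization over $(c,\beta)$ correctly reproduces the paper's optimizer $\gamma=1/\alpha$, $1/\beta_{\text{paper}} = (1+\ln\gamma)/\gamma$ and the bound $e\ln\alpha/\alpha$; the two parametrizations are reciprocal ($\beta_{\text{you}} = 1/\beta_{\text{paper}}$, $c_{\text{you}}=\gamma$).

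However, there is one genuine gap in your degree-control step. You write ``A Chernoff bound and union bound (which succeeds because $D\to\infty$ as $p=o(1)$) shows that with positive probability every $A\in\A$ satisfies $|\Gamma(A)\cap C|\leq(1+o(1))q\alpha D$.'' That union bound requires $m\cdot\exp\bigl(-\Omega(\eps^2 q\alpha D)\bigr)<1$ for some $\eps=o(1)$, i.e.\ roughly $\log m = o(D)$. Nothing in Definition~\ref{def:lambda} bounds $m$ in terms of $D\sim 1/(ep)$; the number of events can be vastly larger than $\exp(\text{poly}(1/p))$. The paper avoids this entirely via an alteration argument: after sampling the core, it \emph{eliminates} the events whose core-degree exceeds the threshold. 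The expected eliminated fraction is $\exp(-\Omega(d\eps^2/\beta))$, and choosing $\eps=\Theta\!\left(\sqrt{\log(1/p)/d}\right)=o(1)$ makes this $o(p)$ regardless of $m$; the eliminated events contribute at most $o(p)\cdot m = o(mp)$ violations and are absorbed into the $(1+o(1))$ factor. You should replace your global union bound with this Markov/alteration step. Separately, you address only the upper bound direction for $\alpha>e$; the theorem asserts $\lambda(\alpha)=1$, and the matching lower bound needs an instance (the paper takes $m=1/p$ events on a single uniform variable in $\{1,\dots,1/p\}$, which forces one violation).
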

\begin{proof}
If $\alpha \leq 1$ then the standard symmetric LLL and the MT algorithm ensure that no bad event holds. On the other hand if $\alpha > e$, then consider $m = 1/p$ events given by a single random variable $X$ which is uniformly distributed in $\{1, 2, \ldots, 1/p\}$; the $i$th event holds iff $X = i$. We have
$d = 1/p - 1$ here, and exactly one bad event holds with probability one. Thus, we cannot do better than the obvious bound of $mp$ if $\alpha > e$.

For our main case where the constant $\alpha$ satisfies $1 < \alpha < e$, we employ the probabilistic method to first determine a core subset of the bad events, and then apply Theorem~\ref{thm:distrib}. We give a proof sketch here. Since $p = o(1)$, $d = \omega(1)$. We will pick a suitable $\epsilon = o(1)$ and an appropriate constant $\beta \geq 1$; $d$ can be assumed sufficiently larger than $\beta$ since $d = \omega(1)$. Choose a random subset $\A'$ of the events $A_i$ by choosing each event independently with probability $(1 - \eps) / \alpha \beta$ and then eliminate all events from $\A$ that have more than $d/(\alpha\beta)$ neighbors in $\A'$. The Chernoff bound shows that with probability $1 - \exp(-\Omega(d\eps^2/\beta))$ (which is $1 - o(1)$ for a suitable choice of $\epsilon$ and $\beta$), the core $\A'$ has at least an $\frac{(1-\eps)^2}{\alpha \beta}$ fraction of the events, and at most an $\exp(-d\eps^2/\beta) = o(p)$ fraction of the events
get eliminated from $\A$. Therefore, there is a core $\A'$ of size at least $(1 - o(1)) m/(\alpha \beta)$ to which all events that are not eliminated (a fraction of $(1-o(p))$ events) have at most $d/(\alpha\beta)$ neighbors. If we take $x_A \sim \gamma \alpha/d$ for all $A \in \A'$ for a suitable $\gamma \in [0,1]$ then the core $\A'$ satisfies the LLL-conditions. This is the case if $\gamma$ satisfies
$$\alpha /(ed) < (\gamma \alpha /d) \cdot (1 - \gamma\alpha/d)^{d/(\beta\alpha)};$$
i.e.,
\begin{equation}
\label{eqn:cons-req}
1/e < \gamma e^{-\gamma/\beta}
\end{equation}
suffices for $d$ large enough.

Now we can apply Theorem~\ref{thm:llldistrib-nonconstr} to obtain bounds on the probability of events in the conditional LLL-distribution $D$ that avoids all events in $\A'$. It implies that a random assignment that avoids all core-events in $\A'$ makes an event $A \in (\A \setminus \A')$ that is not eliminated true with probability at most

\begin{equation}
\label{eqn:noncore-x}
\probs{_D}{A} = \prob{A} / (1 - \gamma\alpha/d)^{-d/(\alpha\beta)} \sim e^{\gamma/\beta}
\cdot p.
\end{equation}

By the linearity of expectation applied to all ``non-core'' events $A \in (\A \setminus \A')$ and using (\ref{eqn:noncore-x}), the expected total number of events
$A_i$ that happen in such an assignment is at most
\begin{equation}
\label{eqn:cons-numbad}
(1 + o(1)) \cdot mp \cdot (1 - 1/(\alpha \beta)) \cdot e^{\gamma/\beta},
\end{equation}
assuming that (\ref{eqn:cons-req}) holds and that $\epsilon = o(1)$ can be chosen suitably. From (\ref{eqn:cons-req}), we can take $1/\beta = (1 - o(1)) \cdot (1 + \ln(\gamma))/\gamma$. Plugging this into (\ref{eqn:cons-numbad}), we see that the optimal choice of $\gamma$ is $1/\alpha$. (Any choice of $\epsilon = o(1)$ that satisfies $\exp(-d\eps^2/\beta) = o(p)$, will suffice for this argument.) Substituting these choices into (\ref{eqn:cons-numbad}) yields the theorem.
\end{proof}

\begin{theorem}
\label{thm:deg-more-than-thresh-constructive}
Theorem \ref{thm:deg-more-than-thresh} can be made constructive
for any $\alpha > 0$ and any efficiently verifiable $\A$ (the verification in this case is allowed to take $\poly(n,m)$ time) that satisfies the conditions from Definition \ref{def:lambda}. That is, there is a $\poly(n,m)$-time randomized algorithm to set values for the variables in $\P$, such that the expected number of events $A_i$ that hold is at most $(1 + o(1)) \lambda(\alpha) \cdot mp$.
\end{theorem}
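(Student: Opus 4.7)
The plan is to transport the non-constructive proof of Theorem~\ref{thm:deg-more-than-thresh} into an algorithm, replacing each of its probabilistic steps with an efficient constructive one and invoking Theorem~\ref{thm:distrib} in place of Theorem~\ref{thm:llldistrib-nonconstr}. The key observation is that these two theorems give the \emph{same} upper bound on the probability of any event $B$ determined by $\P$ -- the first for the conditional LLL-distribution, the second for the output distribution of MT -- so the linearity-of-expectation calculation in Theorem~\ref{thm:deg-more-than-thresh} carries over verbatim once the same core $\A'$ has been built. The endpoint regimes are immediate: for $\alpha \leq 1$, a single run of MT on $\A$ itself avoids every event by efficient verifiability and Theorem~\ref{thm:runningtime}; for $\alpha \geq e$, a uniformly random assignment to $\P$ already yields the trivial bound $mp$ by linearity of expectation. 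I therefore focus on $1 < \alpha < e$.

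First I would build the core $\A'$ explicitly. Include each event of $\A$ in $\A'$ independently with probability $(1-\eps)/(\alpha\beta)$, for the same choice of $\eps = o(1)$ and constant $\beta$ as in the non-constructive proof, and then mark every event of $\A$ that has more than $d/(\alpha\beta)$ neighbors in $\A'$. Since the running time is allowed to be $\poly(n,m)$, the dependency graph $G_{\A}$ can be materialized and the sampling and marking carried out explicitly. The same Chernoff analysis used in Theorem~\ref{thm:deg-more-than-thresh} shows that with probability $1 - o(1)$ the core has size at least $(1-o(1))\,m/(\alpha\beta)$ and at most an $o(p)$-fraction of $\A$ is marked; both conditions are trivially checkable, so if they fail we simply resample Las-Vegas style at a constant-factor overhead in expectation. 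Choosing $x_A \sim \gamma\alpha/d$ on $\A'$, with $\gamma = 1/\alpha$ and $\beta$ picked to satisfy~\eqref{eqn:cons-req}, makes $\A'$ satisfy the LLL-condition with an exponential constant slack.

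Next I would run MT on $\A'$. Because $\A'$ is a subset of the efficiently verifiable set $\A$, it is itself efficiently verifiable (verify an event in $\A$ and then test membership in $\A'$), and Theorem~\ref{thm:runningtime} applied with positive $\eps$ guarantees termination in $\poly(n,m)$ expected resamplings, with output avoiding every event of $\A'$. For any non-marked $A \in \A \setminus \A'$, Theorem~\ref{thm:distrib} applied with the algorithm run on $\A'$ shows that the probability that $A$ holds in the MT output is at most $\prob{A} \cdot \prod_{C \in \Gamma(A)\cap\A'}(1-x_C)^{-1} \leq (1+o(1)) \cdot e^{\gamma/\beta} \cdot p$, which is exactly the upper bound the non-constructive proof placed on $\probs{_D}{A}$ via~\eqref{eqn:noncore-x}.

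Finally, I would apply linearity of expectation to all $A \in \A$, split into three groups: (i) events in $\A'$, which are avoided with certainty; (ii) non-marked events in $\A \setminus \A'$, contributing at most $|\A \setminus \A'| \cdot (1+o(1)) \cdot e^{\gamma/\beta} \cdot p$ to the expected number of violations; and (iii) marked events, of which there are $o(p) \cdot m$, bounded trivially by $1$ each for a total contribution of $o(mp)$. Optimizing $\gamma = 1/\alpha$ with the matching $\beta$ exactly as in Theorem~\ref{thm:deg-more-than-thresh} then reproduces the bound $(1 + o(1)) \lambda(\alpha) \cdot mp$. The main delicate point is conceptual rather than technical: Theorem~\ref{thm:distrib} controls the probability of a non-core event $A$ only through the $x_C$ with $C \in \Gamma(A) \cap \A'$, and this is precisely why capping each non-marked event's $\A'$-degree at $d/(\alpha\beta)$ during the construction of $\A'$ is both necessary and sufficient for the algorithmic argument to succeed.
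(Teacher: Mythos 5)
Your proposal is correct and takes essentially the same route as the paper: handle the endpoint regimes $\alpha \leq 1$ (a single MT run) and $\alpha \geq e$ (uniform random assignment) directly, and for $1 < \alpha < e$ build the core explicitly by the same alteration procedure, run MT on it, and replace the appeal to Theorem~\ref{thm:llldistrib-nonconstr} in the linearity-of-expectation step with Theorem~\ref{thm:distrib}. The only difference is one of exposition: the paper dispatches the constructivization of the core construction with a one-line remark that the success probability ``can be made arbitrarily high,'' whereas you spell out the explicit check of the core size and eliminated-fraction conditions and the Las~Vegas restart, and you record the use of Theorem~\ref{thm:runningtime} on $\A'$ for the running-time bound; both of these are implicit in the paper's terse version and your added detail is a faithful unpacking rather than a different argument.
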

\begin{proof}
If $\alpha \leq 1$ then the theorem follows directly from \cite{moser-tardos:lll} and for $\alpha \geq e$ a random assignment suffices. For $1 < \alpha < e$ we make the proof from Theorem \ref{thm:deg-more-than-thresh} constructive. For it suffices to see that the success probability of the random experiment that creates the core can be made arbitrarily high by choosing $\epsilon$ and $\beta$ accordingly. This makes the probabilistic method used there directly constructive. Finally we use again our main Theorem \ref{thm:distrib} from Section \ref{sec:llldistrib} which states that the MT algorithm can be used to efficiently sample the LLL-distribution used in the proof of Theorem \ref{thm:deg-more-than-thresh}. We simply output the assignment which is produced by the MT algorithm in time $\poly(n,m)$ and the proof of Theorem \ref{thm:deg-more-than-thresh} guarantees that the expected number of violated bad events in this assignment is at most $(1 + o(1)) \lambda(\alpha) \cdot mp$ as desired.
\end{proof}

\smallskip \noindent \textbf{Remark.} Interestingly, we can use our LLL
framework instead to construct the core in the proof of
Theorem~\ref{thm:deg-more-than-thresh}. This gives a larger core than what is obtained by uniform
random core selection and thus slightly sharper
results. Briefly, the idea is as follows.
For parameters $\alpha$, $\beta$, and $\gamma$ that are similar to
those in that proof, we start with essentially the same random process for
constructing the core: for a sufficiently large constant $c_1$ (even something
slightly smaller than $3$ will suffice), we set
$\epsilon = c_1 \sqrt{(\ln d)/d}$, and place each event $A_i$ in the core
independently with probability $\theta = (1 - \epsilon)/(\alpha \beta)$.
Letting $X_i$ denote the (random) number of neighbors that $A_i$ has in
the core, note that the expected value of $X_i$ is at most
$d \theta = (1 - \epsilon)d/(\alpha \beta)$.
Now consider the system of bad events $C_i$ (one corresponding to each
original event $A_i$): $C_i$ is the event that $X_i >
d/(\alpha \beta)$. Note that each $C_i$ depends
on at most $d^2$ others; a Chernoff bound shows that for $c_1$
large enough, $\Pr[C_i] \leq 1/(e d^3)$. Thus, the LLL shows that
there exists a choice of the core that avoids all the $C_i$, with
a common value $x = x(C_i)$ for all $i$ such that
$x = \Theta(1/d^3)$; run
the Moser-Tardos algorithm on the system of bad events $C_i$ to
efficiently get a core that avoids all of the $C_i$, and let
$N_i$ be the indicator random
variable for $i$ not belonging to the core at the end of this run.
We now apply Theorem~\ref{thm:llldistrib-nonconstr} to upper-bound
the expected size $\sum_i \Pr[N_i = 1]$ of the non-core. Since
$N_i$ depends on at most $1 + d + d(d-1) = d^2 + 1$ events $C_i$,
Theorem~\ref{thm:llldistrib-nonconstr} gives
\[ \Pr[N_i = 1] \leq \frac{1 - \theta}{(1 - x)^{d^2 + 1}} =
(1 + O(1/d)) \cdot (1 - \theta). \]
Thus the expected size of the non-core is at most
$(1 + O(1/d)) \cdot mp \cdot (1 - \theta)$ after the above run of
Moser-Tardos, similar to what the alteration argument in the proof
of Theorem~\ref{thm:deg-more-than-thresh} gives. We can now
proceed (with one more run of Moser-Tardos) as in the proof of
Theorem~\ref{thm:deg-more-than-thresh}.

\bigskip
\bigskip

\noindent
\textbf{Acknowledgments:}
We thank Nikhil Bansal for helpful clarifications about the Santa Claus problem. The first author thanks Michel Goemans and Ankur Moitra for discussing the Santa Claus problem with him in an early stage of this work. Our thanks are
due to Bill Gasarch and Mohammad Salavatipour for their helpful comments.
We also acknowledge the referees for their valuable suggestions.

\bibliographystyle{abbrv}
\bibliography{af2009}

\newpage

\end{document}